\newcommand{\N}{\mathbb{N}}
\newcommand{\Z}{\mathbb{Z}}
\newcommand{\G}{\mathbb{G}}
\newcommand{\Hash}{\mathsf{Hash}}
\newcommand{\Commit}{\mathsf{Commit}}
\newcommand{\todo}[1]{}
\newcommand{\calA}{\ensuremath{\mathcal{A}}}
\newcommand{\calB}{\ensuremath{\mathcal{B}}}
\newcommand{\calE}{\ensuremath{\mathcal{E}}}
\newcommand{\calF}{\ensuremath{\mathcal{F}}}
\newcommand{\calM}{\ensuremath{\mathcal{M}}}
\newcommand{\zo}{\ensuremath{\{0,1\}}}
\theoremstyle{plain}
\newtheorem{theorem}{Theorem}
\newtheorem{lemma}[theorem]{Lemma}
\newtheorem{claim}[theorem]{Claim}
\theoremstyle{definition}
\theoremstyle{remark}
\newcommand{\esm}[1]{\ensuremath{#1}}
\newcommand{\ms}[1]{\esm{\mathsf{#1}}}
\newcommand{\prf}{\ms{PRF}}
\newcommand{\getsr}{\rgets}
\newcommand{\rgets}{\mathrel{\mathpalette\rgetscmd\relax}}
\newcommand{\rgetscmd}{\ooalign{$\leftarrow$\cr
    \hidewidth\raisebox{1.2\height}{\scalebox{0.5}{\ \rm R}}\hidewidth\cr}}
\newcommand{\ct}{\ms{ct}}
\newcommand{\View}{\ms{View}}
\newcommand{\Real}{\ms{Real}}
\newcommand{\Ideal}{\ms{Ideal}}
\newcommand{\Sim}{\ms{Sim}}
\newcommand{\presigctr}{{\mathsf{ctr}_{\mathsf{presig}}}}
\newcommand{\authctr}{{\mathsf{ctr}_{\mathsf{auth}}}}
\newcommand{\tweak}{\omega}
\newcommand{\sk}{\ms{sk}}
\newcommand{\pk}{\ms{pk}}
\newcommand{\id}{\ms{id}}
\newcommand{\Sign}{\mathsf{Sign}}
\newcommand{\Verify}{\mathsf{Verify}}
\newcommand{\Enc}{\mathsf{Enc}}
\newcommand{\Dec}{\mathsf{Dec}}
\newcommand{\chal}{\mathsf{chal}}
\newcommand{\PreSign}{\mathsf{PreSign}}
\newcommand{\ClientRegister}{\mathsf{ClientRegister}}
\newcommand{\LogRegister}{\mathsf{LogRegister}}
\newcommand{\FinishRegister}{\mathsf{FinishRegister}}
\newcommand{\ClientAuth}{\mathsf{ClientAuth}}
\newcommand{\LogAuth}{\mathsf{LogAuth}}
\newcommand{\FinishAuth}{\mathsf{FinishAuth}}
\newcommand{\LarchPW}{\mathsf{Larch}_\mathsf{PW}}
\newcommand{\ECDSAModAdv}{\mathsf{ECDSAAdv}}
\newcommand{\KeyGen}{\mathsf{Gen}}
\newcommand{\LogKeyGen}{\mathsf{LogKeyGen}}
\newcommand{\ClientKeyGen}{\mathsf{ClientKeyGen}}
\newcommand{\pw}{\mathsf{pw}}
\newcommand{\kid}{\mathsf{k}_\mathsf{id}}
\newcommand{\klog}{\mathsf{klog}}
\newcommand{\kclient}{\mathsf{kclient}}
\newcommand{\HMAC}{\mathsf{HMAC}}
\newcommand{\presig}{\mathsf{presig}}
\newcommand{\Prove}{\mathsf{Prove}}
\newcommand{\cm}{\mathsf{cm}}
\newcommand{\idx}{\mathsf{idx}}
\newcommand{\ECDSAAdv}{\mathsf{GSECDSAAdv}}
\newcommand{\PiMul}{\Pi_\mathsf{HalfMul}}
\newcommand{\PiPreSign}{\Pi_\mathsf{PreSign}}
\newcommand{\PiGen}{\Pi_\mathsf{Gen}}
\newcommand{\PiHalfMul}{\Pi_\mathsf{HalfMul}}
\newcommand{\digest}{\mathsf{dgst}}
\newcommand{\PiOpen}{\Pi_\mathsf{Open}}
\newcommand{\FOpen}{\calF_\mathsf{Open}}
\newcommand{\FECDSA}{\calF_\mathsf{ECDSA}}
\newcommand{\SimOpen}{\Sim_\mathsf{Open}}
\newcommand{\Adv}{\calA}
\newcommand{\sksiglog}{\mathsf{sk}_0}
\newcommand{\pksiglog}{\mathsf{pk}_0}
\newcommand{\sksigclient}{\mathsf{sk}_1}
\newcommand{\initdone}{\mathsf{initdone}}
\newcommand{\presigdone}{\mathsf{presigdone}}
\newcommand{\ClientGen}{\mathsf{ClientGen}}
\newcommand{\LogGen}{\mathsf{LogGen}}
\newcommand{\DLProof}{\mathsf{DLProof}}
\newcounter{ExperimentCount}
\newcommand{\Experiment}[1]{\refstepcounter{ExperimentCount}\textbf{Experiment~\arabic{ExperimentCount}: #1.}}
\newcounter{PropertyCount}
\newif\iffull
\g@addto@macro{\UrlBreaks}{\UrlOrds}
\newlength{\defhangindent}
\DeclareMathAlphabet\mathcal{OMS}{cmsy}{m}{n}
\SetMathAlphabet\mathcal{bold}{OMS}{cmsy}{b}{n}
\SetMathAlphabet{\mathsf}{normal}{OT1}{cmss}{m}{n}
\SetMathAlphabet{\mathsf}{bold}{OT1}{cmss}{bx}{n}
\DeclareSymbolFont{AMSb}{U}{msb}{m}{n}
\DeclareSymbolFontAlphabet{\mathbb}{AMSb}
\DeclareSymbolFont{numbers}{T1}{ptm}{m}{n}
\DeclareMathSymbol{0}\mathalpha{numbers}{"30}
\DeclareMathSymbol{1}\mathalpha{numbers}{"31}
\DeclareMathSymbol{2}\mathalpha{numbers}{"32}
\DeclareMathSymbol{3}\mathalpha{numbers}{"33}
\DeclareMathSymbol{4}\mathalpha{numbers}{"34}
\DeclareMathSymbol{5}\mathalpha{numbers}{"35}
\DeclareMathSymbol{6}\mathalpha{numbers}{"36}
\DeclareMathSymbol{7}\mathalpha{numbers}{"37}
\DeclareMathSymbol{8}\mathalpha{numbers}{"38}
\DeclareMathSymbol{9}\mathalpha{numbers}{"39}
\renewcommand{\operator@font}{\mathgroup\symnumbers}
\definecolor{LightCyan}{rgb}{0.88,1,1}
\long\def\com#1{}
\long\def\xxx#1{}
\long\def\emma#1{}
\long\def\hcg#1{}
\newcommand{\itpara}[1]{\smallskip\noindent\textit{#1}}
\renewcommand{\paragraph}[1]{\smallskip\noindent\textbf{#1}}
\let\c@table\c@figure
\setlist[description]{leftmargin=\parindent,topsep=0ex,itemsep=0ex,partopsep=1ex,parsep=1ex}
\LetLtxMacro{\oldtextsc}{\textsc}
\renewcommand{\textsc}[1]{\oldtextsc{\scalefont{1.2}#1}}
\newcolumntype{R}[2]{>{\adjustbox{angle=#1,lap=\width-(#2)}\bgroup}c<{\egroup}}
\begin{document}

\def\sys{larch\xspace}
\def\Sys{Larch\xspace}

\title{Accountable authentication with privacy protection: \\The \Sys system for universal login}
\author{Emma Dauterman\\
\emph{UC Berkeley}
\and
Danny Lin\\
\emph{Woodinville High School}
\and
Henry Corrigan-Gibbs\\
\emph{MIT}
\and
David Mazi\`eres\\
\emph{Stanford}
}
\date{}
\maketitle

\frenchspacing
\paragraph{Abstract.}
Credential compromise is hard to detect and hard to
mitigate. To address this problem, we present \sys, an
accountable authentication framework with strong security and
privacy properties. \Sys protects user privacy while
ensuring that the \sys log server correctly
records every authentication. 
Specifically, an attacker who compromises
a user's device cannot authenticate without creating evidence
in the log, and the log cannot learn which web service
(relying party) the user is authenticating to.
To enable fast adoption, \sys is backwards-compatible with
relying parties that support FIDO2, TOTP, and password-based
login. Furthermore, \sys does not degrade the security and
privacy a user already expects: the log server cannot
authenticate on behalf of a user, and \sys does not allow
relying parties to link a user across accounts.
We implement \sys for FIDO2, TOTP, and password-based login.
Given a client with four cores and a log server with eight cores,
an authentication with \sys takes 150ms for FIDO2, 91ms for
TOTP, and 74ms for passwords (excluding preprocessing, which
takes 1.23s for TOTP).
 
\section{Introduction}

Account security is a perennial weak link in computer systems.  Even
well-engineered systems with few bugs become vulnerable once
human users are involved.  With poorly engineered or configured
systems, account compromise is often the first of several cascading
failures.
In general, 82\% of data breaches
involve a human element, with the most common methods including
use of stolen credentials (40\%) and
phishing (20\%)~\cite{dbir}.

When users and administrators identify stolen credentials, it is
challenging to determine the extent of the damage.  Not knowing what an
attacker accessed can lead to either inadequate or overly extensive
recovery.  LastPass suffered a breach in November 2022 because they
didn't fully recover from a compromise the previous
August~\cite{roth:lastpass}.  Conversely, Okta feared 366
organizations might have been accessed when an attacker gained remote
desktop access at one of their vendors.  It took a three-month
investigation to determine that, in fact, only two organizations, not
366, had really been victims of the breach~\cite{faife:okta}.

Single sign-on schemes, such as OpenID~\cite{sakimura:openid} and ``Sign in with Google,''
can keep an authentication log and thereby 
determine the extent of a credential compromise.
However, these centralized systems represent a security
and privacy risk:
they give a third party access to all of a user's accounts 
and to a trace of their authentication activity.

An ideal solution would give the benefits of universal 
authentication logging without the security and privacy
drawbacks of single-sign-on systems.
For security, the logging service shouldn't be able to authenticate
on behalf of a user. For privacy, the logging service should learn
no information about a user's authentication history: the log service
should not even learn if the user is authenticating to the same
web service twice or to two separate web services.

In this paper, we propose \sys (``login archive''), an accountable
authentication framework with strong security and privacy properties.
Authentication takes place between a user and a service, which we call
the \emph{relying party}.  In \sys, we add a third party: a
user-chosen \sys log service.  The \sys log service provides the user
with a complete, comprehensive history of her authentication
activity, which helps users detect and recover from compromises.  Once
an account is registered with \sys, even an attacker who controls the
user's client cannot authenticate to the account without the \sys log
service storing a record that allows the user to recover the time and
relying-party name.

The key challenge in \sys is allowing the log service to maintain a
complete authentication history \emph{without becoming a single point
of security or privacy failure}.
A malicious \sys log service cannot access users' accounts
and learns no information about users' authentication histories.
Only users can decrypt their own log records.

\Sys works with any relying party that supports one of three standard
user authentication schemes:  FIDO2~\cite{fido2} (popularized by
Yubikeys and Passkeys~\cite{passkeys}), TOTP~\cite{totp-rfc}
(popularized by Google Authenticator), and password-based login.
FIDO2 is the most secure but least widely deployed of the three
options.

A \sys deployment consists of two components: a browser add-on, which
manages the user's authentication secrets, and one or more \sys log
services, which store authentication logs on behalf of a set of users.
At a high level, \sys provides four operations.  (1)~Upon deciding to
use \sys, a user performs a one-time \emph{enrollment} with a log
service.  (2)~For each account to use with \sys, the user
runs \emph{registration}.  To relying parties, registration looks like
adding a FIDO2 security key, adding an authenticator app, or setting a
password.  (3)~The user then performs
\emph{authentication} with \sys as necessary to access
registered accounts.  Finally, (4) at any point the user
can \emph{audit} login activity by downloading and decrypting the
complete history of authentication events to all accounts.
The client can use auditing
for intrusion detection or to evaluate the extent of the
damage after a client has been compromised.

All authentication mechanisms require generating an authentication
credential based on some secret.  In FIDO2, the secret is a signature
key and the credential is a digital signature; the signed payload
depends on the name of the relying party and a fresh challenge,
preventing both phishing and credential reuse.  With TOTP, the secret
is an HMAC key and the credential an HMAC of the current time, which
prevents credential reuse in the future.
With passwords, the credential
is simply the password, which has the disadvantage that it can be
reused once a malicious client obtains it.

\Sys splits the authentication secret between the client and log
service so that both parties must participate in authentication.  We
introduce split-secret authentication protocols for FIDO2, TOTP, and password-based
login.  At the end of each protocol, the log service holds an encrypted
authentication log record and the client holds a credential.  
\Sys ensures that if the client obtains a valid credential, the log
service also obtains a well-formed log record, even if the client is
compromised and behaves maliciously. At the same time, the log service learns
no information about the relying parties that the user authenticates to.

We design \sys to achieve the following (informal) security and privacy goals:
\begin{itemize}
\item \emph{Log enforcement against a malicious client:} An attacker that compromises a client cannot authenticate
to an account that the client created before compromise
without the log obtaining a well-formed, encrypted log
record.
\item \emph{Client privacy and security against a malicious log:}
A malicious log service cannot authenticate to the 
user's accounts or learn any information about the relying parties to
which the user has
authenticated, including whether two authentications
are for the same account or different accounts.
\item \emph{Client privacy against a malicious relying party:}
Colluding malicious relying parties cannot link a
user across accounts.
\end{itemize}

\Sys's FIDO2 protocol uses zero-knowledge proofs~\cite{GMR85} to
convince the log that an encrypted authentication log record generated
by the client is well-formed relative to the digest of a FIDO2
payload.  If it is, the client and log service sign the digest with a new,
lightweight two-party ECDSA signing protocol tailored to our setting.
For TOTP, \sys executes an authentication circuit using an existing
garbled-circuit-based multiparty computation
protocol~\cite{Yao82,WRK17}.  For password-based login, the client
privately swaps a ciphertext encrypting the relying party's identity
for the log's share of the corresponding password using
a discrete-log-based protocol~\cite{GK14}.

In the event that a user's device is compromised, a user can revoke access to
all accounts---even accounts she may have forgotten
about---by interacting only with the log service.  At the same time,
involving the log service in every authentication could pose a
reliability risk (just as relying on OpenID does).
We show how
to split trust across multiple log service providers to strengthen
availability guarantees, making \sys strictly better than OpenID for
all three of security, privacy, and availability.

We expect users to perform many password-based authentications, some
FIDO2 authentications, and a comparatively small number of TOTP
authentications.  Given a client with four cores and a log server with eight
cores, an authentication with \sys takes 150ms for FIDO2, 91ms for
TOTP, and 74ms for passwords (excluding preprocessing, which takes
1.23s for TOTP). One authentication requires 1.73MiB of
communication for FIDO2, 65.2MiB for TOTP, and 3.25KiB for
passwords.  TOTP communication costs are
comparatively high because we use garbled circuits~\cite{WRK17};
however, all but 202KiB of the communication can be moved into a
preprocessing step.

\Sys shows that it is possible to achieve privacy-preserving
authentication logging that is backwards compatible with existing standards.
Moreover, \sys provides new paths for FIDO2 adoption, as \sys
users can authenticate using FIDO2 without dedicated hardware
tokens, which could motivate more relying parties to deploy FIDO2.
Users who do own hardware tokens can use them to authenticate to
the \sys log service, providing strong security guarantees for
relying parties that do not yet support FIDO2 (albeit without
the anti-phishing protection).
We also suggest small changes to the FIDO standard that would
substantially reduce the overheads of \sys while providing the
same security and privacy properties.
 
 \section{Design overview}
\label{sec:overview}

We now give an overview of \sys.

\subsection{Entities}
\label{sec:overview:entities}
A \sys deployment 
involves the following entities:

\paragraph{Users.}
We envision a deployment with millions of users, each of which has
hundreds of accounts at different online services---shopping websites,
financial institutions, news sites, and so on.  Each user has an
account at a \sys log service, secured by a strong, unique password
and optionally (but ideally) strong second-factor authentication such
as a FIDO2 hardware security key. (In \cref{sec:mal}, we describe how
a user can create accounts with multiple log services in order to
protect against faulty logs.) A user also has a set of devices
(e.g. laptop, phone, tablet) running \sys client software and storing
\sys secrets, including cryptographic keys and passwords.

\paragraph{Relying parties.}
A relying party is any website that a user authenticates to (e.g.,
a shopping website or bank).  \Sys is compatible with any relying
party that supports authentication via FIDO2
(U2F)~\cite{fido2,webauthn}, time-based one-time passwords
(TOTP)~\cite{totp-rfc}, or standard passwords.
The strength of \sys's security guarantees depends on the strength
of the underlying authentication method.

\paragraph{Log service.}
Whenever the user authenticates to a relying party, the client
must communicate with the log service. We envision a major
service provider (e.g. Google or Apple) deploying this service
on behalf of their customers.
The log service:
\begin{itemize}
\item keeps an encrypted record of the user's
      authentication history, but
\item learns no information about which relying party the user
  authenticates to.
\end{itemize}
At any time, a client can fetch this authentication
record from the log service and decrypt it to see the user's
authentication history. 
That is, if an attacker compromises one of
Alice's devices and authenticates to \texttt{github.com} as Alice, the
attacker will leave an indelible trace of this authentication in the
\sys log.  At the same time, to protect Alice's privacy, the log
service learns no information about which relying parties Alice has
authenticated to.  A production log service should consist of
multiple, georeplicated servers to ensure high availability.

\subsection{Protocol flow}
\label{sec:overview:protocol}
\label{sec:overview:flow}

\paragraph{Background.} 
We use two-out-of-two \emph{additive secret sharing}~\cite{S79}: to
secret-share a value $x \in \{0, \dots, p-1\}$, choose random values
$x_1,x_2 \in \{0, \dots, p-1\}$ such that $x_1 + x_2 = x \bmod p$.
Neither $x_1$ nor $x_2$ individually reveals any information
about~$x$.  We also use a cryptographic \emph{commitment scheme}:  to
commit to a value $x \in \zo^*$, choose a random value $r \in
\zo^{256}$ (the commitment \emph{opening}) and output the hash of $(x
\| r)$ using a cryptographic hash function such as SHA-256.  For
computationally bounded parties, the commitment reveals no information
about $x$, but makes it impractical to convince another party that the
commitment opens to a value $x' \neq x$.

\smallskip

The client's interaction with the log service consists of four
operations.

\paragraph{Step 1: Enrollment with a log service.}
To use \sys, a user must first \emph{enroll} with a \sys log service
by creating an account.  In addition to configuring traditional
account authentication (i.e., setting a password and optionally
registering FIDO2 keys), the user's client generates a secret \emph{archive
key} for each authentication method supported.  For FIDO2 and TOTP,
the archive key is a symmetric encryption key, and the client sends
the log service a commitment to this key.  For passwords, the archive
key is an ElGamal private encryption key, so the client sends the log
service the corresponding public key.  The client subsequently
encrypts log records using these archive keys, while the log service
verifies these log records are well-formed using the corresponding
commitment or public key.

\paragraph{Step 2: Registration with relying parties.}
After the user has enrolled with a log service, she can create
accounts at relying parties (e.g., \texttt{github.com}) using
\sys-protected credentials.  We call this process \emph{registration}.
Registration works differently depending on which authentication
mechanism the relying party uses:  FIDO2 public-key authentication,
TOTP codes, or standard passwords.  All generally follow the same
pattern where at the conclusion of the registration protocol:
\begin{itemize}
  \item the log service holds an encryption of the relying
        party's identity under a key that only the client knows,
  \item the log service and client jointly hold
    the account's authentication secret using
    two-out-of-two \emph{secret sharing}~\cite{S79},
  \item the relying party is unaware of \sys and holds the
    usual information necessary to verify account access:
        an ECDSA public key (for FIDO2),
        an HMAC secret key (for TOTP), or  
        a password hash (for password-based login), and
  \item the log service learns nothing about the identity
        of the relying party.
\end{itemize}
By splitting the user's authentication secret between the client and
the log, we ensure that the log service participates in all of the
user's authentication attempts, which allows the log service to
guarantee that every authentication attempt is correctly logged.

The underlying authentication mechanisms (FIDO2, TOTP, and
password-based login) only provide security for a given relying party
if the user's device was uncompromised at the time of registration; 
\sys provides the same guarantees.

\paragraph{Step 3: Authentication to a relying party.}
Registering with a relying party lets the user later authenticate to
that relying party (\cref{fig:system-arch}).  At the conclusion of an
authentication operation, \sys must ensure that:
\begin{itemize}
  \item authentication succeeds at the relying party,
  \item the log service holds a record of the authentication attempt
    that \emph{includes the name of the relying party}, encrypted
    under the archive key known only to the client, and
  \item the log service learns \emph{no information} about the
    identity of the relying party involved.
\end{itemize}
The technical challenge here is guaranteeing that a compromised client
cannot successfully authenticate to a relying party without creating a
valid log record. In particular, the log service must verify that the
log record contains a valid encryption of the relying party's name 
under the archive key \emph{without} learning anything about the relying
party's identity.

To achieve these goals, we design \emph{split-secret authentication
protocols} that allow the client and log to use their split
authentication secrets to jointly produce an authentication
credential.  Our split-secret authentication protocols are essentially
special-purpose two-party computation protocols~\cite{Yao86}.  In a
two-party computation, each party holds a secret input, and the
protocol allows the parties to jointly compute a function on their
inputs while keeping each party's input secret from the other.  Our
split-secret authentication protocols follow a general pattern,
although the specifics depend on the underlying authentication
mechanism in use (FIDO2, TOTP, or password-based login):
\begin{itemize}
    \item The client algorithm takes as input the identity of the
      relying party, the client's share of the corresponding
      authentication secret, the archive key, and the opening for the
      log service's commitment to the archive key.
    \item The log algorithm takes as input its shares of
      authentication secrets and the client's commitment to the
      archive key (which it received at enrollment).
    \item The client algorithm outputs an authentication credential:
      a signature (for FIDO2), an HMAC code (for TOTP), or a password
      (for password-based login).
    \item The log algorithm outputs an encryption of the relying party
      identifier under the archive key.
\end{itemize}
In this way, the client and log service jointly generate
authentication credentials while guaranteeing that every successful
authentication is correctly logged.  The client and log do not learn
any information beyond the outputs of the computation.  We use this
general pattern to construct split-secret authentication protocols for
FIDO2 (\cref{sec:fido2}), TOTP (\cref{sec:totp}), and password-based
login (\cref{sec:pw}).

\begin{figure}[t]
    \centering
    \includegraphics[width=\columnwidth]{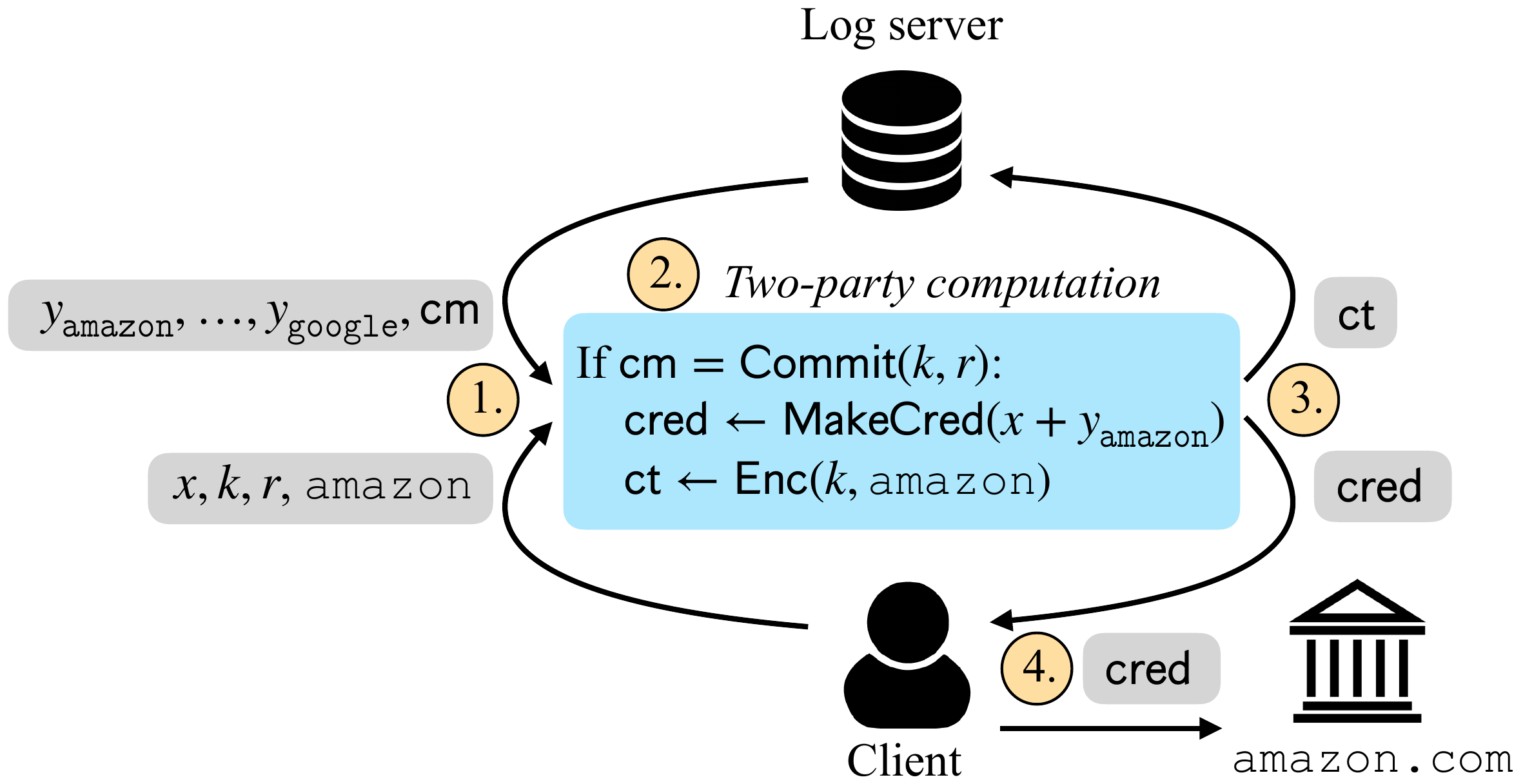}
    \caption{The client and log service run split-secret
      authentication where the client obtains the credential for
      \texttt{amazon.com} and the log service obtains an encryption of
      \texttt{amazon.com} under the client's key.  The client's inputs
      are its share $x$ of the authentication secret, the archive key
      $k$, a random nonce $r$, and the string \texttt{amazon.com}.
      The log's inputs are its shares $y_\texttt{amazon}, \dots,
      y_\texttt{google}$ of all the client's authentication secrets
      and the commitment $\cm$ to the archive key generated at
      enrollment. The $\mathsf{MakeCred}$ function takes extra inputs
      for FIDO2 and TOTP.}
    \label{fig:system-arch}
\end{figure}

\paragraph{Step 4: Auditing with the log.}
Finally, at any time, the user can ask the log service for its
collection of log entries encrypted under the archive key.  A user
could do this when she suspects that an attacker has compromised her
credentials.  The user's client could also perform this auditing in
the background and notify the user if it ever detects anomalous
behavior.  The client uses the encryption key it generated during
enrollment to decrypt log entries.

\subsection{System goals}
\label{sec:overview:sec-goals}

We now describe the security goals of \sys (\cref{fig:sec-goals}).

\paragraph{Goal 1: Log enforcement against a malicious client.}
Say that an honest client enrolls with an honest log service and then
registers with a set of relying parties.  Later on, an attacker
compromises the client's secrets (e.g., by compromising one of the
user's devices and causing it to behave maliciously).  Every
successful authentication attempt that the attacker makes using
credentials managed by \sys will appear in the client's authentication
log stored at the \sys log service.  Furthermore, the honest client
can decrypt these log entries using its secret key.

\paragraph{Goal 2: Client privacy and security against a malicious log.}
Even if the log service deviates arbitrarily from the prescribed
protocol, it learns no information about
(a) the client's authentication secrets (meaning that the log
service cannot authenticate on behalf of the client) or 
(b) which relying parties a client has interacted with.

\paragraph{Goal 3: Client privacy against a malicious relying party.}
A set of colluding malicious relying parties learn no information
about which registered accounts belong to the same client.
That is, relying parties cannot link a client across multiple relying parties
using information they learn during registration or authentication.

\begin{figure*}[t]
    \centering
    \includegraphics[width=\textwidth]{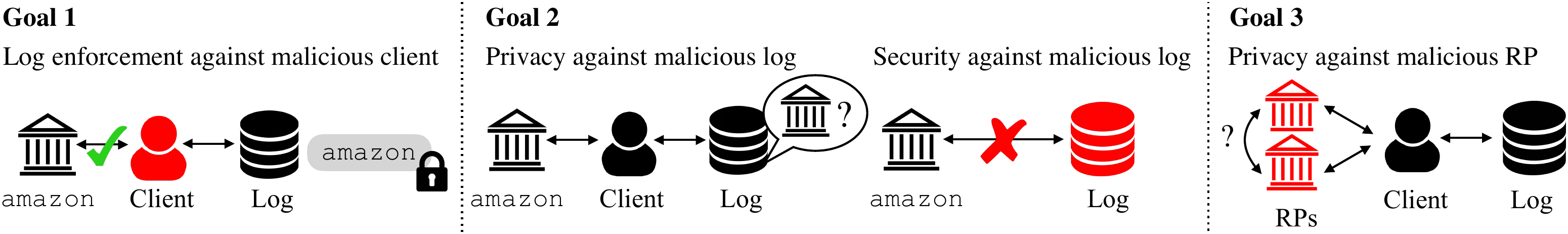}
    \caption{\Sys security goals.}
    \label{fig:sec-goals}
\end{figure*}

\label{sec:overview:functionality-goals}
\medskip

To be usable in practice, 
\sys should additionally achieve the following functionality
goal:

\paragraph{Goal 4: No changes to the relying party.}
Relying parties that support FIDO2 (U2F), TOTP, or password
authentication do not need to be aware of \sys.  Clients can
unilaterally register authentication credentials such that all future
authentications are logged in \sys.

\subsection{Non-goals and extensions}
\label{sec:overview:non}

\itpara{Availability against a compromised log service.}
\Sys does not provide availability if the log service
refuses to provide service.
We discuss defenses against availability attacks in
\cref{sec:mal}.

\itpara{Privacy against colluding log and relying party.}  If the log
service colludes with a relying party, they can always use timing
information to map log entries to authentication requests.  Therefore,
\sys makes no effort to obscure the relationship between private
messages seen by the two parties and only guarantees privacy when the
relying party and log service do not collude.

\itpara{Limitations of underlying authentication schemes.}  \Sys
provides security guarantees that match the security of the underlying
authentication schemes.  FIDO2 provides the strongest security,
followed by TOTP, and then followed by passwords.  For TOTP and
password-based login, \sys provides no protection against
\emph{credential breaches}: if an attacker steals users'
authentication secrets (MAC keys or passwords) from the relying party,
the attacker can use those secrets to authenticate without those
authentications appearing in the log.  FIDO2 defends against
credential breaches because the relying party only ever sees the
client's public key.

\Sys does protect against \emph{device compromise} for all three
authentication mechanisms: even if an attacker gains control of a
user's device, generating any of the user's \sys-protected credentials
requires communicating with the log service and results in an archived
log record.  If the user discovers the device break-in later on, she
can recover from the log a list of authentications and take steps to
remediate the effects of compromise (contacting the affected relying parties, etc.).

An attacker who compromises an account can often disable two-factor
authentication or add its own credentials to a compromised account.
Therefore, only an attacker's first successful access to a given relying
party is guaranteed to be archived in \sys.  That said, many relying
parties send out notifications, require step-up authentication, or
revoke access to logged in clients on credential updates, all of which
could complicate an attack or alert legitimate users to a problem.
Hence, it is valuable to ensure that all accesses with the original
account credentials are logged.
\Sys can make this guarantee for FIDO2, where every authentication
requires a unique two-party signature.  It does not provide this guarantee
with passwords, as the attacker learns the password as part of the
authentication process: only the attacker's first authentication to
a given relying party will be logged. With TOTP, each generated code produces a
\sys log record.  Some relying parties implement a TOTP replay cache,
in which case one code allows one login.  Other relying parties allow
a single TOTP code to be used for arbitrarily many authentications in
a short time period (generally about a minute).

Fortunately, when recovering from compromise, a user is most interested
in learning whether an attacker has accessed an account zero times
or more than zero times. For \sys-generated credentials, users will
always be able to learn this information from the \sys log.
However, if users import passwords that are not unique into \sys,
this guarantee does not hold. By default, the \sys client software
generates a unique random password for every relying party, but it
also allows user to import existing legacy passwords, which might
not be unique. In the event of password reuse, the attacker can
generate a single log record to obtain the password and then use it
to authenticate to all affected relying parties.

 \section{Logging for FIDO2}
\label{sec:fido2}

\subsection{Background}
\label{sec:fido2:tools}

\paragraph{FIDO2 protocol.}
The FIDO2 protocol~\cite{fido2,webauthn} allows a client to
authenticate using cryptographic keys
stored on a device (e.g., a Yubikey hardware token or a
Google passkey).
To register with a relying party (e.g., \texttt{github.com}),
the client generates an ECDSA keypair, stores the secret key,
and sends the public key to the relying party.
When the client subsequently wants to authenticate to
relying party \texttt{github.com}, Github's server sends the client
a random challenge.
The client then signs the hash of the string \texttt{github.com} 
and the Github-chosen challenge using the secret key the client generated for
\texttt{github.com} at registration.
If the signature is valid, the Github server authorizes the client.
Because the message signed by the client is bound to the name
\texttt{github.com},
FIDO2 provides a strong defense against phishing attacks.
The FIDO2 protocol supports passwordless, second-factor, and multi-factor
authentication.

\paragraph{Zero-knowledge arguments.}
Informally, zero-knowledge arguments allow a prover to convince
a verifier that a statement is true without revealing \emph{why} the
statement is true~\cite{GMR85}.
More precisely, we consider non-interactive zero-knowledge argument systems~\cite{BFM88,FS86} 
in the random-oracle model~\cite{ROM}.
Both the prover and verifier hold the description of a computation~$C$
and a public input~$x$.
The prover's goal is to produce a proof~$\pi$ that convinces 
the verifier that there exists a witness $w$ that causes $C(x,w)=1$,
without revealing the witness~$w$ to the verifier.
We require the standard notions of completeness,
soundness, and zero knowledge~\cite{BFM88, GMR85}.
Throughout the paper, we will refer to this type
of argument system as a ``zero-knowledge proof.''

We use the ZKBoo protocol~\cite{IKOS07,zkboo,zkb++} 
for proving statements about computations expressed as Boolean circuits.
Our system could also be instantiated with
succinct non-interactive arguments of knowledge, which
would decrease proof size and verification time, but at the cost of
increasing proving time and requiring large parameters generated
via a separate setup algorithm 
\cite{BCCT12,GGPR13,Groth16,pinocchio}.

\paragraph{Threshold signatures.}
A two-party threshold signature scheme~\cite{Desmedt87,DF89} is a set of protocols
that allow two parties to jointly generate a single public key
along with two shares of the corresponding secret key and then
jointly sign messages using their secret key shares such that
the signature verifies under the joint public key.
Informally, no malicious party should be able to subvert the protocols to extract
another party's share of the secret key or forge a signature on a
message other than the honest party's message.
We would ideally instantiate our system using BLS multisignatures~\cite{bonehbls}.
Unfortunately, the predominant signing algorithms for FIDO2 are ECDSA
and RSASSA~\cite{RSA-PKCS,fido2-interop,fido2-server-reqs}.
For backwards-compatibility, we present a
construction for two-party ECDSA signing with preprocessing tailored to
our setting in \cref{sec:two-ecdsa}.

\subsection{Split-secret authentication}
\label{sec:fido2:construction}

We now describe our split-secret authentication protocol
for FIDO2 where the authentication secret is split between
the \sys client software and the log service.
The key challenge is achieving log enforcement and log privacy
simultaneously: every successful authentication should result
in a valid log entry encrypting the identity of the relying
party, but the log should not learn the identity of the relying
party.

We use threshold signing to ensure that both the client and
log participate in every successful authentication.  A natural way to
use threshold signing would be to have the client and log each
generate a new threshold signing keypair at every registration.
Unfortunately, if the log service used a different key share for each
relying party, it would know which authentication requests correspond
to the same relying party, violating Goal~2 (privacy against a malicious
log).  Instead, we have the log
use the \emph{same signing-key share for all relying parties}.  The
client still uses a different signing-key share per party, ensuring
the public keys are unlinkable across relying parties.  To
authenticate to a relying party with identifier $\id$ and challenge
$\chal$, the client computes a digest
$\digest = \Hash(\id,\chal)$ that hides $\id$.
The client and log then jointly sign $\digest$.

We also need to ensure that the log service obtains a correct record
of every authentication.  In particular, the log should only
participate in threshold signing if it obtains a valid encryption
$\ct$ of the relying-party identifier $\id$~\cite{Sta96}.

To be valid, a ciphertext $\ct$ must
(1) decrypt to $\id$ under the archive key $k$ established
for that client, and (2) be correctly related to the digest $\digest$ that the log
will sign (i.e., $\Dec(k, \ct) = \id$ and $\digest = \Hash(\id, \chal)$).
To allow the log service to check that the client is using the right archive key
without learning the key, we use a commitment scheme.
During enrollment, the client generates a commitment $\cm$ to the archive key
$k$ using random nonce~$r$ and sends $\cm$ to the log service.
During authentication, the client uses a zero-knowledge proof to prove to the
log that it knows a key $k$, randomness $r$, relying-party identifier $\id$,
and authentication challenge $\chal$ such that
ciphertext $\ct$, digest $\digest$, and commitment $\cm$ from
enrollment meet the following conditions:
\begin{itemize}
    \item[(a)] $\cm = \Commit(k,r)$,
    \item[(b)] $\id = \Dec(k, \ct)$, and
    \item[(c)] $\digest = \Hash(\id, \chal)$.
\end{itemize}
The public inputs are the ciphertext $\ct$, digest $\digest$, and
commitment $\cm$ (known to the client and log); the witness is the
archive key $k$ (known only to the client), commitment
opening $r$, relying-party identifier $\id$, and challenge $\chal$.

\paragraph{Final protocol.}
We now outline our final protocol.

\itpara{Enrollment.}
During enrollment, the client samples a symmetric encryption key $k$ as
the archive key and
commits to it with some random nonce~$r$.
The client sends the commitment $\cm$ to the log, and the log
generates a signing-key share for the user.
The log sends the client the public key corresponding to its signing-key 
share to allow the client to derive future keypairs for relying parties.

\itpara{Registration.}
At registration, the client generates a new signing-key share for
that relying party.
The client then aggregates the log's public key with its new signing-key share
and sends the resulting public key to the relying party.
No interaction with the log service is required.

\itpara{Authentication.}  To authenticate to $\id$ with challenge
$\chal$, the client computes
$\digest\gets\Hash(\id,\chal)$ and $\ct\gets\Enc(k, \id)$. The client then
generates a zero-knowledge proof $\pi$ that it knows 
an archive key $k$, 
commitment nonce $r$, 
relying-party identifier $\id$,
and authentication challenge $\chal$ 
such that $\digest$ and $\ct$ are correctly related relative to
the commitment $\cm$ that the client generated at enrollment.  The client sends
$\digest$, $\ct$, and $\pi$ to the log service.  The log service checks
the proof and, if it verifies, runs its part of the threshold signing
protocol.  The log service stores $\ct$ and returns its signature share
to the client.  The log service also stores the current time and client
IP address with $\ct$, allowing the user to obtain additional metadata
by auditing.  Finally, the client completes the threshold
signature and sends it to the relying party.

\itpara{Auditing.}
To audit the log, the client requests the list of ciphertexts and
metadata from the log service and decrypts all of the relying-party identifiers.

\subsection{Two-party ECDSA with preprocessing}
\label{sec:two-ecdsa}

\cref{sec:fido2:construction} shows how to implement \sys for any
two-of-two threshold signing scheme that cryptographically hashes
input messages.  However, FIDO2 compatibility forces us to use ECDSA,
which is more cumbersome than BLS to threshold.  We present 
a concretely efficient protocol for ECDSA signing between the client and log.

There is a large body of prior work on multi-party ECDSA 
signing~\cite{DKLS18, lindell-ecdsa, dnssec-ecdsa, ecdsa-pcgs, DJN+20,
  CGG+20, GS21, GGN16, GG18, CMP20}.  However, existing protocols are
orders of magnitude more costly than the one we present
here~\cite{lindell-ecdsa, GGN16, GG18, CGG+20, CMP20}.  The efficiency
gain for us comes from the fact that we may assume that the client is
\emph{honest at enrollment time and only later compromised}. 
In contrast, standard schemes for two-party ECDSA signing
must protect against the compromise of either party at any time.
Prior protocols provide this stronger security property at a
computational and communication cost.  In our setting, we need only
ensure that an honest client can run an enrollment procedure with the
log service such that if the client is later compromised, the attacker
cannot subvert the signing protocol.

We leverage the client to split signing into two phases:
\begin{enumerate}
  \item During an \emph{offline phase}, which takes place during enrollment, 
        the client performs some preprocessing to produce a ``presignature.''
        Security only holds if the client is honest during the offline phase.
  \item During an \emph{online phase}, which takes place during authentication, 
        the client and log service use the presignature to perform a lightweight,
        message-dependent computation to produce an ECDSA signature.
        Security holds if either the client or log service is compromised
        during the online phase.
\end{enumerate}
Prior work also splits two-party signing into an offline and
online phase.
However, prior work performs this partitioning to reduce the online time 
at the expense of a more costly offline 
phase~\cite{dnssec-ecdsa, XAXYC21, DJN+20, CGG+20}.
(The offline phase in these schemes is expensive since the protocols
do not assume that both parties are honest during the offline phase.)
We split the signing scheme into an offline and online phase to take advantage
of the fact that we may assume that the client is honest in the offline phase
and so can reduce the total computation time this way.

An additional requirement in our setting is that the log should \emph{not} learn
the public key that the signature is generated under.
Because the public key is specific to a relying party, hiding the public
key is necessary for ensuring that the log cannot distinguish between
relying parties.
The signing algorithm can take
as input a relying-party-specific key share from the client and a
relying-party-independent key share from the log.

\paragraph{Background: ECDSA}.
For a group $\G$ of prime order $q$ with generator $g$,
fixed in the ECDSA standard, an ECDSA secret key is of
the form $\sk \in \Z_q$, where $\Z_q$ denotes the ring of
integers modulo $q$. The corresponding ECDSA public
key is $\pk = g^\sk \in \G$.
ECDSA uses a hash function $\Hash \colon \zo^* \to \Z_q$ 
and a ``conversion'' function $f: \G \to \Z_q$. 
To generate an ECDSA signature on a message $m \in \zo^*$ with secret key $\sk \in \Z_q$,
the signer samples a signing
nonce $r \getsr \Z_q$ and computes 
$$ r^{-1} \cdot (\Hash(m) + f(g^r) \cdot \sk) \quad \in \Z_q.$$

\paragraph{Our construction.}
We now describe our construction for a two-party ECDSA signing protocol
with presignatures.
\iffull
(See \cref{app:two-ecdsa} for technical details.)
\else
(See the full version~\cite{larch-full} for technical details.)
\fi
To generate the log keypair, the log samples $x \getsr \Z_q$, sets
its secret key to $x \in \Z_q$, and sets its public key to $X = g^{x} \in \G$.
Then to generate a keypair from the log public key, the client samples
$y \getsr \Z_q$ and sets the relying-party-specific public key to $\pk = X \cdot g^{y} \in \G$.
For each public key of the form $g^{x + y} \in \G$, 
the log has one share $x \in \Z_q$ of the secret key that is the
same for all public keys and the client has the other share $y \in \Z_q$
of the secret key that is different for each public key.

We split the signature-generation process into two parts:
\begin{enumerate}
  \item{}\emph{Offline phase}: a message-independent, key-independent ``presignature'' algorithm
    that the client runs, and
  \item \emph{Online phase}: a message-dependent, key-dependent signing protocol 
      that the log and client run jointly. 
\end{enumerate}
To generate the presignature in the offline phase,
the client samples a signing nonce $r \getsr \Z_q$, computes
$R \gets g^r \in \G$, and splits $r^{-1}$ into additive secret shares: $r^{-1} = r_0 + r_1 \in \Z_q$.
The log's portion of the presignature is $(f(R), r_0) \in \Z_q^2$, and the client's
portion is $(f(R), r_1) \in \Z_q^2$.
Then, to produce a signature on a message in the online phase, the client and
log simply perform
a single secure multiplication to compute
$$r^{-1} \cdot (\Hash(m) + f(R) \cdot \sk) \in \Z_q$$
where $r^{-1} \in \Z_q$ (signing nonce) and $\sk \in \Z_q$ (signing key) are secret-shared between
the client and log.

To perform this multiplication over secret-shared values, we use Beaver triples
\cite{Beaver91}.
A Beaver triple is a set of one-time-use shares of values 
that the log and client can use to efficiently
perform a two-party multiplication on secret-shared values.
Traditionally, generating Beaver triples is one of the expensive portions of
multiparty computation protocols (e.g., in prior work on threshold ECDSA~\cite{dnssec-ecdsa}).
In our setting, the client at enrollment time can
generate a Beaver triple as part of the presignature.
Note that the client and log can use each
signing nonce and Beaver triple exactly once. 
That is, the client and log must use a fresh presignature to generate each signature.

\paragraph{Malicious security.}
By deviating from the protocol, neither the client nor
the log should be able to learn secret information (i.e., the other party's share of the
secret key or signing nonce) or produce a signature for any message
apart from the one that the protocol fixes.
\iffull
We describe how to accomplish this using traditional tools for malicious security
(e.g. information-theoretic MACs~\cite{SPDZ12}) in \cref{app:two-ecdsa}.
\else
We describe how to accomplish this using traditional tools for malicious security
(e.g. information-theoretic MACs~\cite{SPDZ12}) in the full version~\cite{larch-full}.
\fi

\paragraph{Formalizing and proving security.}
\iffull
We define and prove security in \cref{app:secdefs}
and \cref{app:two-ecdsa}.
\else
We define and prove security in the full version~\cite{larch-full}.
\fi

\paragraph{Implications for system design.}
Our preprocessing approach increases the client's work at enrollment:
the client generates some number of presignatures
(e.g., 10K) and sends the log's presignature shares to the log.
To reduce storage burden on the log, the client can store
encryptions of the log's presignature shares.

When the client is close to running out of presignatures, it can
authenticate with the log, generate more presignatures, and send the log's
presignature shares to the log service. If the log service does not receive
an objection after some period of time, it will start using the new
presignatures. An honest client periodically checks the log to see
whether any unexpected presignatures (created by an attacker) appear
in its log. If the client learns that a new batch of presignatures
was generated that the client did not authorize, the client authenticates
to the log service and objects.
This approach provides security as long as an honest client can detect client compromise and object to any adversarially generated presignatures during the objection period. (If a user is concerned about recovering authentication logs from a long period of undetected compromise, the client can set a very long objection period, or generate enough presignatures at enrollment that, with high probability, it will not need to generate more.)

If the client runs out of presignatures and the log service rejects
the client's presignatures, the client and the log can temporarily use a more
expensive signing protocol that does not require
presignatures~\cite{GGN16,lindell-ecdsa,DKLS18,XAXYC21}.
The client could run out of presignatures and be forced to use
the slow multisignature protocol in the following cases:
\begin{enumerate}
    \item The attacker compromised the user's credentials with
        the log service, allowing the attacker to object to the 
        new presignatures. In this case, the attacker could change
        the user's credentials and permanently lock the user
        out of her account.
    \item The honest client was close to running out of presignatures,
        generated new presignatures, and then ran out of presignatures
        while waiting for a possible objection. This scenario only
        occurs when the honest client makes an unexpectedly large
        number of authentications in a short period of time.
        The client only needs to pay the cost of the slow multisignature
        protocol for a short period of time.
\end{enumerate}
An attacker that has compromised the log service can also
deny service, as we discussed in \cref{sec:overview:non}.

\paragraph{Benefits of future support for Schnorr-based signing.}
The FIDO2 standard recommends support for
EdDSA, which, if widely supported in the future, could simplify the
two-party signing protocol and avoid preprocessing altogether.
Adapting a Schnorr-based threshold signing protocol~\cite{CoSI,musig,musig2}
to the setting where only the client knows the message and public key could
potentially improve performance.

 \section{Logging for time-based one-time passwords}\label{sec:totp}
We now show how \sys can support time-based one-time passwords
(TOTP).

\subsection{Background: TOTP}
TOTP is a popular form of
second-factor authentication that authenticator apps 
(Authy, Google Authenticator, and others~\cite{totp-rfc}) implement.
When a client registers for TOTP with a relying party, the
relying party sends the client a secret cryptographic key.
Then, to authenticate, the client and the relying
party both compute a MAC on the current 
time using the secret key from registration.
The client sends the resulting MAC tag to the server.
If the client's submitted tag matches the one
that the server computes, the relying party authorizes the
client.
TOTP uses a hash-based MAC (HMAC).

\subsection{Split-secret authentication for TOTP}
At a high level, in our split-secret authentication protocol for TOTP,
both the client and log service 
have as private input additive secret shares of the
TOTP secret key.
At the conclusion of the split-secret authentication, 
the client holds a TOTP code
and the log service 
holds a ciphertext.
We now give the details of our protocol.

\itpara{Enrollment.}
At enrollment, just as with FIDO2, the client generates and stores a
long-term symmetric-encryption archive key $k$ and random nonce $r$.
Then, the client 
sends the commitment $\cm = \Commit(k, r)$ to the log service.

\itpara{Registration.}
To register a client, a relying party generates and sends the client
a secret MAC key $\kid$ for TOTP\@.
The client samples a random identifier $\id$ for
the relying party and then splits the TOTP secret key $\kid$ into
additive secret shares $\klog_\id$ and $\kclient_\id$.
The client sends $(\id,\klog_\id)$ to the log service and locally stores
$(\id,\kclient_\id)$ alongside a name identifying the relying party (e.g.,
\texttt{user@amazon.com}).

\itpara{Authentication.}
In order to authenticate to the relying party $\id$ at time $t$, the
client needs to compute $\HMAC(\kid, t)$ with the help of the log
service.
Let $n$ be the number of relying parties with which the client
has registered.
To authenticate, the client and log service run a secure two-party computation where:
\begin{itemize}
  \item The client's input is its long-term symmetric archive key $k$
        and commitment opening $r$ from enrollment, the
        relying-party identifier $\id$, and the client's
        share of the TOTP key $\kclient_\id$.
  \item The log service's input is the commitment
        $\cm$ from enrollment, the list of
        relying-party identifiers that the client has registered
        with $(\id_1, \dots, \id_n)$, and the log service's TOTP key 
        shares $(\klog_{\id_1}, \dots, \klog_{\id_n})$---one
        per relying party. 
    \item The client outputs the TOTP code $\HMAC(\kid, t)$.
    \item The log outputs an encrypted log record:
        an encryption of the relying-party identifier $\id$ under 
        the archive key~$k$.
\end{itemize}
We execute this two-party computation using an
off-the-shelf garbled-circuit-based
multiparty computation protocol. 
Garbled circuits allow two parties to jointly execute any Boolean circuit 
on private inputs, where neither party learns 
information about the other's input beyond what they can
infer from the circuit's output~\cite{Yao82}.
We use the protocol from Wang et al.~\cite{WRK17}, which provides malicious
security, meaning that the protocol remains secure even if one corrupted
party deviates arbitrarily from the protocol.
As long as either the client or the log service is honest, the log service
does not learn any information about the client's authentication
secrets, and the client
learn no information about the TOTP secret, apart from
the single TOTP code that the protocol outputs.
Because we use an off-the-shelf garbled-circuit protocol,
the communication overhead is much higher than in the
special-purpose protocols we design for FIDO2 and passwords (\cref{sec:eval}).
TOTP is challenging to design a special-purpose protocol
for because the authentication credential must be generated via the
SHA hash function which, unlike the authentication credentials
for FIDO2 and passwords, does not have structure we can exploit.
Clients can ask the log service to delete registrations for
unused accounts to speed up the two-party computation.

\itpara{Auditing.}
To audit the log, the client simply
requests the list of ciphertexts from the log service.
The client decrypts each ciphertext with its archive key~$k$
and then, using its mapping of $\id$ values
to relying party names, outputs the
resulting list of relying party names.

 \section{Logging for passwords}\label{sec:pw}

We now describe how \sys can support passwords.

\subsection{Protocol overview}
We construct a split-secret authentication protocol that takes 
place between the client and the log service.
In particular, we show how the client can compute the
password to authenticate to a relying party in such 
a way that
(a) the log service does not learn the relying party's
    identity and
(b) the client's authentication attempt is logged.
At the start of the authentication protocol run:
\begin{itemize}
  \item the client holds a secret key, the log service's
        public key, and 
        the identity $\id^*$ of the relying party it wants to
        authenticate to, and 
  \item the log service holds its own secret key, 
        the client's public key, and 
        a list of relying-party identities 
        $(\id_1, \dots, \id_n)$ 
        at which the client has registered.
\end{itemize}
At the end of the authentication protocol run:
\begin{itemize}
  \item the client holds a password derived as 
        a pseudorandom function of the client's secret,
        the log's secret, and the relying party identity $\id^*$, and 
  \item the log service holds a ciphertext encrypting
        the relying party's identity $\id^*$ under
        the client's public key.
\end{itemize}

\paragraph{Limitations inherent to passwords.}
As we discussed in \cref{sec:overview:non},
\sys for passwords does not protect against
credential breaches, but does defend against
device compromise.

\subsection{Split-secret authentication for passwords}

The \sys scheme for password-based authentication
uses a cyclic group $\G$ of prime order $q$ with a fixed
generator $g\in \G$.
Our implementation uses the NIST P-256 elliptic-curve group.

When using password-based authentication in \sys,
the client and log service after registration each hold a
secret share of the password for each relying party.
In particular, the password for a relying party 
with identity $\id \in \zo^*$ is the string
$\mathsf{pw}_\id = k_\id \cdot \Hash(\id)^k \in \G$,
where:
\begin{itemize}
\item $k_\id \in \Z_q$ is a per-relying-party secret share held by the
  client,
\item $\Hash\colon \zo^* \to \G$ is a hash function, and
\item $k \in \Z_q$ is a per-client secret key held by the log service.
\end{itemize}
Thus, computing $\mathsf{pw}_\id$ requires both
the client's per-site key~$k_\id$ and the
log's secret key $k$.

The technical challenge is to construct a protocol
that allows the client to compute the password 
$\mathsf{pw}_\id$ while (a)~hiding $\id$ from the log service and
(b)~ensuring that the log service completes the interaction
holding an encryption of $\id$ under the client's public key.

\paragraph{Protocol.}
We describe the protocol steps:

\itpara{Enrollment.}  The client samples an ElGamal secret key $x \in
\Z_q$ as the archive key and sends the corresponding public key $X =
g^x \in \G$ to the log service.  The log service samples a
Diffie-Hellman secret key $k \in \Z_q$ and sends its public key $K =
g^k \in \G$ to the client.

\itpara{Registration.}  The client samples a per-relying-party random
identifier $\id \getsr \zo^{128}$, saves $\id$ locally alongside the
name of the relying party (e.g., \texttt{user@amazon.com}), and sends
$\id$ to the log service.
The log service saves the string $\Hash(\id)$ and replies with
$\Hash(\id)^k \in \G$.  To generate a new strong password
$\mathsf{pw}_\id$ (the recommended use), the client samples and saves
a random key share $k_\id \getsr \G$ and sets $\mathsf{pw}_\id \gets
k_\id \cdot \Hash(\id)^k \in \G$.  To import a legacy password
$\mathsf{pw}_\id$ (less secure), the client computes and stores
$k_\id\gets\mathsf{pw}_\id \cdot
\left(\Hash(\id)^k)\right)^{-1}\in\G$.  The client then deletes
$\Hash(\id)^k$ and $\pw_\id$. Note that the log server can discard
$\id$, which it only uses to avoid providing $h^k$ for arbitrary $h$.
When the client samples $\id$ and $k_\id$ randomly in the recommended
usage, the password $\pw_\id$ for each relying party is random
and distinct.

\itpara{Authentication.}  During authentication, the client must
recompute the password $\pw_\id$.  To do so, the client first sends
the log service an encryption of $\Hash(\id)$ under the public
ElGamal archive key $g^x$:  the client samples $r \getsr \Z_q^*$ and
computes the ciphertext $(c_1, c_2) = (g^r, \Hash(\id) \cdot g^{xr})
\in \G^2$.  In addition, the client sends a zero-knowledge proof to
the log service attesting to the fact that $(c_1, c_2)$ is an
encryption under the client's public key $X$ of $\Hash(\id)$ for $\id
\in \{\id_1, \id_2, \dots, \id_n\}$---the set of relying-party
identifiers that the client sent to the log service during each of its
registrations so far.  The client executes this proof using the
technique from Groth and Kohlweiss~\cite{GK14}.  The proof size is
$O(\log n)$ and the prover and verifier time are both $O(n)$.
\iffull
(See \cref{app:pw} for implementation details.)
\else
(See the full version~\cite{larch-full} for implementation details.)
\fi

The log service saves the ciphertext as a log entry, checks the zero-knowledge proof,
and returns the value $h = c_2^k = \Hash(\id)^k \cdot g^{xrk} \in \G$ to the client.
The client can then compute 
\[ \pw_\id = k_\id \cdot h \cdot K^{-xr} = k_\id \cdot \Hash(\id)^k \quad \in \G. \]
Crucially, the client deletes $\pw_\id$ after authentication to
ensure that future authentications must again interact with the log
service.

\itpara{Auditing.}
To audit the log, the client downloads the ElGamal ciphertexts
and can decrypt each ciphertext to recover
a list of hashed identities: $(\Hash(\id_1), \Hash(\id_2), \dots)$.
The client uses its stored mapping of $\id$s to
relying-party identifiers to recover the plaintext names of 
the relying parties in the log.

 \section{Protecting against log misbehavior}
\label{sec:mal}

The \sys log service must participate in each of the user's
authentication attempts.
If the log service goes offline, the user will not be able
to authenticate to any of her \sys-enabled relying parties.
In a real-world deployment, the log service could consist of
multiple servers replicated using standard state-machine
replication techniques to tolerate benign failures~\cite{paxos,raft}.
However, users might also worry about intentional denial-of-service
attacks on the part of the log.

To defend against availability attacks, a user can split trust
across multiple logs.  At enrollment time, the user can enroll with
$n$ logs.  Then at registration, the user can set a threshold $t$ of
logs that must participate in authentication.  Thus, the user can
authenticate to her accounts so long as $t$ logs are online, and she
can audit activity so long as $n-t+1$ logs are available.
We need $n-t+1$ logs to be available for auditing in order to guarantee
that at least one of the $t$ logs that participated in authentication
is online.
To ensure that colluding logs cannot authenticate on behalf of
a client, the user's client can run $n+1$ logical parties, and
$n+t+1$ parties can generate an authentication credential.
In the setting with multiple log services, we need to adapt our
two-party protocols to threshold multi-party protocols.
Although we present our techniques for two parties (the client and
a single log), our techniques generalize to multiple parties in a
straightforward way.

For FIDO2 and passwords, the client now sends
a zero-knowledge proof to each of the $n$ logs.
In the password case, the client can then retrieve $(t,n)$
Shamir shares of the password~\cite{S79}, and in the FIDO2
case, the client can run any existing multi-party threshold
signing protocol that does not take the public key as
input~\cite{ST19,dnssec-ecdsa}.
For TOTP, the client and the $n$ logs can execute the same
circuit using any malicious-secure threshold multi-party
computation protocol~\cite{BGW88}.

Note that for relying parties that support FIDO2,
users can optionally register a backup hardware FIDO2 device to allow 
them to bypass the log. In this case, the user can authenticate
either via \sys or via her backup FIDO2 key. While registering
a backup hardware device protects against availability attacks, if an attacker
obtains this hardware device, they can authenticate as the user without
interacting with the log.

 \section{Implementation}
We implemented \sys for FIDO2, TOTP, and passwords with a single log
service.
We use C/C++ with gRPC and OpenSSL with the P256 curve
(required by the FIDO2 standard).
We wrote approximately 5,700 lines of C/C++ and 50 lines of
Javascript (excluding tests and benchmarks).
Our implementation is available at \url{https://github.com/edauterman/larch}.

For our FIDO2 implementation, we implemented a ZKBoo~\cite{zkboo} library
for arbitrary Boolean circuits.
Our ZKBoo implementation (with optimizations from ZKB++~\cite{zkb++}) uses
emp-toolkit to support arbitrary Boolean circuits in Bristol Fashion~\cite{emp-toolkit}.
To support the parallel repetitions required for soundness error $<2^{-80}$, we use
SIMD instructions with a bitwidth of 32 and run 5 threads in parallel.
For the proof circuit, we use AES in counter mode for encryption and SHA-256 for
commitments (SHA-256 is necessary for backwards compatability with FIDO2).
We built a log service and client that invoke the ZKBoo library,
as well as a Chrome browser extension that interfaces with our
client application and is compatible with existing FIDO2 relying parties.
We built our browser extension on top of an existing extension~\cite{kr-u2f}.

Our TOTP implementation uses a maliciously secure garbled-circuit
construction~\cite{WRK17} implemented in emp-toolkit~\cite{emp-toolkit}.
We generated our circuit using the CBMC-GC compiler~\cite{CBMC-GC}
with ChaCha20 for encryption and SHA-256 for commitments.

For our passwords implementation, we implemented Groth and 
Kohlweiss's proof system~\cite{GK14}.

Our implementation uses a single log server for the log service, does not
encrypt communication between the client and the log service, and does not
require the client to authenticate with the log service. A real-world deployment
would use multiple servers for replication, use TLS between the client and
the log service, and authenticate the client before performing
any operations.

\paragraph{Optimizations.}
We use pseudorandom generators (PRGs) to compress presignatures:
the log stores 6 elements in $\Z_q$ and the client stores 1 element.
Also, instead of running an authenticated encryption scheme (e.g. AES-GCM)
inside the circuit for FIDO2 or TOTP, we run an encryption scheme without authentication
(e.g. AES in counter mode) inside the circuit and then sign the ciphertext
(client has the signing key, log has the verification key).
The log can check the integrity
of the ciphertext by verifying the signature, which is much faster
than checking in a zero-knowledge proof or computing
the ciphertext tag jointly in a two-party computation.
 \section{Evaluation}
\label{sec:eval}

In this section, we evaluate the cost of \sys to
end users and the cost of running a \sys log service.

\paragraph{Experiment setup.}
We run our benchmarks on Amazon AWS EC2 instances.
Unless otherwise specified, we run the log service on a c5.4xlarge
instance with 8 cores and 32GiB of memory
and, for latency benchmarks, the client on a c5.2xlarge instance with 4 cores
and 16GiB of memory, comparable to a commodity laptop (2 hyperthreads per core
throughout).
We configure the network connection between the client and log
service to have a 20ms RTT and a bandwidth of 100 Mbps.

\subsection{End-user cost}
\label{sec:eval:fido2}

We show \sys authentication latency and communication costs for FIDO2,
TOTP, and passwords.

\subsubsection{FIDO2}

\paragraph{Latency.}
The client for our FIDO2 scheme can complete authentication in 
303ms with a single CPU core, or 117ms when using eight cores (\cref{fig:latency}).
Loading a webpage often takes a few seconds because of network latency,
so the client cost of \sys authentication is minor by comparison.
The client's running time during authentication 
is independent of the number of relying parties.
The heaviest part of the client's computation is proving to the log
service that its encrypted log entry is well formed.

At enrollment, the client must generate many ``presignatures,'' 
which it later uses to run our authentication protocol with the log.
Generating 10,000 presignatures for 10,000 future FIDO2 authentications takes 885ms.
When the client runs out of presignatures, it generates new presignatures
it can use after a waiting period (see \cref{sec:two-ecdsa}).

\paragraph{Communication.}
During enrollment, the client must send the log 1.8MiB worth
of presignatures.
Thereafter, each authentication attempt requires 1.73MiB worth
of communication: the bulk of this consists of the client's zero-knowledge
proof of correctness, and 352B of it comes from the signature protocol.
By using a different zero-knowledge proof system, we could 
reduce communication cost at the expense of increasing client computation cost.

\paragraph{Comparison to existing two-party ECDSA.}
For comparison, a state-of-the-art two-party ECDSA protocol~\cite{XAXYC21}
that does not require presignatures from the client and uses Paillier requires
226ms of computation at signing time (the authors' measurements exclude network
latency, which we estimate would add 60ms) and 6.3KiB of per-signature communication.
Using a variant of the same protocol based on oblivious transfer~\cite{XAXYC21}
makes it possible to reduce computation
to 2.8ms (again, excluding network latency, which we estimate would add 60ms)
at the cost of increasing per-signature communication to 90.9KiB.
In contrast, our signing protocol only requires 0.5KiB per-signature communication (including
the log presignature and the signing messages) and takes 61ms time at signing, almost all 
of which is due to network latency and can be run in parallel with proving and
verifying as the computational overhead is minimal (roughly 1ms).

\subsubsection{TOTP}

\paragraph{Latency.}
In \cref{fig:latency} (right), we show how TOTP authentication latency
increases with the number of relying parties the user registered with.
Because we implement TOTP authentication using garbled
circuits~\cite{WRK17}, we can split authentication into 
two phases: an ``offline'', input-independent phase and an ``online'',
input-dependent phase (the log service and client communicate
in both phases).
Both phases are performed once per authentication.
However, the offline phase can be performed in advance of when the user
needs to authenticate to their account, and so it does not affect
the latency that the user experiences.
For 20 relying parties, the online time is 91ms and the offline time
is 1.23s.
For 100 relying parties, the online time is 120ms and the offline time
is 1.39s.

\paragraph{Communication.}
Communication costs for our TOTP authentication scheme are large:
for 20 relying parties, the total communication cost is
65MiB, and for 100 relying parties, the total communication is 93MiB.
The online communication costs are much smaller: for 20 relying
parties, the online communication is 202KiB and for 100 relying parties,
the online communication is 908KiB. 
We envision clients running the offline phase in the background while
they have good connectivity.
While these communication costs are much higher than those associated with
FIDO2 or passwords, we expect users to authenticate with TOTP less
frequently because TOTP is only used for second-factor authentication.

\subsubsection{Passwords}

\paragraph{Latency.}
In \cref{fig:latency} (center), we show how password authentication latency
increases with the number of registered relying parties.
With 16 relying parties, authentication takes 28ms, and
with 512 relying parties, it takes 245ms:
the authentication time grows linearly with the number of relying parties.
The proof system we use requires padding the number of relying parties
to the nearest power of two, meaning that registering at additional
relying parties does not affect the latency or communication until the
number of relying parties reaches the next power of two.

\paragraph{Communication.}
In \cref{fig:pw-comm}, we show how communication increases logarithmically
with the number of relying parties.
This behavior is due to the fact that proof size is logarithmic in the
number of relying parties.
With 16 relying parties, the communication is 1.47KiB, and with 512 relying
parties, it is 4.14KiB.

\begin{figure*}[t]
    \centering
    \begingroup \makeatletter \begin{pgfpicture}\pgfpathrectangle{\pgfpointorigin}{\pgfqpoint{3.067041in}{0.180556in}}\pgfusepath{use as bounding box, clip}\begin{pgfscope}\pgfsetbuttcap \pgfsetmiterjoin \definecolor{currentfill}{rgb}{1.000000,1.000000,1.000000}\pgfsetfillcolor{currentfill}\pgfsetlinewidth{0.000000pt}\definecolor{currentstroke}{rgb}{1.000000,1.000000,1.000000}\pgfsetstrokecolor{currentstroke}\pgfsetdash{}{0pt}\pgfpathmoveto{\pgfqpoint{0.000000in}{0.000000in}}\pgfpathlineto{\pgfqpoint{3.067041in}{0.000000in}}\pgfpathlineto{\pgfqpoint{3.067041in}{0.180556in}}\pgfpathlineto{\pgfqpoint{0.000000in}{0.180556in}}\pgfpathlineto{\pgfqpoint{0.000000in}{0.000000in}}\pgfpathclose \pgfusepath{fill}\end{pgfscope}\begin{pgfscope}\pgfsetbuttcap \pgfsetmiterjoin \definecolor{currentfill}{rgb}{1.000000,1.000000,1.000000}\pgfsetfillcolor{currentfill}\pgfsetfillopacity{0.800000}\pgfsetlinewidth{1.003750pt}\definecolor{currentstroke}{rgb}{0.800000,0.800000,0.800000}\pgfsetstrokecolor{currentstroke}\pgfsetstrokeopacity{0.800000}\pgfsetdash{}{0pt}\pgfpathmoveto{\pgfqpoint{0.020833in}{0.000000in}}\pgfpathlineto{\pgfqpoint{3.046208in}{0.000000in}}\pgfpathquadraticcurveto{\pgfqpoint{3.067041in}{0.000000in}}{\pgfqpoint{3.067041in}{0.020833in}}\pgfpathlineto{\pgfqpoint{3.067041in}{0.159722in}}\pgfpathquadraticcurveto{\pgfqpoint{3.067041in}{0.180556in}}{\pgfqpoint{3.046208in}{0.180556in}}\pgfpathlineto{\pgfqpoint{0.020833in}{0.180556in}}\pgfpathquadraticcurveto{\pgfqpoint{0.000000in}{0.180556in}}{\pgfqpoint{0.000000in}{0.159722in}}\pgfpathlineto{\pgfqpoint{0.000000in}{0.020833in}}\pgfpathquadraticcurveto{\pgfqpoint{0.000000in}{0.000000in}}{\pgfqpoint{0.020833in}{0.000000in}}\pgfpathlineto{\pgfqpoint{0.020833in}{0.000000in}}\pgfpathclose \pgfusepath{stroke,fill}\end{pgfscope}\begin{pgfscope}\pgfsetbuttcap \pgfsetmiterjoin \definecolor{currentfill}{rgb}{0.596078,0.305882,0.639216}\pgfsetfillcolor{currentfill}\pgfsetlinewidth{1.003750pt}\definecolor{currentstroke}{rgb}{0.596078,0.305882,0.639216}\pgfsetstrokecolor{currentstroke}\pgfsetdash{}{0pt}\pgfpathmoveto{\pgfqpoint{0.041667in}{0.065972in}}\pgfpathlineto{\pgfqpoint{0.250000in}{0.065972in}}\pgfpathlineto{\pgfqpoint{0.250000in}{0.138889in}}\pgfpathlineto{\pgfqpoint{0.041667in}{0.138889in}}\pgfpathlineto{\pgfqpoint{0.041667in}{0.065972in}}\pgfpathclose \pgfusepath{stroke,fill}\end{pgfscope}\begin{pgfscope}\definecolor{textcolor}{rgb}{0.000000,0.000000,0.000000}\pgfsetstrokecolor{textcolor}\pgfsetfillcolor{textcolor}\pgftext[x=0.333333in,y=0.065972in,left,base]{\color{textcolor}\rmfamily\fontsize{7.500000}{9.000000}\selectfont Verify (Server)}\end{pgfscope}\begin{pgfscope}\pgfsetbuttcap \pgfsetmiterjoin \definecolor{currentfill}{rgb}{1.000000,0.498039,0.000000}\pgfsetfillcolor{currentfill}\pgfsetlinewidth{1.003750pt}\definecolor{currentstroke}{rgb}{1.000000,0.498039,0.000000}\pgfsetstrokecolor{currentstroke}\pgfsetdash{}{0pt}\pgfpathmoveto{\pgfqpoint{1.258202in}{0.065972in}}\pgfpathlineto{\pgfqpoint{1.466535in}{0.065972in}}\pgfpathlineto{\pgfqpoint{1.466535in}{0.138889in}}\pgfpathlineto{\pgfqpoint{1.258202in}{0.138889in}}\pgfpathlineto{\pgfqpoint{1.258202in}{0.065972in}}\pgfpathclose \pgfusepath{stroke,fill}\end{pgfscope}\begin{pgfscope}\definecolor{textcolor}{rgb}{0.000000,0.000000,0.000000}\pgfsetstrokecolor{textcolor}\pgfsetfillcolor{textcolor}\pgftext[x=1.549869in,y=0.065972in,left,base]{\color{textcolor}\rmfamily\fontsize{7.500000}{9.000000}\selectfont Other}\end{pgfscope}\begin{pgfscope}\pgfsetbuttcap \pgfsetmiterjoin \definecolor{currentfill}{rgb}{0.215686,0.494118,0.721569}\pgfsetfillcolor{currentfill}\pgfsetlinewidth{1.003750pt}\definecolor{currentstroke}{rgb}{0.215686,0.494118,0.721569}\pgfsetstrokecolor{currentstroke}\pgfsetdash{}{0pt}\pgfpathmoveto{\pgfqpoint{2.040997in}{0.065972in}}\pgfpathlineto{\pgfqpoint{2.249330in}{0.065972in}}\pgfpathlineto{\pgfqpoint{2.249330in}{0.138889in}}\pgfpathlineto{\pgfqpoint{2.040997in}{0.138889in}}\pgfpathlineto{\pgfqpoint{2.040997in}{0.065972in}}\pgfpathclose \pgfusepath{stroke,fill}\end{pgfscope}\begin{pgfscope}\definecolor{textcolor}{rgb}{0.000000,0.000000,0.000000}\pgfsetstrokecolor{textcolor}\pgfsetfillcolor{textcolor}\pgftext[x=2.332663in,y=0.065972in,left,base]{\color{textcolor}\rmfamily\fontsize{7.500000}{9.000000}\selectfont Prove (Client)}\end{pgfscope}\end{pgfpicture}\makeatother \endgroup  ~~~~~~~~~~~~~~~~~~~
    \begingroup \makeatletter \begin{pgfpicture}\pgfpathrectangle{\pgfpointorigin}{\pgfqpoint{1.516978in}{0.175154in}}\pgfusepath{use as bounding box, clip}\begin{pgfscope}\pgfsetbuttcap \pgfsetmiterjoin \definecolor{currentfill}{rgb}{1.000000,1.000000,1.000000}\pgfsetfillcolor{currentfill}\pgfsetlinewidth{0.000000pt}\definecolor{currentstroke}{rgb}{1.000000,1.000000,1.000000}\pgfsetstrokecolor{currentstroke}\pgfsetdash{}{0pt}\pgfpathmoveto{\pgfqpoint{0.000000in}{0.000000in}}\pgfpathlineto{\pgfqpoint{1.516978in}{0.000000in}}\pgfpathlineto{\pgfqpoint{1.516978in}{0.175154in}}\pgfpathlineto{\pgfqpoint{0.000000in}{0.175154in}}\pgfpathlineto{\pgfqpoint{0.000000in}{0.000000in}}\pgfpathclose \pgfusepath{fill}\end{pgfscope}\begin{pgfscope}\pgfsetbuttcap \pgfsetmiterjoin \definecolor{currentfill}{rgb}{1.000000,1.000000,1.000000}\pgfsetfillcolor{currentfill}\pgfsetfillopacity{0.800000}\pgfsetlinewidth{1.003750pt}\definecolor{currentstroke}{rgb}{0.800000,0.800000,0.800000}\pgfsetstrokecolor{currentstroke}\pgfsetstrokeopacity{0.800000}\pgfsetdash{}{0pt}\pgfpathmoveto{\pgfqpoint{0.020833in}{0.000000in}}\pgfpathlineto{\pgfqpoint{1.496145in}{0.000000in}}\pgfpathquadraticcurveto{\pgfqpoint{1.516978in}{0.000000in}}{\pgfqpoint{1.516978in}{0.020833in}}\pgfpathlineto{\pgfqpoint{1.516978in}{0.154321in}}\pgfpathquadraticcurveto{\pgfqpoint{1.516978in}{0.175154in}}{\pgfqpoint{1.496145in}{0.175154in}}\pgfpathlineto{\pgfqpoint{0.020833in}{0.175154in}}\pgfpathquadraticcurveto{\pgfqpoint{0.000000in}{0.175154in}}{\pgfqpoint{0.000000in}{0.154321in}}\pgfpathlineto{\pgfqpoint{0.000000in}{0.020833in}}\pgfpathquadraticcurveto{\pgfqpoint{0.000000in}{0.000000in}}{\pgfqpoint{0.020833in}{0.000000in}}\pgfpathlineto{\pgfqpoint{0.020833in}{0.000000in}}\pgfpathclose \pgfusepath{stroke,fill}\end{pgfscope}\begin{pgfscope}\pgfsetbuttcap \pgfsetmiterjoin \definecolor{currentfill}{rgb}{0.301961,0.686275,0.290196}\pgfsetfillcolor{currentfill}\pgfsetlinewidth{1.003750pt}\definecolor{currentstroke}{rgb}{0.301961,0.686275,0.290196}\pgfsetstrokecolor{currentstroke}\pgfsetdash{}{0pt}\pgfpathmoveto{\pgfqpoint{0.041667in}{0.060571in}}\pgfpathlineto{\pgfqpoint{0.250000in}{0.060571in}}\pgfpathlineto{\pgfqpoint{0.250000in}{0.133488in}}\pgfpathlineto{\pgfqpoint{0.041667in}{0.133488in}}\pgfpathlineto{\pgfqpoint{0.041667in}{0.060571in}}\pgfpathclose \pgfusepath{stroke,fill}\end{pgfscope}\begin{pgfscope}\definecolor{textcolor}{rgb}{0.000000,0.000000,0.000000}\pgfsetstrokecolor{textcolor}\pgfsetfillcolor{textcolor}\pgftext[x=0.333333in,y=0.060571in,left,base]{\color{textcolor}\rmfamily\fontsize{7.500000}{9.000000}\selectfont Offline}\end{pgfscope}\begin{pgfscope}\pgfsetbuttcap \pgfsetmiterjoin \definecolor{currentfill}{rgb}{0.901961,0.670588,0.007843}\pgfsetfillcolor{currentfill}\pgfsetlinewidth{1.003750pt}\definecolor{currentstroke}{rgb}{0.901961,0.670588,0.007843}\pgfsetstrokecolor{currentstroke}\pgfsetdash{}{0pt}\pgfpathmoveto{\pgfqpoint{0.863427in}{0.060571in}}\pgfpathlineto{\pgfqpoint{1.071761in}{0.060571in}}\pgfpathlineto{\pgfqpoint{1.071761in}{0.133488in}}\pgfpathlineto{\pgfqpoint{0.863427in}{0.133488in}}\pgfpathlineto{\pgfqpoint{0.863427in}{0.060571in}}\pgfpathclose \pgfusepath{stroke,fill}\end{pgfscope}\begin{pgfscope}\definecolor{textcolor}{rgb}{0.000000,0.000000,0.000000}\pgfsetstrokecolor{textcolor}\pgfsetfillcolor{textcolor}\pgftext[x=1.155094in,y=0.060571in,left,base]{\color{textcolor}\rmfamily\fontsize{7.500000}{9.000000}\selectfont Online}\end{pgfscope}\end{pgfpicture}\makeatother \endgroup  \\
    \begingroup \makeatletter \begin{pgfpicture}\pgfpathrectangle{\pgfpointorigin}{\pgfqpoint{2.100000in}{1.300000in}}\pgfusepath{use as bounding box, clip}\begin{pgfscope}\pgfsetbuttcap \pgfsetmiterjoin \definecolor{currentfill}{rgb}{1.000000,1.000000,1.000000}\pgfsetfillcolor{currentfill}\pgfsetlinewidth{0.000000pt}\definecolor{currentstroke}{rgb}{1.000000,1.000000,1.000000}\pgfsetstrokecolor{currentstroke}\pgfsetdash{}{0pt}\pgfpathmoveto{\pgfqpoint{0.000000in}{0.000000in}}\pgfpathlineto{\pgfqpoint{2.100000in}{0.000000in}}\pgfpathlineto{\pgfqpoint{2.100000in}{1.300000in}}\pgfpathlineto{\pgfqpoint{0.000000in}{1.300000in}}\pgfpathlineto{\pgfqpoint{0.000000in}{0.000000in}}\pgfpathclose \pgfusepath{fill}\end{pgfscope}\begin{pgfscope}\pgfsetbuttcap \pgfsetmiterjoin \definecolor{currentfill}{rgb}{1.000000,1.000000,1.000000}\pgfsetfillcolor{currentfill}\pgfsetlinewidth{0.000000pt}\definecolor{currentstroke}{rgb}{0.000000,0.000000,0.000000}\pgfsetstrokecolor{currentstroke}\pgfsetstrokeopacity{0.000000}\pgfsetdash{}{0pt}\pgfpathmoveto{\pgfqpoint{0.392364in}{0.464507in}}\pgfpathlineto{\pgfqpoint{2.100000in}{0.464507in}}\pgfpathlineto{\pgfqpoint{2.100000in}{1.300000in}}\pgfpathlineto{\pgfqpoint{0.392364in}{1.300000in}}\pgfpathlineto{\pgfqpoint{0.392364in}{0.464507in}}\pgfpathclose \pgfusepath{fill}\end{pgfscope}\begin{pgfscope}\pgfpathrectangle{\pgfqpoint{0.392364in}{0.464507in}}{\pgfqpoint{1.707636in}{0.835493in}}\pgfusepath{clip}\pgfsetbuttcap \pgfsetroundjoin \definecolor{currentfill}{rgb}{0.596078,0.305882,0.639216}\pgfsetfillcolor{currentfill}\pgfsetlinewidth{0.000000pt}\definecolor{currentstroke}{rgb}{0.000000,0.000000,0.000000}\pgfsetstrokecolor{currentstroke}\pgfsetdash{}{0pt}\pgfpathmoveto{\pgfqpoint{0.469983in}{0.556420in}}\pgfpathlineto{\pgfqpoint{0.469983in}{0.464507in}}\pgfpathlineto{\pgfqpoint{0.691754in}{0.464507in}}\pgfpathlineto{\pgfqpoint{1.135296in}{0.464507in}}\pgfpathlineto{\pgfqpoint{2.022380in}{0.464507in}}\pgfpathlineto{\pgfqpoint{2.022380in}{0.556420in}}\pgfpathlineto{\pgfqpoint{2.022380in}{0.556420in}}\pgfpathlineto{\pgfqpoint{1.135296in}{0.556420in}}\pgfpathlineto{\pgfqpoint{0.691754in}{0.556420in}}\pgfpathlineto{\pgfqpoint{0.469983in}{0.556420in}}\pgfpathlineto{\pgfqpoint{0.469983in}{0.556420in}}\pgfpathclose \pgfusepath{fill}\end{pgfscope}\begin{pgfscope}\pgfpathrectangle{\pgfqpoint{0.392364in}{0.464507in}}{\pgfqpoint{1.707636in}{0.835493in}}\pgfusepath{clip}\pgfsetbuttcap \pgfsetroundjoin \definecolor{currentfill}{rgb}{1.000000,0.498039,0.000000}\pgfsetfillcolor{currentfill}\pgfsetlinewidth{0.000000pt}\definecolor{currentstroke}{rgb}{0.000000,0.000000,0.000000}\pgfsetstrokecolor{currentstroke}\pgfsetdash{}{0pt}\pgfpathmoveto{\pgfqpoint{0.469983in}{0.669342in}}\pgfpathlineto{\pgfqpoint{0.469983in}{0.556420in}}\pgfpathlineto{\pgfqpoint{0.691754in}{0.556420in}}\pgfpathlineto{\pgfqpoint{1.135296in}{0.556420in}}\pgfpathlineto{\pgfqpoint{2.022380in}{0.556420in}}\pgfpathlineto{\pgfqpoint{2.022380in}{0.640455in}}\pgfpathlineto{\pgfqpoint{2.022380in}{0.640455in}}\pgfpathlineto{\pgfqpoint{1.135296in}{0.635203in}}\pgfpathlineto{\pgfqpoint{0.691754in}{0.677221in}}\pgfpathlineto{\pgfqpoint{0.469983in}{0.669342in}}\pgfpathlineto{\pgfqpoint{0.469983in}{0.669342in}}\pgfpathclose \pgfusepath{fill}\end{pgfscope}\begin{pgfscope}\pgfpathrectangle{\pgfqpoint{0.392364in}{0.464507in}}{\pgfqpoint{1.707636in}{0.835493in}}\pgfusepath{clip}\pgfsetbuttcap \pgfsetroundjoin \definecolor{currentfill}{rgb}{0.215686,0.494118,0.721569}\pgfsetfillcolor{currentfill}\pgfsetlinewidth{0.000000pt}\definecolor{currentstroke}{rgb}{0.000000,0.000000,0.000000}\pgfsetstrokecolor{currentstroke}\pgfsetdash{}{0pt}\pgfpathmoveto{\pgfqpoint{0.469983in}{1.260215in}}\pgfpathlineto{\pgfqpoint{0.469983in}{0.669342in}}\pgfpathlineto{\pgfqpoint{0.691754in}{0.677221in}}\pgfpathlineto{\pgfqpoint{1.135296in}{0.635203in}}\pgfpathlineto{\pgfqpoint{2.022380in}{0.640455in}}\pgfpathlineto{\pgfqpoint{2.022380in}{0.771760in}}\pgfpathlineto{\pgfqpoint{2.022380in}{0.771760in}}\pgfpathlineto{\pgfqpoint{1.135296in}{0.861048in}}\pgfpathlineto{\pgfqpoint{0.691754in}{0.994979in}}\pgfpathlineto{\pgfqpoint{0.469983in}{1.260215in}}\pgfpathlineto{\pgfqpoint{0.469983in}{1.260215in}}\pgfpathclose \pgfusepath{fill}\end{pgfscope}\begin{pgfscope}\pgfsetrectcap \pgfsetmiterjoin \pgfsetlinewidth{0.501875pt}\definecolor{currentstroke}{rgb}{0.000000,0.000000,0.000000}\pgfsetstrokecolor{currentstroke}\pgfsetdash{}{0pt}\pgfpathmoveto{\pgfqpoint{0.392364in}{0.464507in}}\pgfpathlineto{\pgfqpoint{0.392364in}{1.300000in}}\pgfusepath{stroke}\end{pgfscope}\begin{pgfscope}\pgfsetrectcap \pgfsetmiterjoin \pgfsetlinewidth{0.501875pt}\definecolor{currentstroke}{rgb}{0.000000,0.000000,0.000000}\pgfsetstrokecolor{currentstroke}\pgfsetdash{}{0pt}\pgfpathmoveto{\pgfqpoint{0.392364in}{0.464507in}}\pgfpathlineto{\pgfqpoint{2.100000in}{0.464507in}}\pgfusepath{stroke}\end{pgfscope}\begin{pgfscope}\definecolor{textcolor}{rgb}{0.000000,0.000000,0.000000}\pgfsetstrokecolor{textcolor}\pgfsetfillcolor{textcolor}\pgftext[x=0.469983in,y=0.367284in,,top]{\color{textcolor}\rmfamily\fontsize{8.000000}{9.600000}\selectfont \(\displaystyle {1}\)}\end{pgfscope}\begin{pgfscope}\definecolor{textcolor}{rgb}{0.000000,0.000000,0.000000}\pgfsetstrokecolor{textcolor}\pgfsetfillcolor{textcolor}\pgftext[x=0.691754in,y=0.367284in,,top]{\color{textcolor}\rmfamily\fontsize{8.000000}{9.600000}\selectfont \(\displaystyle {2}\)}\end{pgfscope}\begin{pgfscope}\definecolor{textcolor}{rgb}{0.000000,0.000000,0.000000}\pgfsetstrokecolor{textcolor}\pgfsetfillcolor{textcolor}\pgftext[x=1.135296in,y=0.367284in,,top]{\color{textcolor}\rmfamily\fontsize{8.000000}{9.600000}\selectfont \(\displaystyle {4}\)}\end{pgfscope}\begin{pgfscope}\definecolor{textcolor}{rgb}{0.000000,0.000000,0.000000}\pgfsetstrokecolor{textcolor}\pgfsetfillcolor{textcolor}\pgftext[x=2.022380in,y=0.367284in,,top]{\color{textcolor}\rmfamily\fontsize{8.000000}{9.600000}\selectfont \(\displaystyle {8}\)}\end{pgfscope}\begin{pgfscope}\definecolor{textcolor}{rgb}{0.000000,0.000000,0.000000}\pgfsetstrokecolor{textcolor}\pgfsetfillcolor{textcolor}\pgftext[x=0.922788in, y=0.135803in, left, base]{\color{textcolor}\rmfamily\fontsize{8.000000}{9.600000}\selectfont Client cores }\end{pgfscope}\begin{pgfscope}\definecolor{textcolor}{rgb}{0.000000,0.000000,0.000000}\pgfsetstrokecolor{textcolor}\pgfsetfillcolor{textcolor}\pgftext[x=1.040740in, y=0.021605in, left, base]{\color{textcolor}\rmfamily\fontsize{8.000000}{9.600000}\selectfont  \textbf{FIDO2}}\end{pgfscope}\begin{pgfscope}\pgfpathrectangle{\pgfqpoint{0.392364in}{0.464507in}}{\pgfqpoint{1.707636in}{0.835493in}}\pgfusepath{clip}\pgfsetbuttcap \pgfsetroundjoin \pgfsetlinewidth{0.803000pt}\definecolor{currentstroke}{rgb}{0.900000,0.900000,0.900000}\pgfsetstrokecolor{currentstroke}\pgfsetdash{{0.800000pt}{1.320000pt}}{0.000000pt}\pgfpathmoveto{\pgfqpoint{0.392364in}{0.464507in}}\pgfpathlineto{\pgfqpoint{2.100000in}{0.464507in}}\pgfusepath{stroke}\end{pgfscope}\begin{pgfscope}\definecolor{textcolor}{rgb}{0.000000,0.000000,0.000000}\pgfsetstrokecolor{textcolor}\pgfsetfillcolor{textcolor}\pgftext[x=0.284724in, y=0.425926in, left, base]{\color{textcolor}\rmfamily\fontsize{8.000000}{9.600000}\selectfont \(\displaystyle {0}\)}\end{pgfscope}\begin{pgfscope}\pgfpathrectangle{\pgfqpoint{0.392364in}{0.464507in}}{\pgfqpoint{1.707636in}{0.835493in}}\pgfusepath{clip}\pgfsetbuttcap \pgfsetroundjoin \pgfsetlinewidth{0.803000pt}\definecolor{currentstroke}{rgb}{0.900000,0.900000,0.900000}\pgfsetstrokecolor{currentstroke}\pgfsetdash{{0.800000pt}{1.320000pt}}{0.000000pt}\pgfpathmoveto{\pgfqpoint{0.392364in}{0.727116in}}\pgfpathlineto{\pgfqpoint{2.100000in}{0.727116in}}\pgfusepath{stroke}\end{pgfscope}\begin{pgfscope}\definecolor{textcolor}{rgb}{0.000000,0.000000,0.000000}\pgfsetstrokecolor{textcolor}\pgfsetfillcolor{textcolor}\pgftext[x=0.166667in, y=0.688536in, left, base]{\color{textcolor}\rmfamily\fontsize{8.000000}{9.600000}\selectfont \(\displaystyle {100}\)}\end{pgfscope}\begin{pgfscope}\pgfpathrectangle{\pgfqpoint{0.392364in}{0.464507in}}{\pgfqpoint{1.707636in}{0.835493in}}\pgfusepath{clip}\pgfsetbuttcap \pgfsetroundjoin \pgfsetlinewidth{0.803000pt}\definecolor{currentstroke}{rgb}{0.900000,0.900000,0.900000}\pgfsetstrokecolor{currentstroke}\pgfsetdash{{0.800000pt}{1.320000pt}}{0.000000pt}\pgfpathmoveto{\pgfqpoint{0.392364in}{0.989726in}}\pgfpathlineto{\pgfqpoint{2.100000in}{0.989726in}}\pgfusepath{stroke}\end{pgfscope}\begin{pgfscope}\definecolor{textcolor}{rgb}{0.000000,0.000000,0.000000}\pgfsetstrokecolor{textcolor}\pgfsetfillcolor{textcolor}\pgftext[x=0.166667in, y=0.951146in, left, base]{\color{textcolor}\rmfamily\fontsize{8.000000}{9.600000}\selectfont \(\displaystyle {200}\)}\end{pgfscope}\begin{pgfscope}\pgfpathrectangle{\pgfqpoint{0.392364in}{0.464507in}}{\pgfqpoint{1.707636in}{0.835493in}}\pgfusepath{clip}\pgfsetbuttcap \pgfsetroundjoin \pgfsetlinewidth{0.803000pt}\definecolor{currentstroke}{rgb}{0.900000,0.900000,0.900000}\pgfsetstrokecolor{currentstroke}\pgfsetdash{{0.800000pt}{1.320000pt}}{0.000000pt}\pgfpathmoveto{\pgfqpoint{0.392364in}{1.252336in}}\pgfpathlineto{\pgfqpoint{2.100000in}{1.252336in}}\pgfusepath{stroke}\end{pgfscope}\begin{pgfscope}\definecolor{textcolor}{rgb}{0.000000,0.000000,0.000000}\pgfsetstrokecolor{textcolor}\pgfsetfillcolor{textcolor}\pgftext[x=0.166667in, y=1.213756in, left, base]{\color{textcolor}\rmfamily\fontsize{8.000000}{9.600000}\selectfont \(\displaystyle {300}\)}\end{pgfscope}\begin{pgfscope}\definecolor{textcolor}{rgb}{0.000000,0.000000,0.000000}\pgfsetstrokecolor{textcolor}\pgfsetfillcolor{textcolor}\pgftext[x=0.111111in,y=0.882253in,,bottom,rotate=90.000000]{\color{textcolor}\rmfamily\fontsize{8.000000}{9.600000}\selectfont Auth time (ms)}\end{pgfscope}\end{pgfpicture}\makeatother \endgroup      \begingroup \makeatletter \begin{pgfpicture}\pgfpathrectangle{\pgfpointorigin}{\pgfqpoint{2.103119in}{1.311572in}}\pgfusepath{use as bounding box, clip}\begin{pgfscope}\pgfsetbuttcap \pgfsetmiterjoin \definecolor{currentfill}{rgb}{1.000000,1.000000,1.000000}\pgfsetfillcolor{currentfill}\pgfsetlinewidth{0.000000pt}\definecolor{currentstroke}{rgb}{1.000000,1.000000,1.000000}\pgfsetstrokecolor{currentstroke}\pgfsetdash{}{0pt}\pgfpathmoveto{\pgfqpoint{0.000000in}{0.000000in}}\pgfpathlineto{\pgfqpoint{2.103119in}{0.000000in}}\pgfpathlineto{\pgfqpoint{2.103119in}{1.311572in}}\pgfpathlineto{\pgfqpoint{0.000000in}{1.311572in}}\pgfpathlineto{\pgfqpoint{0.000000in}{0.000000in}}\pgfpathclose \pgfusepath{fill}\end{pgfscope}\begin{pgfscope}\pgfsetbuttcap \pgfsetmiterjoin \definecolor{currentfill}{rgb}{1.000000,1.000000,1.000000}\pgfsetfillcolor{currentfill}\pgfsetlinewidth{0.000000pt}\definecolor{currentstroke}{rgb}{0.000000,0.000000,0.000000}\pgfsetstrokecolor{currentstroke}\pgfsetstrokeopacity{0.000000}\pgfsetdash{}{0pt}\pgfpathmoveto{\pgfqpoint{0.317568in}{0.464507in}}\pgfpathlineto{\pgfqpoint{2.103119in}{0.464507in}}\pgfpathlineto{\pgfqpoint{2.103119in}{1.297117in}}\pgfpathlineto{\pgfqpoint{0.317568in}{1.297117in}}\pgfpathlineto{\pgfqpoint{0.317568in}{0.464507in}}\pgfpathclose \pgfusepath{fill}\end{pgfscope}\begin{pgfscope}\pgfpathrectangle{\pgfqpoint{0.317568in}{0.464507in}}{\pgfqpoint{1.785551in}{0.832611in}}\pgfusepath{clip}\pgfsetbuttcap \pgfsetroundjoin \definecolor{currentfill}{rgb}{1.000000,0.498039,0.000000}\pgfsetfillcolor{currentfill}\pgfsetlinewidth{0.000000pt}\definecolor{currentstroke}{rgb}{0.000000,0.000000,0.000000}\pgfsetstrokecolor{currentstroke}\pgfsetdash{}{0pt}\pgfpathmoveto{\pgfqpoint{0.398729in}{0.534360in}}\pgfpathlineto{\pgfqpoint{0.398729in}{0.464507in}}\pgfpathlineto{\pgfqpoint{0.398729in}{0.464507in}}\pgfpathlineto{\pgfqpoint{0.405095in}{0.464507in}}\pgfpathlineto{\pgfqpoint{0.405095in}{0.464507in}}\pgfpathlineto{\pgfqpoint{0.417826in}{0.464507in}}\pgfpathlineto{\pgfqpoint{0.417826in}{0.464507in}}\pgfpathlineto{\pgfqpoint{0.443288in}{0.464507in}}\pgfpathlineto{\pgfqpoint{0.443288in}{0.464507in}}\pgfpathlineto{\pgfqpoint{0.494213in}{0.464507in}}\pgfpathlineto{\pgfqpoint{0.494213in}{0.464507in}}\pgfpathlineto{\pgfqpoint{0.596063in}{0.464507in}}\pgfpathlineto{\pgfqpoint{0.596063in}{0.464507in}}\pgfpathlineto{\pgfqpoint{0.799762in}{0.464507in}}\pgfpathlineto{\pgfqpoint{0.799762in}{0.464507in}}\pgfpathlineto{\pgfqpoint{1.207160in}{0.464507in}}\pgfpathlineto{\pgfqpoint{1.207160in}{0.464507in}}\pgfpathlineto{\pgfqpoint{2.021957in}{0.464507in}}\pgfpathlineto{\pgfqpoint{2.021957in}{0.534036in}}\pgfpathlineto{\pgfqpoint{2.021957in}{0.534036in}}\pgfpathlineto{\pgfqpoint{1.207160in}{0.534036in}}\pgfpathlineto{\pgfqpoint{1.207160in}{0.533713in}}\pgfpathlineto{\pgfqpoint{0.799762in}{0.533713in}}\pgfpathlineto{\pgfqpoint{0.799762in}{0.535007in}}\pgfpathlineto{\pgfqpoint{0.596063in}{0.535007in}}\pgfpathlineto{\pgfqpoint{0.596063in}{0.533390in}}\pgfpathlineto{\pgfqpoint{0.494213in}{0.533390in}}\pgfpathlineto{\pgfqpoint{0.494213in}{0.532743in}}\pgfpathlineto{\pgfqpoint{0.443288in}{0.532743in}}\pgfpathlineto{\pgfqpoint{0.443288in}{0.533390in}}\pgfpathlineto{\pgfqpoint{0.417826in}{0.533390in}}\pgfpathlineto{\pgfqpoint{0.417826in}{0.532419in}}\pgfpathlineto{\pgfqpoint{0.405095in}{0.532419in}}\pgfpathlineto{\pgfqpoint{0.405095in}{0.533713in}}\pgfpathlineto{\pgfqpoint{0.398729in}{0.533713in}}\pgfpathlineto{\pgfqpoint{0.398729in}{0.534360in}}\pgfpathlineto{\pgfqpoint{0.398729in}{0.534360in}}\pgfpathclose \pgfusepath{fill}\end{pgfscope}\begin{pgfscope}\pgfpathrectangle{\pgfqpoint{0.317568in}{0.464507in}}{\pgfqpoint{1.785551in}{0.832611in}}\pgfusepath{clip}\pgfsetbuttcap \pgfsetroundjoin \definecolor{currentfill}{rgb}{0.596078,0.305882,0.639216}\pgfsetfillcolor{currentfill}\pgfsetlinewidth{0.000000pt}\definecolor{currentstroke}{rgb}{0.000000,0.000000,0.000000}\pgfsetstrokecolor{currentstroke}\pgfsetdash{}{0pt}\pgfpathmoveto{\pgfqpoint{0.398729in}{0.537594in}}\pgfpathlineto{\pgfqpoint{0.398729in}{0.534360in}}\pgfpathlineto{\pgfqpoint{0.398729in}{0.533713in}}\pgfpathlineto{\pgfqpoint{0.405095in}{0.533713in}}\pgfpathlineto{\pgfqpoint{0.405095in}{0.532419in}}\pgfpathlineto{\pgfqpoint{0.417826in}{0.532419in}}\pgfpathlineto{\pgfqpoint{0.417826in}{0.533390in}}\pgfpathlineto{\pgfqpoint{0.443288in}{0.533390in}}\pgfpathlineto{\pgfqpoint{0.443288in}{0.532743in}}\pgfpathlineto{\pgfqpoint{0.494213in}{0.532743in}}\pgfpathlineto{\pgfqpoint{0.494213in}{0.533390in}}\pgfpathlineto{\pgfqpoint{0.596063in}{0.533390in}}\pgfpathlineto{\pgfqpoint{0.596063in}{0.535007in}}\pgfpathlineto{\pgfqpoint{0.799762in}{0.535007in}}\pgfpathlineto{\pgfqpoint{0.799762in}{0.533713in}}\pgfpathlineto{\pgfqpoint{1.207160in}{0.533713in}}\pgfpathlineto{\pgfqpoint{1.207160in}{0.534036in}}\pgfpathlineto{\pgfqpoint{2.021957in}{0.534036in}}\pgfpathlineto{\pgfqpoint{2.021957in}{0.750387in}}\pgfpathlineto{\pgfqpoint{2.021957in}{0.750387in}}\pgfpathlineto{\pgfqpoint{1.207160in}{0.750387in}}\pgfpathlineto{\pgfqpoint{1.207160in}{0.653045in}}\pgfpathlineto{\pgfqpoint{0.799762in}{0.653045in}}\pgfpathlineto{\pgfqpoint{0.799762in}{0.602919in}}\pgfpathlineto{\pgfqpoint{0.596063in}{0.602919in}}\pgfpathlineto{\pgfqpoint{0.596063in}{0.575431in}}\pgfpathlineto{\pgfqpoint{0.494213in}{0.575431in}}\pgfpathlineto{\pgfqpoint{0.494213in}{0.561848in}}\pgfpathlineto{\pgfqpoint{0.443288in}{0.561848in}}\pgfpathlineto{\pgfqpoint{0.443288in}{0.552793in}}\pgfpathlineto{\pgfqpoint{0.417826in}{0.552793in}}\pgfpathlineto{\pgfqpoint{0.417826in}{0.545355in}}\pgfpathlineto{\pgfqpoint{0.405095in}{0.545355in}}\pgfpathlineto{\pgfqpoint{0.405095in}{0.540181in}}\pgfpathlineto{\pgfqpoint{0.398729in}{0.540181in}}\pgfpathlineto{\pgfqpoint{0.398729in}{0.537594in}}\pgfpathlineto{\pgfqpoint{0.398729in}{0.537594in}}\pgfpathclose \pgfusepath{fill}\end{pgfscope}\begin{pgfscope}\pgfpathrectangle{\pgfqpoint{0.317568in}{0.464507in}}{\pgfqpoint{1.785551in}{0.832611in}}\pgfusepath{clip}\pgfsetbuttcap \pgfsetroundjoin \definecolor{currentfill}{rgb}{0.215686,0.494118,0.721569}\pgfsetfillcolor{currentfill}\pgfsetlinewidth{0.000000pt}\definecolor{currentstroke}{rgb}{0.000000,0.000000,0.000000}\pgfsetstrokecolor{currentstroke}\pgfsetdash{}{0pt}\pgfpathmoveto{\pgfqpoint{0.398729in}{0.540828in}}\pgfpathlineto{\pgfqpoint{0.398729in}{0.537594in}}\pgfpathlineto{\pgfqpoint{0.398729in}{0.540181in}}\pgfpathlineto{\pgfqpoint{0.405095in}{0.540181in}}\pgfpathlineto{\pgfqpoint{0.405095in}{0.545355in}}\pgfpathlineto{\pgfqpoint{0.417826in}{0.545355in}}\pgfpathlineto{\pgfqpoint{0.417826in}{0.552793in}}\pgfpathlineto{\pgfqpoint{0.443288in}{0.552793in}}\pgfpathlineto{\pgfqpoint{0.443288in}{0.561848in}}\pgfpathlineto{\pgfqpoint{0.494213in}{0.561848in}}\pgfpathlineto{\pgfqpoint{0.494213in}{0.575431in}}\pgfpathlineto{\pgfqpoint{0.596063in}{0.575431in}}\pgfpathlineto{\pgfqpoint{0.596063in}{0.602919in}}\pgfpathlineto{\pgfqpoint{0.799762in}{0.602919in}}\pgfpathlineto{\pgfqpoint{0.799762in}{0.653045in}}\pgfpathlineto{\pgfqpoint{1.207160in}{0.653045in}}\pgfpathlineto{\pgfqpoint{1.207160in}{0.750387in}}\pgfpathlineto{\pgfqpoint{2.021957in}{0.750387in}}\pgfpathlineto{\pgfqpoint{2.021957in}{1.257469in}}\pgfpathlineto{\pgfqpoint{2.021957in}{1.257469in}}\pgfpathlineto{\pgfqpoint{1.207160in}{1.257469in}}\pgfpathlineto{\pgfqpoint{1.207160in}{0.877158in}}\pgfpathlineto{\pgfqpoint{0.799762in}{0.877158in}}\pgfpathlineto{\pgfqpoint{0.799762in}{0.704788in}}\pgfpathlineto{\pgfqpoint{0.596063in}{0.704788in}}\pgfpathlineto{\pgfqpoint{0.596063in}{0.623940in}}\pgfpathlineto{\pgfqpoint{0.494213in}{0.623940in}}\pgfpathlineto{\pgfqpoint{0.494213in}{0.585779in}}\pgfpathlineto{\pgfqpoint{0.443288in}{0.585779in}}\pgfpathlineto{\pgfqpoint{0.443288in}{0.566376in}}\pgfpathlineto{\pgfqpoint{0.417826in}{0.566376in}}\pgfpathlineto{\pgfqpoint{0.417826in}{0.554410in}}\pgfpathlineto{\pgfqpoint{0.405095in}{0.554410in}}\pgfpathlineto{\pgfqpoint{0.405095in}{0.546649in}}\pgfpathlineto{\pgfqpoint{0.398729in}{0.546649in}}\pgfpathlineto{\pgfqpoint{0.398729in}{0.540828in}}\pgfpathlineto{\pgfqpoint{0.398729in}{0.540828in}}\pgfpathclose \pgfusepath{fill}\end{pgfscope}\begin{pgfscope}\pgfsetrectcap \pgfsetmiterjoin \pgfsetlinewidth{0.501875pt}\definecolor{currentstroke}{rgb}{0.000000,0.000000,0.000000}\pgfsetstrokecolor{currentstroke}\pgfsetdash{}{0pt}\pgfpathmoveto{\pgfqpoint{0.392364in}{0.464507in}}\pgfpathlineto{\pgfqpoint{0.392364in}{1.297117in}}\pgfusepath{stroke}\end{pgfscope}\begin{pgfscope}\pgfsetrectcap \pgfsetmiterjoin \pgfsetlinewidth{0.501875pt}\definecolor{currentstroke}{rgb}{0.000000,0.000000,0.000000}\pgfsetstrokecolor{currentstroke}\pgfsetdash{}{0pt}\pgfpathmoveto{\pgfqpoint{0.317568in}{0.464507in}}\pgfpathlineto{\pgfqpoint{2.103119in}{0.464507in}}\pgfusepath{stroke}\end{pgfscope}\begin{pgfscope}\definecolor{textcolor}{rgb}{0.000000,0.000000,0.000000}\pgfsetstrokecolor{textcolor}\pgfsetfillcolor{textcolor}\pgftext[x=0.392364in,y=0.367284in,,top]{\color{textcolor}\rmfamily\fontsize{8.000000}{9.600000}\selectfont \(\displaystyle {0}\)}\end{pgfscope}\begin{pgfscope}\definecolor{textcolor}{rgb}{0.000000,0.000000,0.000000}\pgfsetstrokecolor{textcolor}\pgfsetfillcolor{textcolor}\pgftext[x=0.710644in,y=0.367284in,,top]{\color{textcolor}\rmfamily\fontsize{8.000000}{9.600000}\selectfont \(\displaystyle {100}\)}\end{pgfscope}\begin{pgfscope}\definecolor{textcolor}{rgb}{0.000000,0.000000,0.000000}\pgfsetstrokecolor{textcolor}\pgfsetfillcolor{textcolor}\pgftext[x=1.028924in,y=0.367284in,,top]{\color{textcolor}\rmfamily\fontsize{8.000000}{9.600000}\selectfont \(\displaystyle {200}\)}\end{pgfscope}\begin{pgfscope}\definecolor{textcolor}{rgb}{0.000000,0.000000,0.000000}\pgfsetstrokecolor{textcolor}\pgfsetfillcolor{textcolor}\pgftext[x=1.347204in,y=0.367284in,,top]{\color{textcolor}\rmfamily\fontsize{8.000000}{9.600000}\selectfont \(\displaystyle {300}\)}\end{pgfscope}\begin{pgfscope}\definecolor{textcolor}{rgb}{0.000000,0.000000,0.000000}\pgfsetstrokecolor{textcolor}\pgfsetfillcolor{textcolor}\pgftext[x=1.665484in,y=0.367284in,,top]{\color{textcolor}\rmfamily\fontsize{8.000000}{9.600000}\selectfont \(\displaystyle {400}\)}\end{pgfscope}\begin{pgfscope}\definecolor{textcolor}{rgb}{0.000000,0.000000,0.000000}\pgfsetstrokecolor{textcolor}\pgfsetfillcolor{textcolor}\pgftext[x=1.983764in,y=0.367284in,,top]{\color{textcolor}\rmfamily\fontsize{8.000000}{9.600000}\selectfont \(\displaystyle {500}\)}\end{pgfscope}\begin{pgfscope}\definecolor{textcolor}{rgb}{0.000000,0.000000,0.000000}\pgfsetstrokecolor{textcolor}\pgfsetfillcolor{textcolor}\pgftext[x=0.800895in, y=0.135803in, left, base]{\color{textcolor}\rmfamily\fontsize{8.000000}{9.600000}\selectfont Relying parties }\end{pgfscope}\begin{pgfscope}\definecolor{textcolor}{rgb}{0.000000,0.000000,0.000000}\pgfsetstrokecolor{textcolor}\pgfsetfillcolor{textcolor}\pgftext[x=0.904982in, y=0.021605in, left, base]{\color{textcolor}\rmfamily\fontsize{8.000000}{9.600000}\selectfont  \textbf{Passwords}}\end{pgfscope}\begin{pgfscope}\pgfpathrectangle{\pgfqpoint{0.317568in}{0.464507in}}{\pgfqpoint{1.785551in}{0.832611in}}\pgfusepath{clip}\pgfsetbuttcap \pgfsetroundjoin \pgfsetlinewidth{0.803000pt}\definecolor{currentstroke}{rgb}{0.900000,0.900000,0.900000}\pgfsetstrokecolor{currentstroke}\pgfsetdash{{0.800000pt}{1.320000pt}}{0.000000pt}\pgfpathmoveto{\pgfqpoint{0.317568in}{0.464507in}}\pgfpathlineto{\pgfqpoint{2.103119in}{0.464507in}}\pgfusepath{stroke}\end{pgfscope}\begin{pgfscope}\definecolor{textcolor}{rgb}{0.000000,0.000000,0.000000}\pgfsetstrokecolor{textcolor}\pgfsetfillcolor{textcolor}\pgftext[x=0.284724in, y=0.425926in, left, base]{\color{textcolor}\rmfamily\fontsize{8.000000}{9.600000}\selectfont \(\displaystyle {0}\)}\end{pgfscope}\begin{pgfscope}\pgfpathrectangle{\pgfqpoint{0.317568in}{0.464507in}}{\pgfqpoint{1.785551in}{0.832611in}}\pgfusepath{clip}\pgfsetbuttcap \pgfsetroundjoin \pgfsetlinewidth{0.803000pt}\definecolor{currentstroke}{rgb}{0.900000,0.900000,0.900000}\pgfsetstrokecolor{currentstroke}\pgfsetdash{{0.800000pt}{1.320000pt}}{0.000000pt}\pgfpathmoveto{\pgfqpoint{0.317568in}{0.626204in}}\pgfpathlineto{\pgfqpoint{2.103119in}{0.626204in}}\pgfusepath{stroke}\end{pgfscope}\begin{pgfscope}\definecolor{textcolor}{rgb}{0.000000,0.000000,0.000000}\pgfsetstrokecolor{textcolor}\pgfsetfillcolor{textcolor}\pgftext[x=0.225695in, y=0.587623in, left, base]{\color{textcolor}\rmfamily\fontsize{8.000000}{9.600000}\selectfont \(\displaystyle {50}\)}\end{pgfscope}\begin{pgfscope}\pgfpathrectangle{\pgfqpoint{0.317568in}{0.464507in}}{\pgfqpoint{1.785551in}{0.832611in}}\pgfusepath{clip}\pgfsetbuttcap \pgfsetroundjoin \pgfsetlinewidth{0.803000pt}\definecolor{currentstroke}{rgb}{0.900000,0.900000,0.900000}\pgfsetstrokecolor{currentstroke}\pgfsetdash{{0.800000pt}{1.320000pt}}{0.000000pt}\pgfpathmoveto{\pgfqpoint{0.317568in}{0.787901in}}\pgfpathlineto{\pgfqpoint{2.103119in}{0.787901in}}\pgfusepath{stroke}\end{pgfscope}\begin{pgfscope}\definecolor{textcolor}{rgb}{0.000000,0.000000,0.000000}\pgfsetstrokecolor{textcolor}\pgfsetfillcolor{textcolor}\pgftext[x=0.166667in, y=0.749321in, left, base]{\color{textcolor}\rmfamily\fontsize{8.000000}{9.600000}\selectfont \(\displaystyle {100}\)}\end{pgfscope}\begin{pgfscope}\pgfpathrectangle{\pgfqpoint{0.317568in}{0.464507in}}{\pgfqpoint{1.785551in}{0.832611in}}\pgfusepath{clip}\pgfsetbuttcap \pgfsetroundjoin \pgfsetlinewidth{0.803000pt}\definecolor{currentstroke}{rgb}{0.900000,0.900000,0.900000}\pgfsetstrokecolor{currentstroke}\pgfsetdash{{0.800000pt}{1.320000pt}}{0.000000pt}\pgfpathmoveto{\pgfqpoint{0.317568in}{0.949598in}}\pgfpathlineto{\pgfqpoint{2.103119in}{0.949598in}}\pgfusepath{stroke}\end{pgfscope}\begin{pgfscope}\definecolor{textcolor}{rgb}{0.000000,0.000000,0.000000}\pgfsetstrokecolor{textcolor}\pgfsetfillcolor{textcolor}\pgftext[x=0.166667in, y=0.911018in, left, base]{\color{textcolor}\rmfamily\fontsize{8.000000}{9.600000}\selectfont \(\displaystyle {150}\)}\end{pgfscope}\begin{pgfscope}\pgfpathrectangle{\pgfqpoint{0.317568in}{0.464507in}}{\pgfqpoint{1.785551in}{0.832611in}}\pgfusepath{clip}\pgfsetbuttcap \pgfsetroundjoin \pgfsetlinewidth{0.803000pt}\definecolor{currentstroke}{rgb}{0.900000,0.900000,0.900000}\pgfsetstrokecolor{currentstroke}\pgfsetdash{{0.800000pt}{1.320000pt}}{0.000000pt}\pgfpathmoveto{\pgfqpoint{0.317568in}{1.111295in}}\pgfpathlineto{\pgfqpoint{2.103119in}{1.111295in}}\pgfusepath{stroke}\end{pgfscope}\begin{pgfscope}\definecolor{textcolor}{rgb}{0.000000,0.000000,0.000000}\pgfsetstrokecolor{textcolor}\pgfsetfillcolor{textcolor}\pgftext[x=0.166667in, y=1.072715in, left, base]{\color{textcolor}\rmfamily\fontsize{8.000000}{9.600000}\selectfont \(\displaystyle {200}\)}\end{pgfscope}\begin{pgfscope}\pgfpathrectangle{\pgfqpoint{0.317568in}{0.464507in}}{\pgfqpoint{1.785551in}{0.832611in}}\pgfusepath{clip}\pgfsetbuttcap \pgfsetroundjoin \pgfsetlinewidth{0.803000pt}\definecolor{currentstroke}{rgb}{0.900000,0.900000,0.900000}\pgfsetstrokecolor{currentstroke}\pgfsetdash{{0.800000pt}{1.320000pt}}{0.000000pt}\pgfpathmoveto{\pgfqpoint{0.317568in}{1.272992in}}\pgfpathlineto{\pgfqpoint{2.103119in}{1.272992in}}\pgfusepath{stroke}\end{pgfscope}\begin{pgfscope}\definecolor{textcolor}{rgb}{0.000000,0.000000,0.000000}\pgfsetstrokecolor{textcolor}\pgfsetfillcolor{textcolor}\pgftext[x=0.166667in, y=1.234412in, left, base]{\color{textcolor}\rmfamily\fontsize{8.000000}{9.600000}\selectfont \(\displaystyle {250}\)}\end{pgfscope}\begin{pgfscope}\definecolor{textcolor}{rgb}{0.000000,0.000000,0.000000}\pgfsetstrokecolor{textcolor}\pgfsetfillcolor{textcolor}\pgftext[x=0.111111in,y=0.880812in,,bottom,rotate=90.000000]{\color{textcolor}\rmfamily\fontsize{8.000000}{9.600000}\selectfont Auth time (ms)}\end{pgfscope}\end{pgfpicture}\makeatother \endgroup      \begingroup \makeatletter \begin{pgfpicture}\pgfpathrectangle{\pgfpointorigin}{\pgfqpoint{2.105915in}{1.300000in}}\pgfusepath{use as bounding box, clip}\begin{pgfscope}\pgfsetbuttcap \pgfsetmiterjoin \definecolor{currentfill}{rgb}{1.000000,1.000000,1.000000}\pgfsetfillcolor{currentfill}\pgfsetlinewidth{0.000000pt}\definecolor{currentstroke}{rgb}{1.000000,1.000000,1.000000}\pgfsetstrokecolor{currentstroke}\pgfsetdash{}{0pt}\pgfpathmoveto{\pgfqpoint{0.000000in}{0.000000in}}\pgfpathlineto{\pgfqpoint{2.105915in}{0.000000in}}\pgfpathlineto{\pgfqpoint{2.105915in}{1.300000in}}\pgfpathlineto{\pgfqpoint{0.000000in}{1.300000in}}\pgfpathlineto{\pgfqpoint{0.000000in}{0.000000in}}\pgfpathclose \pgfusepath{fill}\end{pgfscope}\begin{pgfscope}\pgfsetbuttcap \pgfsetmiterjoin \definecolor{currentfill}{rgb}{1.000000,1.000000,1.000000}\pgfsetfillcolor{currentfill}\pgfsetlinewidth{0.000000pt}\definecolor{currentstroke}{rgb}{0.000000,0.000000,0.000000}\pgfsetstrokecolor{currentstroke}\pgfsetstrokeopacity{0.000000}\pgfsetdash{}{0pt}\pgfpathmoveto{\pgfqpoint{0.366129in}{0.464507in}}\pgfpathlineto{\pgfqpoint{2.096003in}{0.464507in}}\pgfpathlineto{\pgfqpoint{2.096003in}{1.300000in}}\pgfpathlineto{\pgfqpoint{0.366129in}{1.300000in}}\pgfpathlineto{\pgfqpoint{0.366129in}{0.464507in}}\pgfpathclose \pgfusepath{fill}\end{pgfscope}\begin{pgfscope}\pgfpathrectangle{\pgfqpoint{0.366129in}{0.464507in}}{\pgfqpoint{1.729874in}{0.835493in}}\pgfusepath{clip}\pgfsetbuttcap \pgfsetroundjoin \definecolor{currentfill}{rgb}{0.301961,0.686275,0.290196}\pgfsetfillcolor{currentfill}\pgfsetlinewidth{0.000000pt}\definecolor{currentstroke}{rgb}{0.000000,0.000000,0.000000}\pgfsetstrokecolor{currentstroke}\pgfsetdash{}{0pt}\pgfpathmoveto{\pgfqpoint{0.444759in}{1.112705in}}\pgfpathlineto{\pgfqpoint{0.444759in}{0.464507in}}\pgfpathlineto{\pgfqpoint{0.837913in}{0.464507in}}\pgfpathlineto{\pgfqpoint{1.231066in}{0.464507in}}\pgfpathlineto{\pgfqpoint{1.624219in}{0.464507in}}\pgfpathlineto{\pgfqpoint{2.017372in}{0.464507in}}\pgfpathlineto{\pgfqpoint{2.017372in}{1.196769in}}\pgfpathlineto{\pgfqpoint{2.017372in}{1.196769in}}\pgfpathlineto{\pgfqpoint{1.624219in}{1.182494in}}\pgfpathlineto{\pgfqpoint{1.231066in}{1.165047in}}\pgfpathlineto{\pgfqpoint{0.837913in}{1.159760in}}\pgfpathlineto{\pgfqpoint{0.444759in}{1.112705in}}\pgfpathlineto{\pgfqpoint{0.444759in}{1.112705in}}\pgfpathclose \pgfusepath{fill}\end{pgfscope}\begin{pgfscope}\pgfpathrectangle{\pgfqpoint{0.366129in}{0.464507in}}{\pgfqpoint{1.729874in}{0.835493in}}\pgfusepath{clip}\pgfsetbuttcap \pgfsetroundjoin \definecolor{currentfill}{rgb}{0.901961,0.670588,0.007843}\pgfsetfillcolor{currentfill}\pgfsetlinewidth{0.000000pt}\definecolor{currentstroke}{rgb}{0.000000,0.000000,0.000000}\pgfsetstrokecolor{currentstroke}\pgfsetdash{}{0pt}\pgfpathmoveto{\pgfqpoint{0.444759in}{1.160817in}}\pgfpathlineto{\pgfqpoint{0.444759in}{1.112705in}}\pgfpathlineto{\pgfqpoint{0.837913in}{1.159760in}}\pgfpathlineto{\pgfqpoint{1.231066in}{1.165047in}}\pgfpathlineto{\pgfqpoint{1.624219in}{1.182494in}}\pgfpathlineto{\pgfqpoint{2.017372in}{1.196769in}}\pgfpathlineto{\pgfqpoint{2.017372in}{1.260215in}}\pgfpathlineto{\pgfqpoint{2.017372in}{1.260215in}}\pgfpathlineto{\pgfqpoint{1.624219in}{1.244882in}}\pgfpathlineto{\pgfqpoint{1.231066in}{1.223734in}}\pgfpathlineto{\pgfqpoint{0.837913in}{1.209987in}}\pgfpathlineto{\pgfqpoint{0.444759in}{1.160817in}}\pgfpathlineto{\pgfqpoint{0.444759in}{1.160817in}}\pgfpathclose \pgfusepath{fill}\end{pgfscope}\begin{pgfscope}\pgfsetrectcap \pgfsetmiterjoin \pgfsetlinewidth{0.501875pt}\definecolor{currentstroke}{rgb}{0.000000,0.000000,0.000000}\pgfsetstrokecolor{currentstroke}\pgfsetdash{}{0pt}\pgfpathmoveto{\pgfqpoint{0.366129in}{0.464507in}}\pgfpathlineto{\pgfqpoint{0.366129in}{1.300000in}}\pgfusepath{stroke}\end{pgfscope}\begin{pgfscope}\pgfsetrectcap \pgfsetmiterjoin \pgfsetlinewidth{0.501875pt}\definecolor{currentstroke}{rgb}{0.000000,0.000000,0.000000}\pgfsetstrokecolor{currentstroke}\pgfsetdash{}{0pt}\pgfpathmoveto{\pgfqpoint{0.366129in}{0.464507in}}\pgfpathlineto{\pgfqpoint{2.096003in}{0.464507in}}\pgfusepath{stroke}\end{pgfscope}\begin{pgfscope}\definecolor{textcolor}{rgb}{0.000000,0.000000,0.000000}\pgfsetstrokecolor{textcolor}\pgfsetfillcolor{textcolor}\pgftext[x=0.444759in,y=0.367284in,,top]{\color{textcolor}\rmfamily\fontsize{8.000000}{9.600000}\selectfont \(\displaystyle {20}\)}\end{pgfscope}\begin{pgfscope}\definecolor{textcolor}{rgb}{0.000000,0.000000,0.000000}\pgfsetstrokecolor{textcolor}\pgfsetfillcolor{textcolor}\pgftext[x=0.837913in,y=0.367284in,,top]{\color{textcolor}\rmfamily\fontsize{8.000000}{9.600000}\selectfont \(\displaystyle {40}\)}\end{pgfscope}\begin{pgfscope}\definecolor{textcolor}{rgb}{0.000000,0.000000,0.000000}\pgfsetstrokecolor{textcolor}\pgfsetfillcolor{textcolor}\pgftext[x=1.231066in,y=0.367284in,,top]{\color{textcolor}\rmfamily\fontsize{8.000000}{9.600000}\selectfont \(\displaystyle {60}\)}\end{pgfscope}\begin{pgfscope}\definecolor{textcolor}{rgb}{0.000000,0.000000,0.000000}\pgfsetstrokecolor{textcolor}\pgfsetfillcolor{textcolor}\pgftext[x=1.624219in,y=0.367284in,,top]{\color{textcolor}\rmfamily\fontsize{8.000000}{9.600000}\selectfont \(\displaystyle {80}\)}\end{pgfscope}\begin{pgfscope}\definecolor{textcolor}{rgb}{0.000000,0.000000,0.000000}\pgfsetstrokecolor{textcolor}\pgfsetfillcolor{textcolor}\pgftext[x=2.017372in,y=0.367284in,,top]{\color{textcolor}\rmfamily\fontsize{8.000000}{9.600000}\selectfont \(\displaystyle {100}\)}\end{pgfscope}\begin{pgfscope}\definecolor{textcolor}{rgb}{0.000000,0.000000,0.000000}\pgfsetstrokecolor{textcolor}\pgfsetfillcolor{textcolor}\pgftext[x=0.821617in, y=0.135803in, left, base]{\color{textcolor}\rmfamily\fontsize{8.000000}{9.600000}\selectfont Relying parties }\end{pgfscope}\begin{pgfscope}\definecolor{textcolor}{rgb}{0.000000,0.000000,0.000000}\pgfsetstrokecolor{textcolor}\pgfsetfillcolor{textcolor}\pgftext[x=1.038857in, y=0.021605in, left, base]{\color{textcolor}\rmfamily\fontsize{8.000000}{9.600000}\selectfont  \textbf{TOTP}}\end{pgfscope}\begin{pgfscope}\pgfpathrectangle{\pgfqpoint{0.366129in}{0.464507in}}{\pgfqpoint{1.729874in}{0.835493in}}\pgfusepath{clip}\pgfsetbuttcap \pgfsetroundjoin \pgfsetlinewidth{0.803000pt}\definecolor{currentstroke}{rgb}{0.900000,0.900000,0.900000}\pgfsetstrokecolor{currentstroke}\pgfsetdash{{0.800000pt}{1.320000pt}}{0.000000pt}\pgfpathmoveto{\pgfqpoint{0.366129in}{0.464507in}}\pgfpathlineto{\pgfqpoint{2.096003in}{0.464507in}}\pgfusepath{stroke}\end{pgfscope}\begin{pgfscope}\definecolor{textcolor}{rgb}{0.000000,0.000000,0.000000}\pgfsetstrokecolor{textcolor}\pgfsetfillcolor{textcolor}\pgftext[x=0.166667in, y=0.425926in, left, base]{\color{textcolor}\rmfamily\fontsize{8.000000}{9.600000}\selectfont \(\displaystyle {0.0}\)}\end{pgfscope}\begin{pgfscope}\pgfpathrectangle{\pgfqpoint{0.366129in}{0.464507in}}{\pgfqpoint{1.729874in}{0.835493in}}\pgfusepath{clip}\pgfsetbuttcap \pgfsetroundjoin \pgfsetlinewidth{0.803000pt}\definecolor{currentstroke}{rgb}{0.900000,0.900000,0.900000}\pgfsetstrokecolor{currentstroke}\pgfsetdash{{0.800000pt}{1.320000pt}}{0.000000pt}\pgfpathmoveto{\pgfqpoint{0.366129in}{0.728861in}}\pgfpathlineto{\pgfqpoint{2.096003in}{0.728861in}}\pgfusepath{stroke}\end{pgfscope}\begin{pgfscope}\definecolor{textcolor}{rgb}{0.000000,0.000000,0.000000}\pgfsetstrokecolor{textcolor}\pgfsetfillcolor{textcolor}\pgftext[x=0.166667in, y=0.690281in, left, base]{\color{textcolor}\rmfamily\fontsize{8.000000}{9.600000}\selectfont \(\displaystyle {0.5}\)}\end{pgfscope}\begin{pgfscope}\pgfpathrectangle{\pgfqpoint{0.366129in}{0.464507in}}{\pgfqpoint{1.729874in}{0.835493in}}\pgfusepath{clip}\pgfsetbuttcap \pgfsetroundjoin \pgfsetlinewidth{0.803000pt}\definecolor{currentstroke}{rgb}{0.900000,0.900000,0.900000}\pgfsetstrokecolor{currentstroke}\pgfsetdash{{0.800000pt}{1.320000pt}}{0.000000pt}\pgfpathmoveto{\pgfqpoint{0.366129in}{0.993216in}}\pgfpathlineto{\pgfqpoint{2.096003in}{0.993216in}}\pgfusepath{stroke}\end{pgfscope}\begin{pgfscope}\definecolor{textcolor}{rgb}{0.000000,0.000000,0.000000}\pgfsetstrokecolor{textcolor}\pgfsetfillcolor{textcolor}\pgftext[x=0.166667in, y=0.954636in, left, base]{\color{textcolor}\rmfamily\fontsize{8.000000}{9.600000}\selectfont \(\displaystyle {1.0}\)}\end{pgfscope}\begin{pgfscope}\pgfpathrectangle{\pgfqpoint{0.366129in}{0.464507in}}{\pgfqpoint{1.729874in}{0.835493in}}\pgfusepath{clip}\pgfsetbuttcap \pgfsetroundjoin \pgfsetlinewidth{0.803000pt}\definecolor{currentstroke}{rgb}{0.900000,0.900000,0.900000}\pgfsetstrokecolor{currentstroke}\pgfsetdash{{0.800000pt}{1.320000pt}}{0.000000pt}\pgfpathmoveto{\pgfqpoint{0.366129in}{1.257571in}}\pgfpathlineto{\pgfqpoint{2.096003in}{1.257571in}}\pgfusepath{stroke}\end{pgfscope}\begin{pgfscope}\definecolor{textcolor}{rgb}{0.000000,0.000000,0.000000}\pgfsetstrokecolor{textcolor}\pgfsetfillcolor{textcolor}\pgftext[x=0.166667in, y=1.218991in, left, base]{\color{textcolor}\rmfamily\fontsize{8.000000}{9.600000}\selectfont \(\displaystyle {1.5}\)}\end{pgfscope}\begin{pgfscope}\definecolor{textcolor}{rgb}{0.000000,0.000000,0.000000}\pgfsetstrokecolor{textcolor}\pgfsetfillcolor{textcolor}\pgftext[x=0.111111in,y=0.882253in,,bottom,rotate=90.000000]{\color{textcolor}\rmfamily\fontsize{8.000000}{9.600000}\selectfont Auth time (s)}\end{pgfscope}\end{pgfpicture}\makeatother \endgroup      \caption{On the left, \sys FIDO2 latency decreases as the number of client
    cores increases (latency is independent of the number of relying parties).
    In the center, \sys password latency grows with the number of relying parties,
    with the majority of the time spent on client proof generation.
    On the right, \sys TOTP latency grows with the number of relying parties,
    with the majority of the time spent in an input-independent ``offline'' phase
    as opposed to the input-dependent ``online'' phase (both phases require
    network communication).}
    \label{fig:latency}
\end{figure*}

\subsection{Cost to deploy a \sys service}
\label{sec:eval:cost}

If successful, \sys can become much simpler and more efficient with a
little support from future FIDO specifications (see
Section~\ref{sec:discuss}).  Nonetheless, we show \sys is already
practical by analyzing the cost of deploying a \sys service today
(\cref{tab:eval-overview}).  We expect a \sys log service to perform
many password-based authentications, some FIDO2 authentications, and a
comparatively small number of TOTP authentications.  This is because
the majority of relying parties only support passwords, and relying
parties typically require second-factor authentication only from time
to time.

Throughout this section, we consider password-based authentication with 128
relying parties (based on the fact that the average user has roughly 100
passwords~\cite{nordpass-study}) and TOTP-based authentication with 20 relying parties
(based on the fact that Yubikey's maximum number of TOTP registrations is
32~\cite{yubikey-totp-max}).
The authentication overhead of FIDO2 in \sys is independent of the number of relying
parties the user has registered with.

\paragraph{Storage.}
For each of the three protocols, the log service must store authentication
records (timestamp, ciphertext, and signature).
Authentication records are 104B for FIDO2, 88B for TOTP, and 138B for
passwords (differences are due to ciphertext size).
The FIDO2 protocol additionally requires the client to generate presignatures
for the log, each of which is 192B.
For 10K presignatures, the log service must store 1.83MiB.
In \cref{fig:storage-cost} (left), we show how per-client log storage actually decreases
as presignatures are consumed and replaced by authentication records.
To minimize storage costs, the log service can encrypt its presignatures and
store them at the client.
The log service then simply needs to keep a counter to prevent presignature
re-use.

\paragraph{Throughput.}
In \cref{tab:eval-overview}, we show the number of auths/s a single
log service core can support assuming 128 passwords and 20 TOTP accounts.
We achieve the highest throughput for passwords (47.62 auths/cores/s), which
are the most common authentication mechanism.
For FIDO2, which can be used as either a first or second authentication
factor and is supported by fewer relying parties than passwords, we
achieve 6.18 auths/core/s.
Finally, for TOTP, which is only used as a second factor, we achieve
0.73 auths/core/s.

Our FIDO2 protocol can be instantiated with any NIZK proof system to achieve
a different tradeoff between authentication latency and log service throughput.
For example, we instantiate our system with ZKBoo, but could also use
Groth16~\cite{Groth16} to reduce communication and verifier time (increasing
log throughput).
We measure the performance of Groth16 on our \sys FIDO2 circuit on
the BN-128 curve using ZoKrates~\cite{zokrates} with libsnark~\cite{libsnark}
with a single core (we only measure the overhead of SHA-256, which dominates circuit
cost, to provide a performance lower bound).
While the verifier time is much lower (8ms) and the proof is much smaller
(4.26KiB), (1) the trusted setup requires the client to store 19.86MiB and the
log service to store 9.2MiB per client, and (2) the proving time is 4.07s, meaning that
authentication latency is much higher.

\paragraph{Cost.}
We now quantify the cost of running a \sys log service.
The cost of one core on a c5 instance is \$0.0425-\$0.085/hour depending
on instance size~\cite{ec2-costs}.
Data transfer to AWS instances is free, and data transfer from 
AWS instances costs \$0.05-\$0.09/GB
depending on the amount of data transferred per month~\cite{ec2-costs}.
In \cref{tab:eval-overview}, we show the cost of supporting
10M authentications for each authentication method with \sys.

Supporting 10M authentications requires 450 log core hours for FIDO2,
3,832 log core hours for TOTP, and 59 log core hours for passwords.
Compute for 10M authentications costs \$19.13-\$38.25 for FIDO2,
\$162.86-\$325.72 for TOTP, and \$2.51-\$5.02 for passwords.
Communication for 10M authentications costs \$0.10-\$0.19 for FIDO2,
\$17,923-\$32,262 for TOTP, and \$0.015-\$0.027 for passwords.
The high cost for TOTP is due to the large amount of communication
required at authentication: the log service must send the client 36.8MiB
for every authentication.
In both the FIDO2 and password protocols, the vast majority of the communication
overhead is due to the proof sent from the client to the log service,
which incurs no monetary cost.
We show how cost increases with the number of authentications for each
of the the authentication methods in \cref{fig:storage-cost} (right).

TOTP is substantially more expensive than FIDO2 or passwords.
However, we expect a relatively small fraction of authentication
requests to be for TOTP.

\begin{figure}[t]
\centering
    \begingroup \makeatletter 

    }
    \caption{Costs for \sys with FIDO2, TOTP (20 relying parties),
    and passwords (128 relying parties). We take the cost of one core
    on a c5 instance to be \$0.0425-\$0.085/hour (depending on instance
    size) and data transfer out of AWS to cost \$0.05-\$0.09/GB (depending
    on amount of data transferred)~\cite{ec2-costs}.
    For comparison, the Argon2 password hash function should take 0.5s using
    2 cores.}
    \label{tab:eval-overview}
\end{table}

 \section{Discussion}
\label{sec:discuss}

\paragraph{Deployment strategy.}
Because \sys supports passwords, TOTP, and FIDO2, people can use it
with the vast majority of web services.  In addition, \sys offers
users many of the benefits of FIDO2 without a dedicated hardware
security token, particularly FIDO2's protection against phishing.  The
flexibility for users to choose log services can foster an ecosystem of
new security products, such as log services that request login
confirmation via a mobile phone app, apps that monitor the log to
notify users of anomalous behavior, or enterprise security products
that monitor access to arbitrary third-party services that a 
company could contract with.

\paragraph{FIDO improvements.}
\Sys can benefit from enhancements we hope to see considered for
future versions of the FIDO specification.  One simple improvement
would be to support BLS signatures, which are easier to threshold and
so eliminate \sys's need for presignatures~\cite{bonehbls}.

Future versions of FIDO could also directly support secure client-side
logging by allowing the relying party to compute the encrypted log
record itself. The relying party could then ensure that the log
service receives the correct encrypted log record by checking for
the log record in the signing payload.
Specifically, the signature payload could have the form:
$$\Hash(\textsf{log-record-ciphertext},
\Hash(\textsf{remaining-FIDO-data}))~.$$
The log server can then take the outer hash preimage as input without
needing to verify anything else about the log record.

We want to allow the relying party to generate the encrypted log
record without making it possible to link users across relying parties.
Instead of giving the relying party the user's public key directly
at registration, which would link a user's identity across relying parties,
we instead give the relying party a key-private, re-randomizable
encryption of the relying party's identifier (we can achieve this using
ElGamal encryption). At authentication, the
relying party can re-randomize the ciphertext to generate the
encrypted log record.

We also hope that future FIDO revisions standardize and promote
authentication metadata as part of the challenge and hypothetical log
record field.  For users with multiple accounts at one relying party,
it would be useful to include account names as well as relying party
names in signed payloads.  It would furthermore improve security to
allow distinct types of authentication log records for different
security-sensitive operations such as authorizing payments and
changing or removing 2FA on an account.  An app monitoring a user's
log can then immediately notify the user of such operations.

\paragraph{Multiple devices.}
Clients need to authenticate to their accounts across multiple
devices, which requires synchronizing a small amount of dynamic,
secret state across devices.  Cross-device state could be stored
encrypted at the log, or could be disseminated through existing
profile synchronization mechanisms in browsers.  There is a danger of
the synchronization mechanism maliciously convincing two devices to
use the same presignature.  Therefore, presignatures should be
partitioned between devices in advance, and devices should employ
techniques such as fork consistency~\cite{mazieres:sundr-podc} to
detect and deter any rollback attacks.
Existing tools can help a user recover if she loses all
of her devices~\cite{safetypin,apple-key-vault,google-key-vault,signal-recovery}.

\paragraph{Enforcing client-specific policies.}
We can extend \sys in a straightforward way to allow the log to
enforce more complex policies on authentications.
The client could submit a policy at enrollment time, and the log service could then
enforce this policy for subsequent authentications.
If the policy decision is based on public information, the log service 
can apply the policy directly (e.g., rate-limiting, sending push notifications
to a client's mobile device).
Other policies could be based on private information.
For example, if we used \sys for cryptocurrency wallets, the log could
enforce a policy such as ``deny transactions sending more than
\$10K to addresses that are not on the allowlist.''
For policies based on private information, 
the client could send the log service a commitment to the policy at enrollment,
and the log service could then enforce the policy by running a two-party
computation or checking a zero-knowledge proof.

\paragraph{Revocation and migration.}
If a client loses her device or wants to migrate her authentication secrets from
an old device to a new device, she needs a way to easily and remotely
invalidate the secrets on the old device.
\Sys allows her to do this easily.
To migrate credentials to a new device, the client and log simply 
re-share the authentication secrets.
To invalidate the secrets on the old device, the client asks
the log to delete the old secret shares (client must authenticate with
the log first).

\paragraph{Account recovery.}
In the event that a client loses all of her devices, she needs some way
to recover her \sys account. To ensure that she can later recover her
account, the client can encrypt her \sys client state under a key derived
from her password and store the ciphertext with the \sys service.
The security of the backup is only as good as the security of the client's
password.
Alternatively, the client could choose a random key to encrypt her client
state and then back up this key using her password and secure hardware
in order to defend against password-guessing attacks~\cite{safetypin}.

\paragraph{Limitations.}
If an attacker
compromises the client's account with the log, the attacker can access
the client's entire authentication history.
To mitigate this damage, the log could delete old authentication
records (e.g., records older than one week) or re-encrypt them under
a key that the user keeps offline.

 \section{Related work}
\label{sec:rel}

\paragraph{Privacy-preserving single sign-on.}
Prior work has explored how to protect user privacy in single sign-on systems.
BrowserID~\cite{browserID} (previously implemented in Mozilla Persona and Firefox
Accounts), SPRESSO~\cite{spresso}, EL PASSO~\cite{elpasso}, UnlimitID~\cite{UnlimitID},
UPPRESSO~\cite{uppresso}, PseudoID~\cite{pseudoID}, and Hammann et al.~\cite{HSB20}
all aim to hide user login patterns from the single sign-on server. 
Several of these systems~\cite{UnlimitID,pseudoID,elpasso,HSB20} are compatible with existing
single sign-on protocols for incremental deployment.
However, with the exception of UnlimitID~\cite{UnlimitID}, these systems do not protect
user accounts from a malicious attacker that compromises the single sign-on server.
None of these systems privately log the identity of the relying party.

Separately, Privacy Pass allows a user to obtain anonymous tokens for completing CAPTCHAs, which
she can then spend at different relying parties without allowing them
to link her across sites~\cite{privacy-pass}.
Like \sys, Privacy Pass does not link users across accounts, but unlike \sys, Privacy Pass 
does not provide a mechanism for logging authentications.

\paragraph{Threshold signing.}
Our two-party ECDSA with preprocessing protocol builds on prior work
on threshold ECDSA.
MacKenzie and Reiter proposed the first threshold ECDSA protocol for a
dishonest majority specific to the two-party setting~\cite{MR01}.
Genarro et al.\ \cite{GGN16} and Lindell~\cite{lindell-ecdsa} subsequently
improved on this protocol.
Doerner et al.\ show how to achieve two-party threshold ECDSA without 
additional assumptions~\cite{DKLS18}.
Another line of work supports threshold ECDSA using generic multi-party computation over finite
fields~\cite{ST19,dnssec-ecdsa}.
A number of works show how to split ECDSA signature generation into online and offline phases
\cite{DJN+20, CGG+20, GS21, GS22, GKSS20, CMP20, XAXYC21, ecdsa-pcgs}; in many, the offline
phase is signing-key-specific, allowing for a non-interactive online
signing phase, whereas we need an offline phase that is signing-key-independent.
Abram et al.\ show how to reduce the bandwidth of the offline phase via pseudorandom
correlation generators~\cite{ecdsa-pcgs}.
Aumasson et al.\ provide a survey of prior work on threshold ECDSA
\cite{ecdsa-survey}.
Arora et al.\ show how to split trust across a group of FIDO
authenticators to enable account recovery using a new group signature
scheme~\cite{fido-group-sigs}.

\paragraph{Proving properties of encrypted data.}
\Sys's split-secret authentication protocol for FIDO2 and passwords relies on
proving properties of encrypted data, which is also explored in prior work.
Verifiable encryption was first proposed by Stadler~\cite{Sta96}, and Camenisch
and Damgard introduced it as a well-defined primitive~\cite{CD00}.
Subsequent work has designed verifiable encryption schemes for
limited classes of relations (e.g. discrete logarithms)~\cite{CS03, Ate04, YAS+12, LN17, NRSW20}.
Takahashi and Zaverucha introduced a generic compiler for MPC-in-the-head-based verifiable
encryption~\cite{TZ21}.
Lee et al.~\cite{SAVER} contribute a SNARK-based verifiable encryption scheme that
decouples the encryption function from the circuit
by using a commit-and-prove SNARK~\cite{legoSNARK}.
This approach does not work for us for FIDO2 authentication because the ciphertext must be connected to a
SHA-256 digest.

Grubbs et al.\ introduce zero-knowledge middleboxes, which enforce properties on encrypted
data using SNARKs~\cite{zkmbs}.
Wang et al.\ show how to build blind certificate authorities, enabling a certificate
authority to validate an identity and generate a certificate for it without 
learning the identity~\cite{blind-CAs}.
DECO allows users to prove that a piece of data accessed via TLS came from a particular website
and, optionally, prove statements about the data in zero-knowledge~\cite{DECO}.

\paragraph{Transparency logs.}
Like \sys, transparency logs detect attacks rather than prevent them, and they
achieve this by maintaining a log recording sensitive
actions~\cite{CONIKS,trillian,merkle2,seemless,ct,wave,safetypin}.
However, transparency logs traditionally maintain public, global state.
For example, the certificate transparency log records what certificates were issued
and by whom in order to track when certificates were issued incorrectly~\cite{ct}.
In contrast, the \sys log service maintains encrypted, per-user state about
individual users' authentication history.

 \section{Conclusion}

\Sys is an authentication manager that logs every successful
authentication to any of a user's accounts on a third-party log
service.  It guarantees log integrity without trusting clients.  It
furthermore guarantees account security and privacy without trusting
the log service.  \Sys works with any existing service supporting
FIDO2, TOTP, or password-based login.
Our evaluation shows the implementation is
practical and cost-effective.

\paragraph{Acknowledgements.}
We thank the anonymous reviewers and our shepherd Ittay Eyal for
their feedback. We also thank Raluca Ada Popa for her support, 
Albert Kwon for discussions about EdDSA, and
members of the Berkeley Sky security group and MIT PDOS group
for giving feedback that improved the presentation of this paper.
The RISELab and Sky Lab are supported by NSF CISE Expeditions
Award CCF-1730628 and gifts from
the Sloan Foundation, Alibaba, Amazon Web Services, Ant
Group, Ericsson, Facebook, Futurewei, Google, Intel, Microsoft,
Nvidia, Scotiabank, Splunk, and VMware.
Emma Dauterman was supported by an NSF Graduate Research Fellowship and a
Microsoft Ada Lovelace Research Fellowship.
This work was funded in part by the Stanford Future of Digital
Currency Initiative as well as gifts from
Capital One, Facebook, Google, Mozilla, Seagate, and MIT's
FinTech@CSAIL Initiative.
We also received support under NSF Awards CNS-2054869 and CNS-2225441.

{\bibliography{refs}

\begin{thebibliography}{10}

\bibitem{ec2-costs}
{Amazon EC2 On-Demand Pricing}.
\newblock \url{https://aws.amazon.com/ec2/pricing/on-demand/}, accessed
  December 7, 2022.

\bibitem{fido2-interop}
{FIDO Device Onboarding Conformance Server}.
\newblock \url{https://github.com/fido-alliance/iot-fdo-conformance-tools},
  Accessed June 26, 2024.

\bibitem{fido2-server-reqs}
Server requirements and transport binding profile, 2018.
\newblock
  \url{https://fidoalliance.org/specs/fido-v2.0-rd-20180702/fido-server-v2.0-rd-20180702.html},
  Accessed June 26, 2024.

\bibitem{yubikey-totp-max}
{How many accounts can I register my YubiKey with?}, 2020.
\newblock
  \url{https://support.yubico.com/hc/en-us/articles/360013790319-How-many-accounts-can-I-register-my-YubiKey-with-FIDO2}.

\bibitem{passkeys}
{Passkeys}.
\newblock Google, 2022.
\newblock \url{https://developers.google.com/identity/passkeys}.

\bibitem{ecdsa-pcgs}
Damiano Abram, Ariel Nof, Claudio Orlandi, Peter Scholl, and Omer Shlomovits.
\newblock Low-bandwidth threshold {ECDSA} via pseudorandom correlation
  generators.
\newblock In {\em IEEE Security \& Privacy}, 2022.

\bibitem{wave}
Michael~P Andersen, Sam Kumar, Moustafa AbdelBaky, Gabe Fierro, John Kolb,
  Hyung-Sin Kim, David~E Culler, and Raluca~Ada Popa.
\newblock {WAVE: A decentralized authorization framework with transitive
  delegation}.
\newblock In {\em USENIX Security}, 2019.

\bibitem{fido-group-sigs}
Sunpreet~S Arora, Saikrishna Badrinarayanan, Srinivasan Raghuraman, Maliheh
  Shirvanian, Kim Wagner, and Gaven Watson.
\newblock Avoiding lock outs: Proactive fido account recovery using managerless
  group signatures.
\newblock {\em Cryptology ePrint Archive}, 2022.

\bibitem{Ate04}
Giuseppe Ateniese.
\newblock Verifiable encryption of digital signatures and applications.
\newblock {\em ACM Transactions on Information and System Security (TISSEC)},
  7(1):1--20, 2004.

\bibitem{ecdsa-survey}
Jean-Philippe Aumasson, Adrian Hamelink, and Omer Shlomovits.
\newblock A survey of ecdsa threshold signing.
\newblock {\em Cryptology ePrint Archive}, 2020.

\bibitem{Beaver91}
Donald Beaver.
\newblock Efficient multiparty protocols using circuit randomization.
\newblock In {\em CRYPTO}, 1991.

\bibitem{ROM}
Mihir Bellare and Phillip Rogaway.
\newblock Random oracles are practical: A paradigm for designing efficient
  protocols.
\newblock In {\em {CCS}}, pages 62--73, 1993.

\bibitem{BGW88}
M~Ben-Or, S~Goldwasser, and A~Wigderson.
\newblock Completeness theorems for non-cryptographic fault-tolerant
  distributed computing.
\newblock In {\em STOC}, pages 1--10, 1988.

\bibitem{BCCT12}
Nir Bitansky, Ran Canetti, Alessandro Chiesa, and Eran Tromer.
\newblock From extractable collision resistance to succinct non-interactive
  arguments of knowledge, and back again.
\newblock In {\em ITCS}, pages 326--349, 2012.

\bibitem{BFM88}
Manuel Blum, Paul Feldman, and Silvio Micali.
\newblock Non-interactive zero-knowledge and its applications.
\newblock In {\em ACM STOC}. 1988.

\bibitem{bonehbls}
Dan Boneh, Manu Drijvers, and Gregory Neven.
\newblock {BLS} multi-signatures with public-key aggregation.
\newblock \url{https://crypto.stanford.edu/~dabo/pubs/papers/BLSmultisig.html},
  Accessed 23 May 2022, March 2018.

\bibitem{CD00}
Jan Camenisch and Ivan Damg{\aa}rd.
\newblock Verifiable encryption, group encryption, and their applications to
  separable group signatures and signature sharing schemes.
\newblock In {\em {ASIACRYPT}}, pages 331--345. Springer, 2000.

\bibitem{CS03}
Jan Camenisch and Victor Shoup.
\newblock Practical verifiable encryption and decryption of discrete
  logarithms.
\newblock In {\em {CRYPTO}}, pages 126--144. Springer, 2003.

\bibitem{legoSNARK}
Matteo Campanelli, Dario Fiore, and Ana{\"\i}s Querol.
\newblock Legosnark: Modular design and composition of succinct zero-knowledge
  proofs.
\newblock In {\em {CCS}}, pages 2075--2092, 2019.

\bibitem{CGG+20}
Ran Canetti, Rosario Gennaro, Steven Goldfeder, Nikolaos Makriyannis, and Udi
  Peled.
\newblock {UC} non-interactive, proactive, threshold {ECDSA} with identifiable
  aborts.
\newblock In {\em {CCS}}, pages 1769--1787, 2020.

\bibitem{CMP20}
Ran Canetti, Nikolaos Makriyannis, and Udi Peled.
\newblock {UC} non-interactive, proactive, threshold {ECDSA}.
\newblock {\em Cryptology ePrint Archive}, 2020.

\bibitem{zkb++}
Melissa Chase, David Derler, Steven Goldfeder, Claudio Orlandi, Sebastian
  Ramacher, Christian Rechberger, Daniel Slamanig, and Greg Zaverucha.
\newblock Post-quantum zero-knowledge and signatures from symmetric-key
  primitives.
\newblock In {\em {CCS}}, pages 1825--1842, 2017.

\bibitem{seemless}
Melissa Chase, Apoorvaa Deshpande, Esha Ghosh, and Harjasleen Malvai.
\newblock Seemless: Secure end-to-end encrypted messaging with less trust.
\newblock In {\em {CCS}}, pages 1639--1656, 2019.

\bibitem{dnssec-ecdsa}
Anders Dalskov, Claudio Orlandi, Marcel Keller, Kris Shrishak, and Haya
  Shulman.
\newblock Securing {DNSSEC} keys via threshold {ECDSA} from generic {MPC}.
\newblock In {\em European Symposium on Research in Computer Security}, pages
  654--673. Springer, 2020.

\bibitem{DJN+20}
Ivan Damg{\aa}rd, Thomas~Pelle Jakobsen, Jesper~Buus Nielsen, Jakob~Illeborg
  Pagter, and Michael~B{\ae}ksvang Osterg{\aa}rd.
\newblock Fast threshold {ECDSA} with honest majority.
\newblock In {\em {SCN}}, pages 382--400. Springer, 2020.

\bibitem{SPDZ12}
Ivan Damg{\aa}rd, Valerio Pastro, Nigel Smart, and Sarah Zakarias.
\newblock Multiparty computation from somewhat homomorphic encryption.
\newblock In {\em {CRYPTO}}, pages 643--662. Springer, 2012.

\bibitem{safetypin}
Emma Dauterman, Henry Corrigan-Gibbs, and David Mazi{\`e}res.
\newblock {SafetyPin}: Encrypted backups with {Human-Memorable} secrets.
\newblock In {\em {OSDI}}, pages 1121--1138, 2020.

\bibitem{privacy-pass}
Alex Davidson, Ian Goldberg, Nick Sullivan, George Tankersley, and Filippo
  Valsorda.
\newblock Privacy pass: Bypassing internet challenges anonymously.
\newblock {\em Proc. Priv. Enhancing Technol.}, 2018(3):164--180, 2018.

\bibitem{Desmedt87}
Yvo Desmedt.
\newblock Society and group oriented cryptography: A new concept.
\newblock In {\em {EUROCRYPT}}, pages 120--127. Springer, 1987.

\bibitem{DF89}
Yvo Desmedt and Yair Frankel.
\newblock Threshold cryptosystems.
\newblock In {\em {EUROCRYPT}}, pages 307--315. Springer, 1989.

\bibitem{pseudoID}
Arkajit Dey and Stephen Weis.
\newblock {PseudoID}: Enhancing privacy in federated login.
\newblock In {\em {HotPETS} workshop}, 2010.

\bibitem{DKLS18}
Jack Doerner, Yashvanth Kondi, Eysa Lee, and Abhi Shelat.
\newblock Secure two-party threshold {ECDSA} from {ECDSA} assumptions.
\newblock In {\em IEEE Security \& Privacy}, pages 980--997. IEEE, 2018.

\bibitem{faife:okta}
Corin Faife.
\newblock Okta ends lapsus\$ hack investigation, says breach lasted just 25
  minutes.
\newblock {\em The Verge}, April 2022.
\newblock
  \url{https://www.theverge.com/2022/4/20/23034360/okta-lapsus-hack-investigation-breach-25-minutes}.

\bibitem{browserID}
Daniel Fett, Ralf K{\"u}sters, and Guido Schmitz.
\newblock Analyzing the {BrowserID} {SSO} system with primary identity
  providers using an expressive model of the web.
\newblock In {\em European Symposium on Research in Computer Security}, pages
  43--65. Springer, 2015.

\bibitem{spresso}
Daniel Fett, Ralf K{\"u}sters, and Guido Schmitz.
\newblock Spresso: A secure, privacy-respecting single sign-on system for the
  web.
\newblock In {\em {CCS}}, pages 1358--1369, 2015.

\bibitem{FS86}
Amos Fiat and Adi Shamir.
\newblock How to prove yourself: Practical solutions to identification and
  signature problems.
\newblock In {\em {EUROCRYPT}}, pages 186--194. Springer, 1986.

\bibitem{fido2}
{FIDO Alliance}.
\newblock {FIDO} alliance specifications: Overview.
\newblock \url{https://fidoalliance.org/specifications/}, Accessed 20 May 2022.

\bibitem{CoSI}
B.~Ford, N.~Gailly, L.~Gasser, and P.~Jovanovic.
\newblock {Collective Edwards-Curve Digital Signature Algorithm}.
\newblock {Internet-Draft}, June 2017.

\bibitem{CBMC-GC}
Martin Franz, Andreas Holzer, Stefan Katzenbeisser, Christian Schallhart, and
  Helmut Veith.
\newblock {CBMC-GC: An ANSI C Compiler for Secure Two-Party Computations}.
\newblock In {\em Compiler Construction: 23rd International Conference}, volume
  8409, page 244, 2014.

\bibitem{GKSS20}
Adam G{\k{a}}gol and Damian Straszak.
\newblock Threshold ecdsa for decentralized asset custody.
\newblock Technical report, Tech. rep., Cryptology ePrint Archive, Report
  2020/498, 2020.

\bibitem{GGPR13}
Rosario Gennaro, Craig Gentry, Bryan Parno, and Mariana Raykova.
\newblock Quadratic span programs and succinct nizks without pcps.
\newblock In {\em {EUROCRYPT}}, pages 626--645. Springer, 2013.

\bibitem{GG18}
Rosario Gennaro and Steven Goldfeder.
\newblock Fast multiparty threshold {ECDSA} with fast trustless setup.
\newblock In {\em {CCS}}, pages 1179--1194, 2018.

\bibitem{GGN16}
Rosario Gennaro, Steven Goldfeder, and Arvind Narayanan.
\newblock Threshold-optimal {DSA/ECDSA} signatures and an application to
  bitcoin wallet security.
\newblock In {\em {ACNS}}, pages 156--174. Springer, 2016.

\bibitem{zkboo}
Irene Giacomelli, Jesper Madsen, and Claudio Orlandi.
\newblock {ZKBoo}: Faster zero-knowledge for boolean circuits.
\newblock In {\em {USENIX Security}}, pages 1069--1083, 2016.

\bibitem{GMR85}
Shafi Goldwasser, Silvio Micali, and Charles Rackoff.
\newblock The knowledge complexity of interactive proof systems.
\newblock {\em SIAM Journal on computing}, 18(1):186--208, 1989.

\bibitem{trillian}
Trillian.
\newblock \url{https://github.com/google/trillian}.

\bibitem{Groth16}
Jens Groth.
\newblock On the size of pairing-based non-interactive arguments.
\newblock In {\em {EUROCRYPT}}, pages 305--326. Springer, 2016.

\bibitem{GK14}
Jens Groth and Markulf Kohlweiss.
\newblock One-out-of-many proofs: Or how to leak a secret and spend a coin.
\newblock In {\em Annual International Conference on the Theory and
  Applications of Cryptographic Techniques}, pages 253--280. Springer, 2015.

\bibitem{GS22}
Jens Groth and Victor Shoup.
\newblock Design and analysis of a distributed {ECDSA} signing service.
\newblock {\em Cryptology ePrint Archive}, 2022.

\bibitem{GS21}
Jens Groth and Victor Shoup.
\newblock On the security of {ECDSA} with additive key derivation and
  presignatures.
\newblock In {\em {EUROCRYPT}}, 2022.

\bibitem{zkmbs}
Paul Grubbs, Arasu Arun, Ye~Zhang, Joseph Bonneau, and Michael Walfish.
\newblock Zero-knowledge middleboxes.
\newblock {\em Cryptology ePrint Archive}, 2021.

\bibitem{uppresso}
Chengqian Guo, Jingqiang Lin, Quanwei Cai, Fengjun Li, Qiongxiao Wang, Jiwu
  Jing, Bin Zhao, and Wei Wang.
\newblock {UPPRESSO}: Untraceable and unlinkable privacy-preserving single
  sign-on services.
\newblock {\em arXiv preprint arXiv:2110.10396}, 2021.

\bibitem{HSB20}
Sven Hammann, Ralf Sasse, and David Basin.
\newblock Privacy-preserving {OpenID} connect.
\newblock In {\em {ASIACCS}}, pages 277GS22--289, 2020.

\bibitem{merkle2}
Yuncong Hu, Kian Hooshmand, Harika Kalidhindi, Seung~Jin Yang, and Raluca~Ada
  Popa.
\newblock Merkle 2: A low-latency transparency log system.
\newblock In {\em IEEE Security \& Privacy}, pages 285--303. IEEE, 2021.

\bibitem{UnlimitID}
Marios Isaakidis, Harry Halpin, and George Danezis.
\newblock Unlimitid: Privacy-preserving federated identity management using
  algebraic macs.
\newblock In {\em Proceedings of the 2016 ACM on Workshop on Privacy in the
  Electronic Society}, pages 139--142, 2016.

\bibitem{IKOS07}
Yuval Ishai, Eyal Kushilevitz, Rafail Ostrovsky, and Amit Sahai.
\newblock Zero-knowledge from secure multiparty computation.
\newblock In {\em {STOC}}, pages 21--30, 2007.

\bibitem{apple-key-vault}
Ivan Krstic.
\newblock Behind the scenes with {iOS} security, 2016.
\newblock \url{https://www.blackhat.com/docs/us-16/materials/us-16-Krstic.pdf}.

\bibitem{kr-u2f}
Krypton.
\newblock kr-u2f.
\newblock \url{https://github.com/kryptco/kr-u2f}, Accessed 17 May 2022.

\bibitem{libsnark}
{SCIPR} Lab.
\newblock libsnark.
\newblock \url{https://github.com/scipr-lab/libsnark}, Accessed 30 May 2022.

\bibitem{paxos}
Leslie Lamport.
\newblock The part-time parliament.
\newblock In {\em {TOCS}}, pages 133--169, 1998.

\bibitem{ct}
Adam Langley, Emilia Kasper, and Ben Laurie.
\newblock Certificate transparency.
\newblock {\em Internet Engineering Task Force}, 2013.
\newblock \url{https://tools.ietf.org/html/rfc6962}.

\bibitem{SAVER}
Jiwon Lee, Jaekyoung Choi, Jihye Kim, and Hyunok Oh.
\newblock {SAVER}: Snark-friendly, additively-homomorphic, and verifiable
  encryption and decryption with rerandomization.
\newblock {\em Cryptology ePrint Archive}, 2019.

\bibitem{lindell-ecdsa}
Yehuda Lindell.
\newblock Fast secure two-party {ECDSA} signing.
\newblock In {\em {CRYPTO}}, pages 613--644. Springer, 2017.

\bibitem{signal-recovery}
Joshua Lund.
\newblock Technology preview for secure value recovery, 2019.
\newblock \url{https://signal.org/blog/secure-value-recovery/}.

\bibitem{LN17}
Vadim Lyubashevsky and Gregory Neven.
\newblock One-shot verifiable encryption from lattices.
\newblock In {\em {EUROCRYPT}}, pages 293--323. Springer, 2017.

\bibitem{MR01}
Philip MacKenzie and Michael~K Reiter.
\newblock Two-party generation of {DSA} signatures.
\newblock In {\em {CRYPTO}}, pages 137--154. Springer, 2001.

\bibitem{musig}
Gregory Maxwell, Andrew Poelstra, Yannick Seurin, and Pieter Wuille.
\newblock Simple schnorr multi-signatures with applications to bitcoin.
\newblock {\em Designs, Codes and Cryptography}, 87(9):2139--2164, 2019.

\bibitem{mazieres:sundr-podc}
David {Mazi\`eres} and Dennis Shasha.
\newblock Building secure file systems out of {B}yzantine storage.
\newblock In {\em 21st Annual ACM SIGACT-SIGOPS Symposium on Principles of
  Distributed Computing}, pages 108--117, July 2002.

\bibitem{CONIKS}
Marcela~S Melara, Aaron Blankstein, Joseph Bonneau, Edward~W Felten, and
  Michael~J Freedman.
\newblock {CONIKS}: Bringing key transparency to end users.
\newblock In {\em USENIX Security}, 2015.

\bibitem{RSA-PKCS}
Kathleen Moriarty, Burt Kaliski, Jakob Jonsson, and Andreas Rusch.
\newblock {PKCS\# 1: RSA} cryptography specifications version 2.2.
\newblock {\em Internet Engineering Task Force, Request for Comments}, 8017:72,
  2016.

\bibitem{totp-rfc}
D.~M'Raihi, S.~Machani, M.~Pei, and J.~Rydell.
\newblock {TOTP: Time-Based One-Time Password Algorithm}.
\newblock {RFC} 6238, May 2011.

\bibitem{musig2}
Jonas Nick, Tim Ruffing, and Yannick Seurin.
\newblock {MuSig2}: Simple two-round schnorr multi-signatures.
\newblock In {\em {CRYPTO}}, pages 189--221. Springer, 2021.

\bibitem{NRSW20}
Jonas Nick, Tim Ruffing, Yannick Seurin, and Pieter Wuille.
\newblock {MuSig-DN}: Schnorr multi-signatures with verifiably deterministic
  nonces.
\newblock In {\em {CCS}}, pages 1717--1731, 2020.

\bibitem{raft}
Diego Ongaro and John Ousterhout.
\newblock In search of an understandable consensus algorithm.
\newblock In {\em {USENIX ATC}}, pages 305--319, 2014.

\bibitem{pinocchio}
Bryan Parno, Jon Howell, Craig Gentry, and Mariana Raykova.
\newblock Pinocchio: Nearly practical verifiable computation.
\newblock In {\em {IEEE Security \& Privacy}}, pages 238--252. IEEE, 2013.

\bibitem{roth:lastpass}
Emma Roth.
\newblock {LastPass}' latest data breach exposed some customer information.
\newblock {\em The Verge}, November 2022.
\newblock
  \url{https://www.theverge.com/2022/11/30/23486902/lastpass-hackers-customer-information-breach}.

\bibitem{nordpass-study}
Adam Rowe.
\newblock Study reveals average person has 100 passwords, November 2021.
\newblock
  \url{https://tech.co/password-managers/how-many-passwords-average-person}.

\bibitem{sakimura:openid}
N.~Sakimura, J.~Bradley, M.~Jones, B.~de~Medeiros, and C.~Mortimore.
\newblock {OpenID} connect core 1.0 incorporating errata set 1.
\newblock \url{https://openid.net/specs/openid-connect-core-1_0.html}, November
  2014.

\bibitem{S79}
Adi Shamir.
\newblock How to share a secret.
\newblock {\em Communications of the ACM}, 22(11):612--613, 1979.

\bibitem{ST19}
Nigel~P Smart and Younes Talibi~Alaoui.
\newblock Distributing any elliptic curve based protocol.
\newblock In {\em IMA International Conference on Cryptography and Coding},
  pages 342--366. Springer, 2019.

\bibitem{Sta96}
Markus Stadler.
\newblock Publicly verifiable secret sharing.
\newblock In {\em {EUROCRYPT}}, pages 190--199. Springer, 1996.

\bibitem{TZ21}
Akira Takahashi and Greg Zaverucha.
\newblock Verifiable encryption from {MPC-in-the-Head}.
\newblock {\em Cryptology ePrint Archive}, 2021.

\bibitem{dbir}
Verizon.
\newblock {\em {DBIR} Data Breach Investigations Report}, 2022 edition.
\newblock
  \url{https://www.verizon.com/business/resources/T3cd/reports/dbir/2022-data-breach-investigations-report-dbir.pdf}.

\bibitem{webauthn}
{W3C}.
\newblock Web authentication: An api for accessing public key credentials level
  2, April 2021.
\newblock \url{https://www.w3.org/TR/webauthn-2/}, Accessed 20 May 2022.

\bibitem{google-key-vault}
Shabsi Walfish.
\newblock {Google Cloud Key Vault Service}.
\newblock Google, 2018.
\newblock
  \url{https://developer.android.com/about/versions/pie/security/ckv-whitepaper}.

\bibitem{blind-CAs}
Liang Wang, Gilad Asharov, Rafael Pass, Thomas Ristenpart, and Abhi Shelat.
\newblock Blind certificate authorities.
\newblock In {\em {IEEE Security \& Privacy}}, pages 1015--1032. IEEE, 2019.

\bibitem{emp-toolkit}
Xiao Wang, Alex~J. Malozemoff, and Jonathan Katz.
\newblock {EMP-toolkit: Efficient MultiParty computation toolkit}.
\newblock \url{https://github.com/emp-toolkit}, 2016.

\bibitem{WRK17}
Xiao Wang, Samuel Ranellucci, and Jonathan Katz.
\newblock Authenticated garbling and efficient maliciously secure two-party
  computation.
\newblock In {\em Proceedings of the 2017 ACM SIGSAC conference on computer and
  communications security}, pages 21--37, 2017.

\bibitem{XAXYC21}
Haiyang Xue, Man~Ho Au, Xiang Xie, Tsz~Hon Yuen, and Handong Cui.
\newblock Efficient online-friendly two-party {ECDSA} signature.
\newblock In {\em {CCS}}, pages 558--573, 2021.

\bibitem{YAS+12}
Shota Yamada, Nuttapong Attrapadung, Bagus Santoso, Jacob~CN Schuldt, Goichiro
  Hanaoka, and Noboru Kunihiro.
\newblock Verifiable predicate encryption and applications to {CCA} security
  and anonymous predicate authentication.
\newblock In {\em {PKC}}, pages 243--261. Springer, 2012.

\bibitem{Yao82}
Andrew~C Yao.
\newblock Protocols for secure computations.
\newblock In {\em 23rd annual symposium on foundations of computer science
  (sfcs 1982)}, pages 160--164. IEEE, 1982.

\bibitem{Yao86}
Andrew Chi-Chih Yao.
\newblock How to generate and exchange secrets.
\newblock In {\em {FOCS}}, pages 162--167. IEEE, 1986.

\bibitem{DECO}
Fan Zhang, Deepak Maram, Harjasleen Malvai, Steven Goldfeder, and Ari Juels.
\newblock {DECO}: Liberating web data using decentralized oracles for {TLS}.
\newblock In {\em {CCS}}, pages 1919--1938, 2020.

\bibitem{elpasso}
Zhiyi Zhang, Michal Kr{\'o}l, Alberto Sonnino, Lixia Zhang, and Etienne
  Rivi{\`e}re.
\newblock {EL PASSO}: Efficient and lightweight privacy-preserving single sign
  on.
\newblock {\em PoPETS}, 2021(2):70--87, 2021.

\bibitem{zokrates}
{ZoKrates}.
\newblock {ZoKrates}.
\newblock \url{https://github.com/Zokrates/ZoKrates}, Accessed 30 May 2022.

\end{thebibliography}
\bibliographystyle{plain} 
}

\iffull
\appendix
\section{ECDSA with additive key derivation and presignatures}
\label{app:secdefs}
\label{app:secdefs:gs}

We now define security for a variant of ECDSA where (1) the
adversary can choose ``tweaks'' to the signing key, and (2) the
adversary can request presignatures that are later used to generate
presignatures. In the rest of this section, we will show that the
advantage of the adversary in this modified version of ECDSA is
negligible. In \cref{app:two-ecdsa}, we will argue that a two-party
protocol that achieves the ideal functionality of this modified
version of ECDSA is secure.

Throughout all algorithms implicitly run in time polynomial in the
security parameter.

We define a security experiment for ECDSA with preprocessing and
additive key derivation in Experiment~\ref{exp:ecdsa2} in \cref{fig:mod-exp}.
The security experiment models the fact that the client's and log's secret key
shares are not authenticated, and so the adversary can query for signatures
under a signing key with adversarially chosen ``tweaks''. However, the final
signature must verify under a fixed set of tweaks (in our setting, these
correspond to public keys that the client generated at registration).
The presignature queries allow us to capture the client preprocessing that
we take advantage of.

\begin{theorem}
Let $\ECDSAModAdv[\calA,\G,\ell,N]$ denote the adversary $\calA$'s advantage
in Experiment~\ref{exp:ecdsa2} with group $\G$ of prime order $q$,
$\ell$ $\KeyGen$ queries, and $N$ total $\PreSign$ and $\Sign$ queries. Then
    $$ \ECDSAModAdv[\calA,\G,\ell,N] \leq O(N \cdot ((\ell+N)/q + 1/\sqrt{q}))~. $$
\end{theorem}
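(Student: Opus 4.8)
The plan is to carry out the analysis in an idealized model of the group, since plain ECDSA is not known to be unforgeable under a standard assumption: I would work in the generic group model and treat the conversion function $f$ as an idealized (essentially random) function, following Groth and Shoup's analysis of ECDSA under additive key derivation with presignatures~\cite{GS21,GS22}. At a high level, Experiment~\ref{exp:ecdsa2} is replaced by a combinatorial game in which a simulator answers $\calA$'s group-operation and $f$-queries while maintaining the group elements $\calA$ has seen as formal linear polynomials in the unknown secret-key shares $x_1,\dots,x_\ell$ and nonce exponents $r$ from the presignatures; the goal is then to bound the probability that $\calA$ ever forces two distinct formal polynomials to agree on the actual random assignment, or outputs a valid fresh signature.

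First I would normalize the $\PreSign$ queries. A presignature consists of $(f(R), r_i)$ with $R = g^r$ for a fresh uniform $r$ and $r^{-1} = r_0 + r_1$; the share $r_i$ returned to $\calA$ is uniform and independent of everything except $r$, so the simulator can publish $R$ and $f(R)$ eagerly and defer committing to $r$ itself. Because the client is honest when presignatures are generated (the offline phase of \cref{sec:two-ecdsa}), the exponents $r$ are genuinely uniform and independent of $\calA$'s view up to that point --- a restriction that is exactly what makes a clean bound possible. A subsequent $\Sign$ query consuming presignature $R$ on message $m$ under key index $j$ and adversarial tweak $\Delta$ forces the simulator to reveal $s = r^{-1}\bigl(\Hash(m) + f(R)\cdot(x_j + \Delta)\bigr)$, i.e.\ a new linear relation among the unknowns, which I would add to the system the simulator tracks.

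Next I would handle the winning condition. A forgery is $(j^*, m^*, \sigma^* = (c^*, s^*))$ with $c^*, s^* \ne 0$, not produced by any $\Sign$ query, verifying under $g^{x_{j^*} + \Delta^*}$ for some $\Delta^*$ in the allowed set; verification means the group element $R^* = \bigl(g^{\Hash(m^*)}\,(g^{x_{j^*}+\Delta^*})^{c^*}\bigr)^{1/s^*}$ satisfies $f(R^*) = c^*$. In the generic model $R^*$ is a tracked handle, so $f(R^*) = c^*$ is, except with probability $O(1/q)$ per query, a nontrivial polynomial equation; combining it with the $\le \ell + N$ relations recorded above, I would argue that a successful forgery forces either an unexpected collision among the tracked polynomials, which a Schwartz--Zippel-style count bounds by $O\!\bigl(N(\ell+N)/q\bigr)$ over all queries, or that $\calA$ predicted the value of $f$ on a point it had not queried. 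The delicate case is that a presignature $R$ fixes $f(R)$ before $(m,\Delta)$ is chosen, which lets $\calA$ try to line up a degree-two relation among the $x_j$'s and $r$'s after the fact; bounding the success of that strategy is what produces the $O(1/\sqrt q)$ per-presignature term, and I would follow the Groth--Shoup case analysis there, summing over the at most $N$ presignatures to obtain the claimed $O\!\bigl(N((\ell+N)/q + 1/\sqrt q)\bigr)$.

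The main obstacle is precisely this last point: presignatures break the textbook ECDSA generic-group proof, in which $R$ is sampled only after the message is known, and an adversary combining many presignatures with adaptively chosen tweaks and messages can attempt a birthday-type search for a quadratic relation among the secrets. Showing that this succeeds with probability only $O(N/\sqrt q)$ --- and in particular ruling out anything worse --- is the technical core of the argument; the one structural simplification available to us, which I would lean on throughout, is that our honesty assumption on the offline phase keeps the nonce exponents out of the adversary's control, so the only adaptivity left to analyze is in the tweaks and messages.
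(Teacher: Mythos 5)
Your proposal is correct in substance but takes a noticeably different (and much heavier) route than the paper. The paper never redoes the generic-group analysis: it introduces the Groth--Shoup game as an intermediate experiment (Experiment~\ref{exp:ecdsa}), cites their Theorem~4 for the bound $O(N\cdot((|\calE|+N)/q + 1/\sqrt{q}))$ on that game (Lemma~\ref{lemma:gs}), and then proves only a tight black-box reduction (Lemma~\ref{lemma:ecdsa-mod}): given $\calA$ for Experiment~\ref{exp:ecdsa2}, the adversary $\calB$ answers the $i$-th $\KeyGen$ query with $D\cdot g^{\omega_i}$ for $\omega_i\in\calE$, forwards $\PreSign$ queries, rewrites a $\Sign$ query with key index $j$ and tweak share $\omega$ as a Groth--Shoup signing query with tweak $\omega_j+\omega$, and enforces the query-ordering discipline; since $|\calE|=\ell$, the claimed bound follows. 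What you propose is instead to carry out the idealized-group analysis directly inside the modified experiment --- formal polynomials, Schwartz--Zippel collision counting, and the birthday-type $O(N/\sqrt{q})$ term arising from $f(R)$ being fixed before the message. That is exactly the content of the theorem the paper cites, so your plan amounts to re-proving Groth--Shoup rather than reducing to them: it buys self-containedness, but the $1/\sqrt{q}$ case analysis you defer to ``following Groth--Shoup'' is the entire technical burden, and treating the $\ell$ independent keys $x_1,\dots,x_\ell$ as separate unknowns (rather than as tweaks of one master key $d$) would force you to redo their bookkeeping. Conversely, your sketch elides the one thing the paper actually proves: that the syntactic differences of Experiment~\ref{exp:ecdsa2} --- keys issued adaptively via $\KeyGen$ rather than a fixed tweak set, tweak \emph{shares} in signing queries, and the enforced ordering of queries --- can be absorbed by a simulator with no loss of advantage. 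In particular, your stated winning condition (forgery verifying under $g^{x_{j^*}+\Delta^*}$ for $\Delta^*$ in an allowed set) is the Groth--Shoup condition, not the paper's, which requires verification under one of the $\ell$ registered public keys exactly; the change of variables $\pk_i = D\cdot g^{\omega_i}$ in the paper's reduction is precisely what bridges the two. Either make that reduction explicit and cite the generic-group bound, or commit to redoing the full analysis; as written, your proposal does neither completely.
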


\begin{proof}
In order to prove the above theorem, we use an additional experiment,
Experiment~\ref{exp:ecdsa} in \cref{fig:gs-exp}.
Let $\ECDSAAdv[\calB,\G,N,\calE]$ denote the advantage of adversary
$\calB$ in Experiment~\ref{exp:ecdsa} with a set of tweaks $\calE$ of
size $\ell$. By Lemma~\ref{lemma:gs},
    $$ \ECDSAAdv[\calA,\G,N,\calE] \leq O(N \cdot ((|\calE|+N)/q + 1/\sqrt{q}))~,$$
and by Lemma~\ref{lemma:ecdsa-mod},
$$ \ECDSAModAdv[\calA,\G,\ell,N] \leq \ECDSAAdv[\calB,\G,N,\calE]~, $$
and because $|\calE| = \ell$,
    $$ \ECDSAModAdv[\calA,\G,\ell,N] \leq O(N \cdot ((\ell+N)/q + 1/\sqrt{q}))~.$$

\end{proof}

For the intermediate security experiment, we use the security game from Groth and
Shoup, which we include for reference as Experiment~\ref{exp:ecdsa} in \cref{fig:gs-exp}.
The intermediate security experiment allows us to
leverage Groth and Shoup's security analysis for ECDSA
with additive key derivation and presignatures~\cite{GS21}. They define a security
game that is similar to but slightly different from Experiment~\ref{exp:ecdsa2}
that we define to match our setting.
Experiment~\ref{exp:ecdsa} models the variant of additive key derivation where the
signing tweak is not constrained to lie in the set of tweaks that the
adversary must produce a forgery for (only the
forging tweak is constrained; see Note 1 in Section 6 of Groth and Shoup~\cite{GS21}).

At a high level, the differences between Experiment~\ref{exp:ecdsa2} and
Experiment~\ref{exp:ecdsa} are:
\begin{itemize}
    \item In Experiment~\ref{exp:ecdsa2}, the adversary makes $\KeyGen$ queries
        to receive public keys corresponding to the set of tweaks, whereas
        in Experiment~\ref{exp:ecdsa}, the adversary simply receives the set of
        tweaks directly at the beginning of the experiment (these are an experiment
        parameter).
    \item Signing queries in Experiment~\ref{exp:ecdsa2} take as input a share of the tweak
        rather than the entire signing tweak.
    \item The Experiment~\ref{exp:ecdsa2} challenger enforces an order on $\KeyGen$,
        $\PreSign$, and $\Sign$ queries. The Experiment~\ref{exp:ecdsa} challenger does not
        enforce an order.
\end{itemize}

\begin{lemma}
Let $\ECDSAAdv[\calA,\G,N,\calE]$ denote adversary $\calA$'s advantage in
Experiment~\ref{exp:ecdsa} with group $\G$ of prime order $q$, number of presignature and
signing queries $N \in \N$, and a set of tweaks $\calE$ of size polynomial in the
security parameter. Then if $\Hash$ is
collision resistant and $\G$ is a random oracle,
    $$ \ECDSAAdv[\calA,\G,N,\calE] \leq O(N \cdot( (|\calE| + N)/q + 1/\sqrt{q}))~.$$
\label{lemma:gs}
\end{lemma}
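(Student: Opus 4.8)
The plan is to recognize that Experiment~\ref{exp:ecdsa} is, up to cosmetic renaming of its oracles, precisely the security game that Groth and Shoup analyze for ECDSA with additive key derivation and presignatures~\cite{GS21}, and then to invoke their security theorem. First I would spell out the correspondence: the set of tweaks $\calE$ handed to the adversary at the start plays the role of their set of permitted key offsets; a $\PreSign$ query matches their presignature-generation oracle, which fixes a nonce $R=g^r$ and returns the $\Z_q$-encoded presignature without revealing $r$; a $\Sign$ query matches their message-dependent completion oracle, returning $s=r^{-1}(\Hash(m)+f(R)\cdot(x+e))$ on a previously issued presignature and a message $m$; and the winning condition---a fresh message together with a valid ECDSA signature under $g^{x+e}$ for some $e\in\calE$---is their forgery condition. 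As already noted in \cref{app:secdefs}, the one substantive modeling choice is that the offset used \emph{inside} a signing query is unconstrained whereas the \emph{forging} offset must lie in $\calE$, which is exactly the variant described in Note~1 of Section~6 of~\cite{GS21}.

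Next I would check that the query accounting agrees so that their theorem applies essentially verbatim: here $N$ counts $\PreSign$ and $\Sign$ queries together, and since each completed signature consumes one presignature the number of signatures is at most $N$, which is the quantity in terms of which their bound is phrased; and $|\calE|$ is polynomial in the security parameter, as they require. The bound is then read off their analysis: in the generic group model---which is what the hypothesis ``$\G$ is a random oracle'' formalizes, the group-operation oracle being an ideal random encoding---and assuming $\Hash$ is collision resistant, the forging advantage is $O\!\big(N\cdot((|\calE|+N)/q+1/\sqrt q)\big)$. The $(|\calE|+N)/q$ term aggregates, by a union bound, the ``bad'' algebraic coincidences a generic adversary could exploit without solving discrete log---collisions among sampled nonces, among tweaks, between $f(R)$ values and hash outputs, the event $f(R)=0$, and so on---while the $1/\sqrt q$ term is the usual generic-group (Pollard-rho / birthday) bound on extracting a nontrivial discrete-log relation from the recorded group operations; collision resistance of $\Hash$ is what rules out a forgery reusing the $(f(R),s)$ pair of an honest signing answer on a colliding message.

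The hard part will be verifying that no detail of Experiment~\ref{exp:ecdsa} falls outside the game covered by~\cite{GS21}: in particular that presignatures are one-time (a presignature handle cannot be completed twice), that a signing query aborts exactly when a genuine ECDSA signer would ($f(R)=0$ or $s=0$), and that the challenger leaks neither $r$ nor intermediate group elements beyond what their oracle interface exposes. Wherever this experiment is formalized slightly differently, I would bridge the gap with an elementary hybrid that relays oracle calls between an adversary against Experiment~\ref{exp:ecdsa} and the Groth--Shoup game, losing at most an additive $O(N^2/q)$ term (a collision among freshly sampled randomness) or $O(N/q)$ term (a degenerate abort), which is absorbed into the stated bound. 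Once the correspondence and this bookkeeping are in place, the inequality of \cref{lemma:gs} follows immediately.
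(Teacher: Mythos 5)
Your proposal is correct and matches the paper's approach: the paper proves this lemma by observing that Experiment~\ref{exp:ecdsa} is (by construction) exactly the Groth--Shoup security game for ECDSA with presignatures and additive key derivation, and then citing their Theorem~4 in the generic group model, noting only that $|\calE|$ is polynomial in our setting. Your additional care in spelling out the oracle correspondence and the query accounting is more explicit than the paper's one-line citation but does not change the argument.
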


We refer the reader to Groth and Shoup's Theorem 4 for the analysis proving
Lemma~\ref{lemma:gs} in the generic group model. In our setting, the number of
public keys and therefore size of $\calE$ is polynomial.

\begin{figure}
\begin{framed}
{\small
\Experiment{ECDSA with presignatures and additive key derivation}\label{exp:ecdsa2}
The experiment is parameterized by a number of $\KeyGen$ queries $\ell \in \N$,
a number of $\PreSign$ and $\Sign$ queries $N \in \N$, a group $\G$ of
prime order $q$ with generator $g$, message space $\calM$, a hash
function $\Hash: \calM \to \Z_q$, a conversion function
$f: \G \to \Z_q$.
\begin{itemize}
    \item The challenger initializes state $\initdone = 0$, $\presigdone = 0$.
    \item The adversary can make $\ell$ $\KeyGen$ queries and $N$ $\PreSign$
        and $\Sign$ queries.
    \item $\KeyGen() \to \pk$:
        \begin{itemize}
            \item If $\initdone = 0$, $k=1$; otherwise $k \gets k + 1$.
            \item Sample $\sk_k \getsr \Z_q$.
            \item Set $\presigctr, \authctr \gets 0$.
            \item Set $\initdone \gets 1$, $\presigdone \gets 0$.
            \item Output $g^{\sk_k}$.
        \end{itemize}
    \item $\PreSign() \to R$:
        \begin{itemize}
            \item If $\initdone = 0$ or $\presigdone = 1$, output $\bot$.
            \item Sample $r_\presigctr \getsr \Z_q^\ast$.
            \item Output $g^{r_\presigctr}$ and set $\presigctr \gets \presigctr + 1$.
        \end{itemize}
    \item $\Sign(m, \tweak, j) \to \sigma$:
        \begin{itemize}
            \item If $\initdone = 0$, $\presigctr < \authctr$, or $j > k$ or $j < 1$, output $\bot$.
            \item Let $R \gets g^{r_\authctr}$
            \item Let $s \gets r_\authctr^{-1} \cdot (\Hash(m) + (\sk_j + \tweak) \cdot f(R))$.
            \item Set $\presigdone \gets 1$, $\authctr \gets \authctr + 1$.
            \item Output $(s,f(R))$.
        \end{itemize}
\end{itemize}
The output of the experiment is ``1'' if:
    \begin{itemize}
        \item the signature $\sigma^\ast$ on $m^\ast$ verifies under $\pk$,
        \item $m^\ast$ was not an input to $\Sign$, and
        \item $\pk$ was an output of $\KeyGen$.
    \end{itemize}
The output is``0'' otherwise.
}
\end{framed}
\caption{Our experiment for security of ECDSA with additive key derivation
    and presignatures.}
\label{fig:mod-exp}
\end{figure}

\begin{figure}[t]
    \begin{framed}
    {\small
        \Experiment{Groth-Shoup ECDSA with presignatures and additive key derivation~\cite{GS21}}\label{exp:ecdsa}

        We recall the security experiment from Groth and Shoup~\cite{GS21}
        in Figure 4 for ECDSA with presignatures with the modifications
        described for additive key derivation.

        The experiment is parameterized by the number of total queries the adversary
        can make $N \in \N$, a group $\G$ of prime order
        $q$ with generator $g$, message space $\calM$, a hash function $\Hash: \calM \to \Z_q$, a conversion
        function $f: \G \to \Z_q$, and a set of tweaks $\calE \subseteq \Z_q$.
        \begin{itemize}
            \item The challenger initializes state:
                \begin{itemize}
                    \item $k \gets 0$, $K \gets \emptyset$
                    \item $d \getsr \Z_q$, $D \gets g^d \in \G$.
                \end{itemize}
            \item The adversary can make $N$ total queries (presign or sign).
            \item Presignature query:
                \begin{itemize}
                    \item $k \gets k + 1$, $r_k \getsr \Z_q^\ast$
                    \item $R_k \gets g^{r_k} \in \G$
                    \item $t_k \gets f(R_k) \in \Z_q$. If $t_k = 0$, output $\bot$
                    \item Return $R_k$
                \end{itemize}
            \item Signing request for message $m$ with presignature $k \in K$ and tweak $\tweak \in \Z_q$:
                \begin{itemize}
                    \item $K \gets K \backslash \{k\}$
                    \item $h_k \gets \Hash(m) \in \Z_q$
                    \item If $h_k + t_kd + t_k \tweak = 0$, output $\bot$
                    \item $s_k \gets r_k^{-1}(h_k + t_kd + t_k \tweak) \in \Z_q$
                    \item Return $(s_k, t_k, R_k)$
                \end{itemize}
            \item After making $N$ queries, the adversary must output
                $(m^\ast, \sigma^\ast, \omega^\ast)$.
        \end{itemize}
        The output of the experiment is ``1'' if:
        \begin{itemize}
          \item the signature $\sigma^\ast$ on $m^\ast$ verifies under $D \cdot g^{\tweak^\ast}$,
          \item $m^\ast$ was not an input to a previous signing query,
        and 
      \item $\tweak^\ast \in \calE$.
\end{itemize}
        The output is ``0'' otherwise.
    }
    \end{framed}
    \caption{Security experiment for ECDSA with additive key derivation and
    presignatures from Groth and Shoup~\cite{GS21}.}
    \label{fig:gs-exp}
\end{figure}

All that remains is to prove that the the adversary in the modified
Experiment~\ref{exp:ecdsa2} does not have a greater advantage than the
adversary in the original Experiment~\ref{exp:ecdsa}.

\begin{lemma}
    Let $\ECDSAModAdv[\calA,\G,\ell,N]$ denote the adversary $\calA$'s advantage
    in Experiment~\ref{exp:ecdsa2} with group $\G$ and $\ell,N \in \N$.
    Let $\ECDSAAdv[\calB,\G,N,\calE]$ denote the adversary $\calB$'s advantage in
    Experiment~\ref{exp:ecdsa} with a set of tweaks $\calE$ of size $\ell$.
    Then given an adversary $\calA$ in Experiment~\ref{exp:ecdsa2}, we construct
    an adversary $\calB$ for Experiment~\ref{exp:ecdsa} that runs in time
    linear in $\calA$ such that for all groups $\G$, $N \in \N$, and
    randomly sampled set of tweaks $\calE$ of size $\ell$,
    $$ \ECDSAModAdv[\calA,\G,\ell,N] \leq \ECDSAAdv[\calB,\G,N,\calE]~. $$
    \label{lemma:ecdsa-mod}
\end{lemma}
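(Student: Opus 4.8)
The plan is to give a tight black-box reduction: from any adversary $\calA$ against Experiment~\ref{exp:ecdsa2} I build an adversary $\calB$ against Experiment~\ref{exp:ecdsa} that runs $\calA$ once and answers each of its oracle calls by forwarding a re-parameterized query to $\calB$'s own Groth--Shoup oracles. The one idea driving the reduction is an \emph{additive key shift}. Experiment~\ref{exp:ecdsa} hands $\calB$ a single key $D=g^{d}$ together with a random set of tweaks $\calE=\{e_1,\dots,e_\ell\}$, whereas Experiment~\ref{exp:ecdsa2} expects $\ell$ independent keys $\sk_1,\dots,\sk_\ell$. I will have $\calB$ implicitly set $\sk_k \deq d+e_k$. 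Then $\calB$ can answer the $k$-th $\KeyGen$ query with $g^{\sk_k}=D\cdot g^{e_k}$ without ever learning $d$, and a signing query in Experiment~\ref{exp:ecdsa2} under the effective key $\sk_j+\tweak=d+(e_j+\tweak)$ becomes precisely a Groth--Shoup signing request with tweak $\omega=e_j+\tweak$; this is legal because Experiment~\ref{exp:ecdsa} constrains only the \emph{forging} tweak to lie in $\calE$, not the tweaks used for signing.

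In detail, $\calB$ maintains its own copy of the Experiment~\ref{exp:ecdsa2} challenger's state ($\initdone$, $\presigdone$, the key index $k$, the per-epoch counters) and enforces all of that experiment's guards itself, returning $\bot$ exactly when the challenger would. On the $k$-th $\KeyGen$ call it records the pairing $\sk_k\leftrightarrow e_k$, performs the same counter resets, and returns $D\cdot g^{e_k}$ with no oracle call. On a $\PreSign$ call that passes the guards it forwards a presignature query, gets $R_k$, stores the handle in the slot that Experiment~\ref{exp:ecdsa2} would write, and returns $R_k$. On a $\Sign(m,\tweak,j)$ call that passes the guards it reads the handle $k$ from the slot Experiment~\ref{exp:ecdsa2} would consume, issues a Groth--Shoup signing request for $(m,k,\omega)$ with $\omega=e_j+\tweak$, receives $(s_k,t_k,R_k)$, and returns $(s_k,t_k)$; a one-line check gives $s_k=r_k^{-1}(\Hash(m)+t_k(d+e_j+\tweak))=r_k^{-1}(\Hash(m)+(\sk_j+\tweak)f(R_k))$, which is exactly the Experiment~\ref{exp:ecdsa2} output. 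When $\calA$ halts with $(m^\ast,\sigma^\ast)$, $\calB$ identifies the index $j^\ast$ with public key $D\cdot g^{e_{j^\ast}}$ equal to the verification key and outputs $(m^\ast,\sigma^\ast,\omega^\ast\deq e_{j^\ast})$. All three winning conditions of Experiment~\ref{exp:ecdsa} then hold simultaneously whenever $\calA$ wins: $\sigma^\ast$ verifies under $D\cdot g^{\omega^\ast}=\pk_{j^\ast}$; $m^\ast$ is fresh for $\calB$ because every Groth--Shoup signing request carries a message $\calA$ sent to $\Sign$; and $\omega^\ast=e_{j^\ast}\in\calE$. Since $\calB$ issues at most $N$ presignature-plus-signing queries and does $O(1)$ extra work per query, it runs in time linear in $\calA$.

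Finally I would check that $\calA$'s view inside $\calB$ is distributed exactly as in Experiment~\ref{exp:ecdsa2}. The presignature nonces $r_k$ and all returned $\Z_q$ values match by the computation above, so the only statistical point is the joint law of $(\sk_1,\dots,\sk_\ell)$: with $d$ and the $e_k$ sampled independently and uniformly from $\Z_q$, the vector $(d+e_1,\dots,d+e_\ell)$ is uniform on $\Z_q^{\ell}$, i.e.\ $\ell$ i.i.d.\ uniform keys, matching the Experiment~\ref{exp:ecdsa2} challenger (if $\calE$ is instead a uniform size-$\ell$ subset of $\Z_q$, conditioning on the $\sk_k$ being distinct affects probabilities by only $O(\ell^2/q)$). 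The point I expect to need the most care is the presignature bookkeeping: the mapping from Experiment~\ref{exp:ecdsa2}'s presignature slots to Groth--Shoup handles is faithful only because the counter semantics of Experiment~\ref{exp:ecdsa2} never let a presignature slot be \emph{consumed} by $\Sign$ more than once (Groth--Shoup deletes a handle from $K$ on first use), so I must argue carefully that the guard $\presigctr<\authctr\Rightarrow\bot$ together with the resets performed by $\KeyGen$ enforces this. A secondary subtlety is the $\bot$ outputs of the Groth--Shoup oracle ($f(R_k)=0$, or a vanishing ECDSA denominator $\Hash(m)+t_k(d+\omega)=0$): in the denominator case $t_k\neq0$ lets $\calB$ solve for $d$ and finish on its own, and the remaining event has probability $O(N/q)$, consistent with the $O(N\cdot((\ell+N)/q+1/\sqrt q))$ bound that this lemma ultimately feeds into. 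Combining the (essentially) perfect simulation with ``$\calA$ wins $\Rightarrow$ $\calB$ wins'' yields $\ECDSAModAdv[\calA,\G,\ell,N]\le\ECDSAAdv[\calB,\G,N,\calE]$.
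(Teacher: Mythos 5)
Your reduction is exactly the one the paper uses: answer the $k$-th $\KeyGen$ query with $D\cdot g^{e_k}$ for $e_k\in\calE$, forward presignature queries verbatim, translate a $\Sign(m,\tweak,j)$ query into a Groth--Shoup signing request with combined tweak $e_j+\tweak$, enforce the ordering guards locally, and output the forgery with $\omega^\ast=e_{j^\ast}\in\calE$. The proposal is correct and in fact somewhat more explicit than the paper's argument about the key distribution and the winning conditions, but it is the same proof.
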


\begin{proof}
We prove the above theorem by using an adversary $\calA$ in
Experiment~\ref{exp:ecdsa2} to construct an adversary $\calB$ in
Experiment~\ref{exp:ecdsa}. We then show that the adversary $\calB$
has an advantage greater than or equal to the adversary $\calA$.

We construct $\calB$ in the following way:
\begin{itemize}
    \item Rather than sending $\calA$ the set of tweaks $\calE$ immediately,
        $\calB$ keeps the set of tweaks to use to respond to $\KeyGen$ queries.
        On the $i$th invocation of $\KeyGen$, $\calB$ sends
        $D \cdot g^{\tweak_i}$ where $\tweak_i \in \calE$.
    \item $\calB$ simply forwards presignature requests from $\calA$ to the
        Experiment~\ref{exp:ecdsa} challenger and sends the responses back
        to $\calA$.
    \item $\calB$ takes signing requests from $\calA$ with an index $j$ and a
        tweak $\omega$. $\calB$ then computes $\omega_j + \omega = \omega'$
        where $\omega_j$ was the value returned by the $j$th call to $\KeyGen$
        (if $\calA$ has not made $j$ calls to $\KeyGen$, $\calB$ outputs $\bot$).
        $\calB$ then forwards the signing request with $\omega'$ to
        the Experiment~\ref{exp:ecdsa} challenger.
    \item $\calB$ additionally enforces that all presignature queries must be made
        before signing queries and that presignatures must be used in order.
        If $\calA$ sends queries that do not meet these requirements, $\calB$
        outputs $\bot$.
\end{itemize}

The adversary $\calA$ cannot distinguish between interactions with $\calB$
and the Experiment~\ref{exp:ecdsa2} challenger, and so the advantage of
$\calA$ is less than or equal to that of $\calB$, completing the proof.

\end{proof}

\paragraph{Zero-knowledge proof of preimage}
In \sys, the log takes as input a hash of the message rather than
the message itself.
It is important for security that the log has a zero-knowledge
proof of the preimage of the signing digest, as ECDSA with
presignatures is completely insecure if the signing oracle signs
arbitrary digests directly instead of messages~\cite{GS21}.
Because the log checks a zero-knowledge proof certifying that
the digest preimage is correctly encrypted
before signing, it will not sign arbitrary purported hashes
generated by a malicious client (the party submitting the hash
must know the preimage for the proof to verify).

 \section{Two-party ECDSA with preprocessing}
\label{app:two-ecdsa}

In this section, we describe the construction of our two-party
ECDSA with preprocessing protocol and argue its security.
In \cref{app:two-ecdsa:syntax}, we describe the syntax and construction
of our protocol. In \cref{app:secdefs:halfmul}, we explain a sub-protocol and
argue that it is secure. Finally in \cref{app:secdefs:proof}, we prove
that our overall protocol is secure by showing that the protocol achieves
the ideal functionality captured by the challenger in Experiment~\ref{exp:ecdsa2}
from \cref{app:secdefs:gs}.

\subsection{Syntax and construction}
\label{app:two-ecdsa:syntax}
For our purposes, a two-party ECDSA signature scheme consists of the
following algorithms:
\begin{itemize}
    \item $\LogKeyGen() \to (\sksiglog, \pksiglog)$: Generate a log secret
        key $\sksiglog \in \Z_q$ and corresponding public key $\pksiglog \in \G$.
        The log runs this routine at enrollment.
    \item $\PreSign() \to (\presig_0, \presig_1)$: Generate presignature
        $(\presig_0, \presig_1)$ where each presignature should be used to sign exactly once.
        The client runs this routine many times at enrollment to generate a batch of presignatures.
    \item $\ClientKeyGen(\pksiglog) \to (\sksigclient, \pk)$: Given the log public key
        $\pksiglog \in \G$, output a secret key share $\sksigclient \in \Z_q$ and
        corresponding public key $\pk \in \G$.
        The client runs this routine during registration with each relying party.
\end{itemize}
We additionally define the following signing protocol:
\begin{itemize}
    \item $\Pi_\Sign$: Both parties take as input the message
        $m \in \calM$, the log takes as input log 
        secret key $\sksiglog$ and presignature~$\presig_0$, and the client takes
        as input a secret-key share~$\sksigclient$ and presignature~$\presig_1$.
        The joint output is a signature~$\sigma$ on message $m$ or $\bot$
        (if either misbehaved).
\end{itemize}
The signing protocol outputs ECDSA signatures that verify under $\pk$ output
by $\ClientKeyGen$, and 
so the signature-verification algorithm is exactly as in ECDSA.

We include the constructions for the above algorithms and signing protocols in
\cref{fig:sigs}. The construction for the $\PiHalfMul$ is in \cref{app:secdefs:halfmul}
and the opening protocol $\PiOpen$ is the same opening protocol used in SPDZ~\cite{SPDZ12}.

\begin{figure}[t]
    \begin{framed}
    {\small
    The ECDSA signature scheme for message space $\calM$ uses a group
    $\G$ of prime order $q$
    and is parameterized by a hash function $\Hash: \calM \to \Z_q$ and
    a conversion function $f: \G \to \Z_q$.
    An ECDSA keypair is a pair $(y, g^y) \in \Z_q \times \G$ for
    $y \getsr \Z_q$. 
    We use a secure multiplication protocol $\PiMul$ (\cref{fig:halfmul}
    in \cref{app:secdefs:halfmul})
    and a secure opening protocol $\PiOpen$ that returns the result or $\bot$
    (``output'' step in Figure 1 of SPDZ~\cite{SPDZ12}).

    \medskip
    
    \underline{$\LogKeyGen() \to (\sksiglog, \pksiglog)$:}
            \begin{itemize}
                \item Sample $x \getsr \Z_q$ and output $(x, g^{x})$.
            \end{itemize}
    \smallskip
    \underline{$\PreSign() \to (\presig_0, \presig_1)$:}
            \begin{itemize}
                \item Sample $r \getsr \Z_q^\ast$ and compute $R \gets f(g^{r})$.
                \item Sample $\alpha \getsr \Z_q$ and compute
                    $\hat{r} \gets \alpha \cdot r^{-1}$.
                \item Split $r^{-1}$ into secret shares $r_0, r_1$;
                    $\hat{r}$ into $\hat{r_0}, \hat{r_1}$;
                    $\alpha$ into $\alpha_0, \alpha_1$;
                \item Output $(R, r_0, \hat{r_0}, \alpha_0)$,
                    $(R, r_1, \hat{r_1}, \alpha_1)$.
            \end{itemize}
    \smallskip
    \underline{$\ClientKeyGen(\pksiglog) \to (\sksigclient, \pk)$:}
            \begin{itemize}
                \item Sample $y \getsr \Z_q$.
                \item Output $(y, \pksiglog \cdot g^{y})$.
            \end{itemize}

    \smallskip
    
    \underline{$\Pi_\Sign$:}

    \smallskip
    
    We refer to the log as party $0$ and the client as party $1$.
    The input of party $i \in \zo$ is $(m, \sk_i, \presig_i)$,
    and the output is a signature $\sigma$ on $m$ or $\bot$.
  
    \smallskip

        For each party $i \in \zo$:
        \begin{itemize}
            \item Party $i$ parses $\presig_i$ as $(R, r_i, \hat{r_i}, \alpha_i)$.
            \item Given input $(r_i, \hat{r_i}, \sk_i, \alpha_i)$ for party $i$,
                run $\PiHalfMul$ to compute shares $(x_i, \hat{x_i}, v_i, \hat{v_i})$
                where $\hat{v_i}$ authenticates any intermediate 
                values.
            \item Party $i$ computes $s_i \gets r_i \cdot \Hash(m) + x_i \cdot R $.
            \item Party $i$ computes $\hat{s_i} \gets \hat{r_i} \cdot \Hash(m) + \hat{x_i} \cdot R $.
            \item Parties run $\PiOpen$ with party $i$ input $\alpha_i, s_i, \hat{s_i}, v_i, \hat{v_i}$
                to get $s$ or $\bot$ (if returns $\bot$, output $\bot$).
            \item Output $(R,s)$.
        \end{itemize}
        }
\end{framed}
     \caption{Two-party ECDSA signing protocol with preprocessing.}
    \label{fig:sigs}
\end{figure}

\subsection{Malicious security with half-authenticated secure multiplication}
\label{app:secdefs:halfmul}

As part of our signing protocol, we use a half-authenticated secure multiplication
sub-protocol. We describe the protocol in \cref{fig:halfmul}.

We need to ensure that by deviating from the protocol, neither the client nor 
the log can learn secret information (i.e. the other party's share of the 
secret key or signing nonce) or produce a signature for a different message.
To use tools for malicious security (e.g. information-theoretic
MACs~\cite{SPDZ12}) in a black-box way, we need authenticated shares of
the signing nonce and the secret key.
We can easily generate authenticated shares of the signing nonce as part of the 
presignature, but generating authenticated shares of the secret key 
poses several problems: (1) presignatures are generated at enrollment
(before the client has secret-key shares), and (2) we
don't want which presignature the client uses to leak which
relying party the client is authenticating to. 

\paragraph{Ideal functionality.}
At a very high level, the ideal functionality takes as input additive
shares of $x,y$ and outputs shares of $x \cdot y$.
In order to perform the multiplication, we use Beaver triples.
To authenticate $x$, we use information-theoretic MAC tags (because
there is no MAC tag for $y$, each party can adjust its share by an
arbitrary additive shift without detection).
More precisely then, the ideal functionality takes as inputs additive shares of $(a,b,c)$,
$(f,g,h)$, $(x,\hat{x},y)$, and $\alpha$ such that $a \cdot b = c$,
$f \cdot g = h$, $(f,g,h) = \alpha \cdot (a,b,c)$, and
$\hat{x} = \alpha \cdot x$. 
Each party outputs intermediate value
$d$ and additive shares of $\hat{d},z,\hat{z}$ where $x \cdot y = z$
and $\hat{d} = \alpha \cdot d$.

\paragraph{Protocol.}
Our protocol uses information-theoretic MACs for only
one of the inputs (the signing nonce).
We call this protocol $\PiHalfMul$.
Our construction uses authenticated Beaver triples and follows naturally
from the SPDZ protocol~\cite{SPDZ12}.
We also use a secure opening protocol $\PiOpen$ for checking MAC tags, which
we can instantiate using the SPDZ protocol directly~\cite{SPDZ12}.
It is safe to not authenticate one of the key shares due to the fact that
the signature scheme is secure if the adversary can request signatures
for arbitrary ``tweaks'' of the secret key (\cref{app:secdefs:gs}).

\begin{figure}[t]
    {\small
    \begin{framed}
        \underline{$\PiHalfMul$}: \\
        The protocol is parameterized by a prime $q$.

      \itpara{Inputs:} 
        Party $i \in \zo$ takes as input an additive share 
        of each of the $\Z_q$ values: $(a, b, c)$, $(f, g, h)$, $(x, \hat x, y)$, and $\alpha$,
        such that: 
\begin{align*}
  a \cdot b &= c &\in \Z_q\\
          f \cdot g &= h&\in \Z_q\\
  (f,g,h) &= \alpha \cdot (a, b, c) &\in \Z_q^3\\
          \hat{x} &= \alpha \cdot x &\in \Z_q
          \intertext{
        \itpara{Outputs:}
        Each party outputs intermediate value $d \in \Z_q$ and additive shares of $\hat d$, $z$, and $\hat z$,
where:}
          x \cdot y &= z &\in \Z_q\\
\hat d &= \alpha \cdot d &\in \Z_q\rlap{.}
        \end{align*}
\hrule

\smallskip

\paragraph{Protocol.}
      Each party $i \in \zo$ computes:
        \begin{align*}
d_i &\gets x_i - a_i &\in \Z_q\\
                    e_i &\gets y_i - b_i &\in \Z_q\\
                    \hat{d}_i &\gets \hat{x}_i - f_i &\in \Z_q
            \intertext{and sends $d_i,e_i$ to the other party.
            Each party $i$ then computes:}
                    d &\gets d_0 + d_1 &\in \Z_q\\
                    e &\gets e_0 + e_1 &\in \Z_q\\
                    z_i &\gets de + db_i + ea_i + c_i &\in \Z_q\\
                    \hat{z_i} &\gets de \cdot \alpha_i + dg_i + ef_i + h_i &\in \Z_q
           \end{align*}
                and outputs $d \in \Z_q$ and shares $\hat{d_i}, z_i, \hat{z_i} \in \Z_q$.
\end{framed}
    }
    \caption{$\PiMul$ protocol.}
    \label{fig:halfmul}
\end{figure}

We present a slightly modified version of the SPDZ protocol~\cite{SPDZ12} for
multiplication on authenticated secret-shared inputs in \cref{fig:halfmul}.
The only difference is that, in our protocol, only one of the inputs
is authenticated.
This requirement means that we only authenticate one of the intermediate
values in the Beaver triple multiplication.
We allow the attacker to add arbitrary shifts to the unauthenticated
input, but the attacker cannot shift the authenticated input
without detection (Claim \ref{claim:halfmul}).

The protocol $\PiHalfMul$ allows us to model authenticating the signing
nonce in ECDSA signing. The signing key is unrestricted (we discuss
why this is secure in \cref{app:secdefs:gs}).

We first show the security of our $\PiHalfMul$ protocol for secure multiplication
where only one of the inputs is authenticated, which is very
similar to the multiplication protocol in SPDZ~\cite{SPDZ12}.
This allows us to ensure that both parties use
the correct signing nonce from the presignature.

\begin{claim}
Let $x,y,\alpha,\hat{x} \in \Z_q$ be inputs to
$\PiMul$ secret-shared across the parties where $\hat{x} = \alpha x$.
Then, the probability that an adversary
that has statically corrupted one of the parties can cause the protocol
$\PiMul$ to output shares of $z, \hat{z}, \hat{d}$ where $\hat{z} = \alpha \cdot z$ and
$z = (x + \Delta)y$ for some $\Delta \neq 0$ is $1/q$.
\label{claim:halfmul}
\end{claim}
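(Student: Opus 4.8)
The plan is to show that an adversary who deviates by shifting the unauthenticated input $x$ by some $\Delta \neq 0$ is, with overwhelming probability, caught when the MAC tags are opened at the end of $\Pi_\Sign$ (via $\PiOpen$). The key observation is that the protocol authenticates $x$ via the information-theoretic MAC $\hat x = \alpha x$, and the honest party's share $\alpha_i$ of the MAC key $\alpha$ is uniformly random and unknown to the adversary. So the argument follows the standard SPDZ MAC-check soundness analysis, specialized to the situation where only one input carries a MAC.

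First I would trace exactly how a shift $\Delta$ on $x$ propagates through the protocol. A corrupt party $P_i$ controls its outgoing messages $d_i, e_i$ (and its local shares). Sending $d_i' = d_i + \Delta$ instead of $d_i$ is equivalent to having used input share $x_i + \Delta$; it induces opened value $d' = d + \Delta$, and hence the reconstructed product becomes $z = (x+\Delta)y$ rather than $xy$ — this is the $\Delta \neq 0$ case in the statement. Crucially, the honest party computes its MAC-tag share $\hat z_i$ using the honestly opened $d, e$ it received, its share $\alpha_i$ of the MAC key, and its shares $g_i, f_i, h_i$ of the authenticated Beaver triple $(f,g,h) = \alpha(a,b,c)$. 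So the honest party's tag share is consistent with $\alpha$ times the value it \emph{believes} was computed. I would then carry this through the outer $\Pi_\Sign$: the values $s_i, \hat s_i$ are affine combinations of $(r_i, x_i)$ and $(\hat r_i, \hat x_i)$ respectively, so the reconstructed $s$ and its tag $\hat s$ inherit the shift; the MAC relation $\hat s = \alpha s$ that $\PiOpen$ checks will hold only if the corrupt party can produce a tag-share correction matching $\alpha \cdot (\text{shift})$.

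The core of the proof is then the information-theoretic step: the corrupt party must, in effect, commit to an additive correction $\delta$ to the opened tag such that $\delta = \alpha \cdot (\text{shift on }s)$, where the shift is a known nonzero quantity (nonzero because $\Delta \neq 0$ and $y, R \neq 0$ with the relevant quantities invertible mod $q$ — I would note $R = f(g^r) \neq 0$ is enforced in $\PreSign$, and $y$ being zero is a negligible-probability event I can fold in, or handle by noting the product shift is $\Delta y R$ which vanishes only if $y=0$). Since $\alpha = \alpha_0 + \alpha_1$ and the honest party's share $\alpha_i$ is uniform in $\Z_q$ conditioned on the adversary's entire view (the adversary only ever sees $d, e$, which are blinded by $a, b$, and never sees $\alpha_i$ or $f_i, g_i, h_i$ in the clear before committing), $\alpha$ is uniform in $\Z_q$ from the adversary's perspective at the moment it must fix $\delta$. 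Hence the equation $\delta = \alpha \cdot c$ for fixed nonzero $c$ is satisfied with probability exactly $1/q$. This gives the claimed bound.

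The main obstacle I anticipate is bookkeeping rather than conceptual: I must argue carefully that the adversary's view is genuinely independent of the honest party's MAC-key share $\alpha_i$ at the point where it is forced to commit to its messages and tag-share corrections — i.e., that $\PiOpen$ is invoked as a commit-then-reveal of the tag so the adversary cannot adaptively choose $\delta$ after learning $\alpha$. I would handle this by appealing to the security of the $\PiOpen$ sub-protocol (the ``output'' step of SPDZ~\cite{SPDZ12}), which is precisely designed so that MAC shares are committed before the MAC key is reconstructed, and then citing the SPDZ MAC-check lemma essentially verbatim for the $1/q$ bound. A secondary subtlety is confirming that not authenticating $y$ really is harmless here: an arbitrary shift on $y$ changes $z$ to $x(y+\Delta')$, which is a shift permitted by the statement (only $\Delta \neq 0$ on $x$ is ruled out as undetectable), and downstream this corresponds exactly to an adversarially-chosen ``tweak'' on the ECDSA signing key, which Experiment~\ref{exp:ecdsa2} already tolerates — so I would explicitly defer that case to \cref{app:secdefs:gs} and keep this claim focused on the authenticated input.
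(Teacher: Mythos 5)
Your high-level framing (information-theoretic MAC, the adversary must guess the uniformly random key $\alpha$, success probability $1/q$) matches the paper's, but the specific MAC equation you hang the argument on is the wrong one, and the core step fails there. You argue that a shift $\Delta$ on the opened value $d$ propagates to a shift $\Delta y$ on $z$ (and $\Delta y R$ on $s$), so the adversary must supply a tag correction equal to $\alpha \cdot \Delta y$ for $\hat z = \alpha z$ to survive $\PiOpen$. It does not have to: the honest party computes $\hat z_i = de\alpha_i + dg_i + ef_i + h_i$ using the \emph{opened} (hence already shifted) $d$, so summing shares gives $\hat z = \alpha(de + db + ea + c) = \alpha z'$ with $z' = (x+\Delta)y$ \emph{automatically} --- the output tag tracks whatever $d$ was opened, with no correction needed from the adversary. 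The output MAC check $\hat z \stackrel{?}{=} \alpha z$ (equivalently $\hat s \stackrel{?}{=} \alpha s$ downstream in $\Pi_\Sign$) therefore passes and detects nothing. Your worry about the shift on $s$ vanishing when $y=0$ is a symptom of analyzing a check that is not the one doing the work.

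The check that actually catches the shift --- and the one the paper's proof centers on --- is the MAC on the opened intermediate value $d$ itself. Each party's tag share $\hat d_i = \hat x_i - f_i$ is computed from the \emph{input's} tag $\hat x = \alpha x$ and the triple tag $f = \alpha a$, independently of what was opened, so $\hat d = \alpha(x-a)$ is anchored to the true $x$. If the opened value is $d' = x + \Delta - a$, passing $\hat d \stackrel{?}{=} \alpha d'$ forces the adversary to shift its tag share by exactly $\alpha\Delta$, i.e., to guess $\alpha$, which happens with probability $1/q$; here the relevant shift is $\Delta \neq 0$ itself, with no dependence on $y$ or $R$. (The paper separately handles an adversary that leaves $d$ alone and instead shifts its output shares $z_i, \hat z_i$ directly; that case \emph{is} caught by the output MAC and is the part your argument covers correctly.) Your observations about $\PiOpen$'s commit-then-reveal structure and about shifts on the unauthenticated input $y$ reducing to key tweaks are fine, but the proof must be re-anchored on the $\hat d$ check to go through.
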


\begin{proof}
Let the input Beaver triple be $(a,b,c) \in \Z_q$ such that $a \cdot b = c$
where to multiply values $x,y$, we use intermediate values
$d = x-a$ and $e = y-b$, where $\hat{d}$ is the MAC tag for $d$.

To avoid detection, the adversary needs to ensure that $\hat{d} = \alpha \cdot d$,
so the adversary needs to
find some $\Delta_1, \Delta_2 \in \Z_q$ such that
$$ \alpha(x + \Delta_1 - a) = \hat{x} + \Delta_2 - \alpha \cdot a \quad \in \Z_q$$
which we can reduce to
$$ \alpha (x + \Delta_1) = \hat{x} + \Delta_2 \quad \in \Z_q$$
The probability of the adversary choosing $\Delta_1, \Delta_2 \in \Z_q$ that
satisfies this equation is the probability of guessing $\alpha$, or
  $1/q$.
The value $e$ does not depend on $x$.
Therefore, since the remainder of the operations are additions and
multiplications by public values, the attacker
can only shift the final output by $\Delta_3, \Delta_4$ and needs to ensure
the following:
$$ \alpha(xy + \Delta_3) = \alpha xy + \Delta_4 \quad \in \Z_q$$
The probability of finding such $\Delta_3, \Delta_4$ is the probability
of guessing $\alpha$, which is $1/q$.
\end{proof}
 
\subsection{Security proof for our construction}
\label{app:secdefs:proof}

Recall our construction of our two-party signing scheme in \cref{fig:sigs}.

As the construction in \cref{fig:sigs} contains separate algorithms for
key generation and generating presignatures, we define $\PiGen$ and
$\PiPreSign$ in terms of the algorithms in \cref{fig:sigs} below:

\begin{itemize}
    \item $\PiGen$:
        \begin{itemize}
            \item If $\pk_0$ is not initialized, the log runs
                $(\sk_0,\pk_0) \gets \LogKeyGen()$ and sends
                $\pk_0$ to the client.
            \item If $k$ is not initialized, set to 1; otherwise
                $k \gets k + 1$.
            \item The client runs $(\sk_{1,k}, \pk_{k}) \gets
                \ClientKeyGen(\pk_0)$
        \end{itemize}
    \item $\PiPreSign$:
        \begin{itemize}
            \item Client runs $(\presig_0,\presig_1) \gets \PreSign()$
                and sends $\presig_0$ to the log.
        \end{itemize}
\end{itemize}

Additive key derivation models the fact that the secret key shares
are unauthenticated private inputs to $\PiHalfMul$,
and so the adversary can run the signing protocol
with any secret key share as input. However, to produce a forgery, the
adversary must generate a signature that verifies under a small, fixed
set of public keys (corresponding to the public keys generated at registration
before compromise).

We define the ideal functionality $\FECDSA$ as simply the routines the challenger
runs to respond to $\KeyGen$,$\PreSign$, and $\Sign$ queries in Experiment~\ref{exp:ecdsa2}.

We prove security using the ideal functionality $\FOpen$ for opening values and
checking MAC tags from SPDZ~\cite{SPDZ12}. At a high level, $\FOpen$ takes as input
the party's output and intermediate shares and MAC tags for output and intermediate shares
and outputs the combined output or abort if the MAC tags are not correct.
The corresponding simulator $\SimOpen$ takes as input one party's shares of the output
and intermediate values and their corresponding MAC tags, as well as the combined output
value.

\begin{theorem}
The two-party ECDSA signing protocol $\Pi$ securely realizes (with abort)
$\FECDSA$ in the $\FOpen$-hybrid
model in the presence of a single statically corrupted malicious party
(if the client is the compromised party, it can only be compromised after
presigning is complete).
Specifically, let $\View^\Real_\Pi$ denote the adversary $\Adv$'s
view in the real world.
Then there exists a probabilistic polynomial-time algorithm $\Sim$
where $\View^\Ideal_\Sim$ denotes the view simulated by $\Sim$
given the outputs of $\FECDSA$ where $\Adv$ can adaptively choose which
procedures to run and the corresponding inputs
such that
    $$ \View^\Real_\Pi \approx \View^\Ideal_\Sim~. $$
\end{theorem}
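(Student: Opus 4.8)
The plan is to give a standard simulation-based argument: for each of the two corruption patterns the theorem allows, construct a simulator $\Sim$ that runs the adversary, interacts with $\FECDSA$ on its behalf, and reproduces the adversary's real-world view up to statistically negligible error. Working in the $\FOpen$-hybrid model lets us invoke the SPDZ opening simulator $\SimOpen$ as a black box for every $\PiOpen$ call, so the only genuinely new simulation work is for the messages of $\PiGen$, $\PiPreSign$, and the $\PiHalfMul$ portion of $\Pi_\Sign$. Three ingredients carry the proof: (i) uniformity of two-out-of-two additive shares and Beaver-triple masks, so the honest party's outgoing shares and $\PiHalfMul$ messages ($d_i,e_i$) are trivially simulatable by fresh randomness; (ii) Claim~\ref{claim:halfmul}, which bounds by $1/q$ the chance that a cheating party in $\PiHalfMul$ shifts the authenticated input (the signing nonce) or a MAC-carrying intermediate value without causing an abort; and (iii) the security of $\FECDSA$ established in Experiment~\ref{exp:ecdsa2} and its accompanying theorem, which we use both to argue that nothing the simulator returns to the adversary reveals more than $\FECDSA$ already does, and that the adversary cannot convert its view into a forgery.

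First I would treat a corrupted log. The simulator plays the honest client. On $\PiGen$ it records the $\pk_0$ the log sends and queries $\FECDSA.\KeyGen()$ for each relying party; since the client never transmits its derived public key $\pk_0\cdot g^{y}$, no message need be produced, and consistency with $\FECDSA$'s output is automatic because the tweaked keys of $\FECDSA$ realize exactly this additive derivation. On $\PiPreSign$ the simulator queries $\FECDSA.\PreSign()$ to get $R$ and then samples the log's presignature share $\presig_0$ (nonce share, MAC share, MAC-key share, Beaver-triple shares) uniformly, choosing the virtual triple to be well-formed; this is distributed exactly as in the real protocol because every constraint relating these values involves the client's never-revealed shares. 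On $\Pi_\Sign$ the simulator runs the honest-client side of $\PiHalfMul$ with fresh uniform $d_1,e_1$, inspects the log's $\PiHalfMul$ messages to extract the additive shift $\tweak$ by which the log has perturbed its unauthenticated key share, calls $\FECDSA.\Sign(m,\tweak,\cdot)$ to obtain $s$ (or $\bot$), and drives the $\FOpen$ call to open to $(R,s)$ via $\SimOpen$, propagating $\bot$ when $\FECDSA$ aborts. Indistinguishability then follows by a short hybrid: the log's $\PiHalfMul$ transcript is uniform by (i); any shift of the authenticated nonce share or a MAC'd intermediate is caught by $\FOpen$ except with probability $O(1/q)$ by (ii); and on an honest run the opened value is precisely the ECDSA signature $\FECDSA$ produces under the extracted tweak, so the executions agree conditioned on no spurious abort.

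Next I would treat a client that is honest through the end of $\PiPreSign$ and corrupted thereafter. The simulator plays the honest log and emulates the honest client's preprocessing: at enrollment it calls $\FECDSA.\KeyGen()$ once for the log key and $\FECDSA.\PreSign()$ for each presignature, keeping the matching $\presig_0$ shares and padding with uniform, well-formed triple shares; the simulated client state (the $\presig_1$ shares, with $\FECDSA$'s $R$ values and the ephemeral nonces deleted) is delivered to the adversary only at corruption time and is identically distributed to a real honest client's retained state. After corruption, each $\Pi_\Sign$ is handled as in the first case with roles swapped: extract the effective key-share tweak from the client's $\PiHalfMul$/$\FOpen$ inputs, query $\FECDSA.\Sign$, and use $\SimOpen$ for the opening. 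The one event the simulator cannot reproduce is the corrupted client outputting a valid ECDSA signature on a message never sent to $\FOpen$ under a public key registered before corruption; by the tweak mapping this is exactly a winning output of Experiment~\ref{exp:ecdsa2}, so its probability is at most $\ECDSAModAdv = O(N\cdot((\ell+N)/q+1/\sqrt{q}))$ by the theorem preceding this appendix.

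The main obstacle, and where I expect the real work to lie, is the $\PiHalfMul$ analysis together with its composition: (a) showing that every deviation available to a corrupted party in $\PiHalfMul$ either touches only the unauthenticated signing-key share—in which case it is mirrored faithfully by a legitimate tweaked $\FECDSA.\Sign$ query, so the simulator needs nothing it cannot get from the functionality—or touches the authenticated nonce share or a MAC-carrying value, in which case $\FOpen$ aborts except with probability $\le 1/q$ by Claim~\ref{claim:halfmul}; (b) union-bounding over all $N$ signing queries and stitching this together with the $\FECDSA$ forgery bound; and (c) the bookkeeping that the ordering the honest parties enforce—presignatures generated before any signing, each used once and in order—matches the ordering the Experiment~\ref{exp:ecdsa2} challenger imposes, plus reconciling each presignature's $R$ and Beaver triple with $\FECDSA$'s internal state while preserving correctness of the later openings. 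Assembling these pieces gives $\View^\Real_\Pi \approx \View^\Ideal_\Sim$ with statistical distance $O(N\cdot((\ell+N)/q+1/\sqrt{q}))$.
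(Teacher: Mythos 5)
Your proposal is correct and follows essentially the same route as the paper's proof: a simulator that generates the presignatures itself (modeling client honesty through preprocessing), simulates the honest party's $\PiHalfMul$ messages with uniform shares, replaces $\PiOpen$ by $\SimOpen$ in the $\FOpen$-hybrid model, maps the adversary's additive shift on the unauthenticated key share to a tweak query to $\FECDSA$, and invokes Claim~\ref{claim:halfmul} to catch any shift of the authenticated nonce except with probability $1/q$. The only cosmetic difference is that the paper writes one simulator parameterized by the corrupted party's index rather than two separate case analyses, and it leaves the Experiment~\ref{exp:ecdsa2} forgery bound to the preceding appendix instead of folding it into the final statistical distance as you do.
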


\begin{proof}
Our goal is to construct a simulator where the simulator takes as input the outputs of
the ideal functionality $\calF$ (as well at the public input message for signing).
The adversary $\Adv$ should then not be able to distinguish
between the real world (interaction with the protocol) and the ideal world (interaction
with the simulator where the simulator is given the compromised party's inputs and
the outputs of $\FECDSA$).

Let $i$ be the index of the compromised party ($i=0$ for compromised client, $i=1$ for compromised log).
The simulator always generates the presignatures and outputs the presignature share to the
adversary, in order to model the fact that we only provide security if the client is
malicious at signing time.
We construct the simulator as follows:
\begin{itemize}
    \item $\KeyGen(\pk)$:
        \begin{itemize}
            \item If $i=0$:
                \begin{itemize}
                    \item If $\sk_0$ is not initialized, sample $\sk_0 \getsr \Z_q$ and send $\pk_0 \gets g^{\sk_0}$ to
                                $\Adv$.
                    \item Otherwise, send nothing to $\Adv$.
                \end{itemize}
            \item Otherwise if $i=1$:
                \begin{itemize}
                    \item If $\pk_0$ is not initialized, receive $\pk_0$ from $\calA$.
                    \item Output $\pk$.
                    \item Set $\initdone \gets 1$, $\presigdone \gets 0$.
                \end{itemize}
            \item Set $\presigctr,\authctr \gets 0$.
            \item Set $\initdone \gets 1$.
        \end{itemize}
    \item $\PreSign(R)$:
        \begin{itemize}
            \item If $\initdone = 0$ or $\presigdone = 1$, output $\bot$.
            \item Sample $\alpha_0, \alpha_1 \getsr \Z_q$, $r_0, r_1 \getsr \Z_q^\ast$.
            \item Set $\alpha^{(\presigctr)} \gets \alpha_0 + \alpha_1, r^{(\presigctr)} \gets r_0 + r_1$.
            \item Sample $\hat{r_0},\hat{r_1}$ such that $\alpha^{(\presigctr)} \cdot r^{(\presigctr)} = \hat{r_0} + \hat{r_1}$.
            \item Sample shares of $(a,b,c)$,$(f,g,h) \in \Z_q^3$ such that $a \cdot b = c$,
                $f \cdot g = h$, $(f,g,h) = \alpha^{(\presigctr)} (a,b,c)$,
                $\hat{x} = \alpha \cdot x$.
            \item Let $T^{\presigctr}_j = (a_j,b_j,c_j,f_j,g_j,h_j)$ for $j \in \{1,2\}$.
            \item Let $\presig^{(\presigctr)}_0 = (R, r_0, \hat{r_0}, \alpha_0, T^{(\presigctr)}_0)$ and
                $\presig^{(\presigctr)}_1 = (R, r_1, \hat{r_1}, \alpha_1, T^{(\presigctr)}_1)$.
            \item Send $\presig^{(\presigctr)}_{1-i}$ to $\Adv$.
            \item Set $\presigctr \gets \presigctr + 1$.
        \end{itemize}
    \item $\Sign(m, \sigma)$:
        \begin{itemize}
            \item If $\initdone = 0$ or $\presigctr < \authctr$, output $\bot$.
            \item Parse $\sigma$ as $(s, \_)$.
            \item Parse $\presig^{(\authctr)}_i$ as $(R, r_i, \hat{r_i}, \alpha_i, T_i)$.
            \item Parse $T_i^{(\authctr)}$ as $(a_i,b_i,c_i,f_i,g_i,h_i)$.
            \item Let $x \gets \frac{s - r^{(\authctr)} \cdot \Hash(m)}{R}$
            \item Let $\sk_i \gets x / r_i$
            \item Let $\hat{d_i} = \hat{r_i} - f_i$.
            \item Send $d_i = r_i - a_i$ and $e_i = \sk_i - b_i$ to $\calA$.
            \item Receive $d_{1-i}$ and $e_{1-i}$ from $\calA$ and compute $d = d_0 + d_1$
                and $e = e_0+e_1$.
            \item Let $x_i \gets d e + d b_i + e a_i + c_i$
            \item Let $\hat{x_i} \gets d e \alpha_i + d g_i + e f_i + h_i$
            \item Compute $s' \gets r^{(\presigctr)} \cdot \Hash(m) + x \cdot R$
            \item Compute $s_i \gets r_i \cdot \Hash(m) + x_i \cdot R + s - s'$
            \item Compute $\hat{s_i} \gets \hat{r_i} \cdot \Hash(m) + \hat{x_i} \cdot R + \alpha^{(\authctr)} (s - s')$
            \item Run $\SimOpen$ on $(\alpha_i, s_i, \hat{s_i}, d, \hat{d_i}, s)$; if $\SimOpen$
                aborts, also abort. 
            \item Set $\presigdone \gets 1$ and $\authctr \gets \authctr + 1$.
        \end{itemize}
\end{itemize}
We now prove that the view generated by $\Sim$ in the ideal world is indistinguishable
from the real world.

We start with the real world (\cref{fig:sigs}).
We then replace calls to $\PiOpen$ with $\SimOpen$.
Because we are in the $\FOpen$-hybrid model, the adversary cannot distinguish between these.

Every other message sent to $\Adv$ is either (1) a value that is random or information theoretically
indistinguishable from random, or
(2) a value generated by $\FECDSA$ ($R$ in $\PreSign$).

The last step is to show that if $\calA$ deviates from the protocol when sending $d_{1-i}$
to the simulator, the simulator can detect this and abort. By \cref{claim:halfmul}
and the guarantees of $\FOpen$, the adversary cannot send an incorrect value for
$d_{1-i}$ without detection except with probability $1/q$. The adversary \emph{can}
send any value for $e_{1-i}$; this is equivalent to the adversary being allowed to
choose any signing tweak $\tweak$, which the attacker can do in Experiment~\ref{exp:ecdsa2}.

Therefore, $\Adv$ cannot distinguish between the real world and the ideal world except with
probability $1/q$, completing the proof.
\end{proof}

 \section{\Sys for passwords}
\label{app:pw}

\subsection{Zero-knowledge proofs for discrete log relations}
\label{sec:pw-app:dlog}

The protocol of \cref{sec:pw} requires the client to prove to 
the log service that the ElGamal decryption of a ciphertext
decrypts to one of $n$ values in a set.
To do so, the client uses a zero-knowledge proof of discrete-log
relations, whose syntax and construction we describe here.
The proof system uses a cyclic group $\G$ of prime order $q$.

The proof system consists of two algorithms:
\begin{itemize}
\item $\DLProof.\Prove(\idx, x, h, \cm_1, \dots, \cm_n) \to \pi$:\\
      Output a proof $\pi$ asserting that $\cm_\idx = h^x \in \G$ for $\idx \in [n]$
\item $\DLProof.\Verify(\pi, h, \cm_1, \dots, \cm_n) \to \zo$:\\
      Check the prover's claim that it knows some $x \in \Z_q$ where 
      $\cm_\idx = h^x \in \G$ for $\idx \in [n]$.
\end{itemize}
We require the standard notions of \textit{completeness},
\textit{soundness} (against computationally bounded provers),
and \textit{zero knowledge}~\cite{BFM88,GMR85}
(in the random-oracle model~\cite{ROM}). 
We instantiate $\DLProof$ using proof techniques from
Groth and Kohlweiss~\cite{GK14}.

\subsection{Protocol for passwords}

We now describe the syntax of our $\LarchPW$ scheme.

\itpara{Step \#1: Enrollment with log service.}
At enrollment, the client and log generate cryptographic
keys and exchange public keys.

\begin{description}[leftmargin=1.5em,labelindent=0em]
    \item $\LarchPW.\ClientGen() \to (x,X)$: The client
        outputs a secret key $x \in \Z_q$ and a
        public key $X \in \G$.
    \item $\LarchPW.\LogGen() \to (k,K)$: The log
        service outputs a secret key $k \in \Z_q$ and
        a public key $K \in \G$.
\end{description}

\vfill\eject

\itpara{Step \#2: Registration with relying party.}
Once the client has enrolled with a log service, it can register
with a relying party by interacting with the log service.

\begin{description}[leftmargin=1.5em,labelindent=0em]
    \item $\LarchPW.\ClientRegister() \to (\id, k_\id)$:
        The client outputs an identifier $\id \in \zo^\lambda$
        and a key $k_\id \in \G$.
    \item $\LarchPW.\LogRegister(k,\id) \to y$:
        Given the log's secret key $k$ and $\id$ produced
        by $\ClientRegister$, the log outputs $y \in \G$.
    \item $\LarchPW.\FinishRegister(k_\id, y) \to \pw_\id$:
        Given the key $k_\id$ generated by $\ClientRegister$
        and the value $y$ generated by $\LogRegister$, the
        client outputs the password $\pw_\id \in \G$.
\end{description}

\itpara{Step \#3: Authentication with relying party.}
After registration, the client and log service perform authentication
together.

\begin{description}[leftmargin=1.5em,labelindent=0em]
    \item $\LarchPW.\ClientAuth(\idx, x, \id_1, \dots, \id_n) \to
        (r, \ct, \pi_1, \pi_2)$:
        Given an index $\idx \in \{1, \dots, n\}$, the client's
        secret key $x \in \Z_q$, and identifier values $\id_1,
        \dots, \id_n$ output by $\ClientRegister$, the client
        outputs $r \in \Z_q^\ast$, an ElGamal ciphertext
        $\ct \in \G^2$, and proofs $\pi_1$ and $\pi_2$.
    \item $\LarchPW.\LogAuth(\ct,\pi_1,\pi_2,\id_1,\dots,\id_n) \to y$:
        Given a ciphertext $\ct \in \G^2$, proofs $\pi_1$ and $\pi_2$,
        and identifiers $\id_1, \dots, \id_n$ output by $\ClientRegister$,
        the log service outputs $y \in \G$.
    \item $\LarchPW.\FinishAuth(x,K,r,k_\id,y) \to \pw_\id$:
        Given the client's secret key $x \in \Z_q$, the log's public
        key $K \in \G$, the nonce $r \in \Z_q^\ast$ generated by
        $\ClientAuth$, the key $k_\id \in \G$ generated by $\ClientRegister$,
        and the value $y \in \G$ from $\LogAuth$, output the password
        $\pw_\id \in \G$.
\end{description}

\itpara{Step \#4: Auditing with log service.}
Given a ciphertext, the client runs ElGamal decryption to recover the
corresponding $\Hash(\id)$ value.

We give a detailed description of the \sys password-based
authentication protocol in \cref{fig:pw}.

\begin{figure}
\begin{framed}
\textbf{Larch password-based authentication scheme.}
The protocol is parameterized by: a cyclic group $\G$ 
of prime order $q$ with generator $g \in \G$,
a hash function $\Hash\colon \zo^* \to \G$, 
and a zero-knowledge discrete-log proof system
$\DLProof$ with syntax as in \cref{sec:pw-app:dlog}.

\medskip

\underline{$\LarchPW.\ClientGen() \to (x, X)$}
\begin{itemize}
    \item Sample $x \getsr \Z_q$.
    \item Output $(x,g^x)$.
\end{itemize}

\underline{$\LarchPW.\LogGen() \to (k, K)$}:
\begin{itemize}
    \item Sample $k \getsr \Z_q$.
    \item Output $(k,g^k)$.
\end{itemize}

\underline{$\LarchPW.\ClientRegister() \to (\id, k_\id)$}:
\begin{itemize}
    \item Sample $\id \getsr \zo^\lambda$.
    \item Sample $k_\id \getsr \G$.
    \item Output $(\id, k_\id)$.
\end{itemize}

\underline{$\LarchPW.\LogRegister(k, \id) \to y$}:
\begin{itemize}
    \item Output $\Hash(\id)^k$.
\end{itemize}

\underline{$\LarchPW.\FinishRegister(k_\id, y) \to \pw_\id$}:
\begin{itemize}
    \item Output $k_\id \cdot y$.
\end{itemize}

\underline{$\LarchPW.\ClientAuth(\idx, x, \id_1, \dots, \id_n) \to (r,\ct,\pi_1,\pi_2)$}:
\begin{itemize}
    \item Sample $r \getsr \Z^\ast_q$
    \item Compute $c_1 = g^r, c_2 = \Hash(\id_\idx) \cdot g^{xr}$.
    \item Let $h_i = c_2 / \Hash(\id_i)$ for $i \in \{1,\dots,n\}$.
    \item Let $\pi_1 \gets \DLProof.\Prove(\idx, r, X, h_1, \dots, h_n)$.
    \item Let $\pi_2 \gets \DLProof.\Prove(\idx, x, c_1, h_1, \dots, h_n)$.
    \item Let $\ct = (c_1,c_2)$.
    \item Output $(r,\ct, \pi_1, \pi_2)$.
\end{itemize}

\underline{$\LarchPW.\LogAuth(\ct,\pi_1,\pi_2,\id_1,\dots, \id_n) \to y$}:
\begin{itemize}
    \item Parse $\ct$ as $(c_1,c_2)$.
    \item Let $h_i = c_2 / \Hash(\id_i)$ for $i \in \{1,\dots,n\}$.
    \item Let $b_1 \gets \DLProof.\Verify(\pi_1, X, h_1, \dots, h_n)$
    \item Let $b_2 \gets \DLProof.\Verify(\pi_2, c_1, h_1, \dots, h_n)$
    \item If $b_1 \neq 1$ or $b_2 \neq 1$, output $\bot$
    \item Output $c_2^k$
\end{itemize}

\underline{$\LarchPW.\FinishAuth(x,K,r,k_\id,y) \to \pw_\id$}:
\begin{itemize}
    \item Output $k_\id \cdot y \cdot K^{-xr}$.
\end{itemize}

\end{framed}
\caption{The details of the \sys protocol
for password-based authentication.}
\label{fig:pw}
\end{figure}
 \fi

\end{document}